\theoremstyle{definition}
\newtheorem{definition}{Definition}
\newtheorem{example}{Example}
\newtheorem{observation}{Observation}
\newtheorem{proposition}{Proposition}
\newtheorem{corollary}{Corollary}
\newtheorem{question}{Question}
\newtheorem{fact}{Fact}
\crefname{proposition}{Proposition}{Propositions}
\crefname{lem}{Lemma}{Lemmas}
\crefname{figure}{Fig.}{Figs.}
\crefname{corollary}{Corollary}{Corollary}
\crefname{conjecture}{Conjecture}{Conjectures}
\crefname{section}{Section}{Sections}
\crefname{appendix}{Appendix}{Appendixes}
\newcommand{\ald}[1]{\textcolor{red}{\textsf{(#1)}}}
\crefname{proposition}{Proposition}{Propositions}
\crefname{definition}{Definition}{Definitions}
\crefname{lemma}{Lemma}{Lemmas}
\crefname{figure}{Figure}{Figures}
\crefname{corollary}{Corollary}{Corollary}
\crefname{conjecture}{Conjecture}{Conjectures}
\crefname{section}{Section}{Sections}
\crefname{appendix}{Appendix}{Appendixes}
\crefname{observation}{Observation}{Observation}
\crefname{remark}{Remark}{Remark}
\crefname{example}{Example}{Examples}
\crefname{equation}{Eq.}{Eqs.}
\crefname{table}{Table}{Tables}
\crefname{figure}{Fig.}{Figs.}
\newcounter{hexi}
\renewcommand\@make@capt@title[2]{%
	\@ifx@empty\float@link{\@firstofone}{\expandafter\href\expandafter{\float@link}}%
	{\textbf{#1}}\@caption@fignum@sep#2\quad}%
\renewcommand{\fnum@figure}{\textbf{Figure~\thefigure}}
\begin{document}
	\setcitestyle{numbers,acm}
	\title[Symmetry of a quantum state]%
	{Entanglement in highly symmetric multipartite quantum states}
%	{Highly entangled and symmetric multipartite quantum states}
	
	\author{Adam Burchardt$^1$, Jakub Czartowski$^1$, and Karol \. Zyczkowski$^{1,2}$}
	\affiliation{$^1$Institute of Physics, Jagiellonian University, \L{}ojasiewicza 11, 30-348 Krak\'ow, Poland}
	\affiliation{$^2$Center for Theoretical Physics, Polish Academy of Sciences, Warsaw, Poland}	
	
	%%%%%%%%%%%%%%%%%%%%%%%%%%
	\date{June 1, 2021} %{\today}

	\begin{abstract}
		%We present a new approach for building genuinely entangled states by using groups. 
		We present a construction of genuinely entangled multipartite quantum states
		based on the group theory. 
		Analyzed states resemble the Dicke states, whereas the interactions occur
		only between specific subsystems related by the action of the selected group. 
		The states constructed by this technique exhibit desired symmetry properties
		and form a natural resource for less-symmetric quantum informational tasks. 
		We propose quantum circuits efficiently generating such states, which in general have
		smaller complexity than the circuits necessary to create fully symmetric entangled states. 

	\end{abstract}
	
	\maketitle
	
	\section{Introduction}
	Characterization of different classes of entanglement in multipartite quantum systems 
	remains a major issue relevant for various quantum information tasks
	and is interesting from the point of view of foundations of the quantum theory. 
	%, besides its foundational aspect in studies of entanglement phenomenon. 
	Since manipulation and relative control over several qubits have already become a standard
	task \cite{PhysRevLett.106.130506,Riofr_o_2017}, analysis of such systems in the context of resources for various information protocols is needed.
	Among different types of entangled states, permutation invariant states have attracted a lot of attention  for both continuous 
	\cite{PhysRevA.71.032349,PhysRevLett.99.150501} 
	%% it is not a very convenient way to label references...
	%%%%%%%%%%% kz
	and discrete \cite{PhysRevLett.98.060501,Bergmann_2013} variable systems. 
	
	A notable example of such states is due to  Dicke \cite{Dicke54}.
	The {\sl Dicke states} of $N$-qubit system with $k$ excitations are defined as 
	\cite{PhysRevA.67.022112},
	\begin{equation}
		\label{Dicke}
		\ket{\text{D}_N^k} = \frac{1}{\sqrt{N!}}\sum_{\sigma \in \mathcal{S}_N}  \sigma \Big( \ket{ \underbrace{1\;\cdots \;1}_{k \text{ times}} \; \underbrace{0 \;\cdots \; 0}_{N-k \text{ times}}} \Big) ,
	\end{equation}
	where the summation runs over the symmetric group $\mathcal{S}_N$. 
%5	Dicke states are known also as spin-coherent states. 
%%  this is not true  (KZ)
	Permutational symmetry of the Dicke states simplifies 
	their theoretical \cite{Liu_2019} and experimental \cite{L_cke_2014} detection
	and facilitates the tasks of quantum tomography \cite{Mazza_2015}. 
	Multipartite Dicke states  % have been observed
	   are experimentally accessible \cite{Wieczorek_2009}. 
	Since the entanglement of Dicke states turned out to be maximally persistent and robust for the particle loss \cite{Dicke54,PhysRevLett.86.910}, such states provide inherent resources in numerous quantum information contexts, 
	including decoherence free quantum communication \cite{PhysRevLett.92.107901}, quantum secret sharing \cite{PhysRevLett.98.020503}, open destination teleportation \cite{PhysRevA.59.156}, and quantum metrology \cite{RevModPhys.90.035005}. 
	Entanglement properties and application of mixtures of Dicke states were also studied \cite{Tura2018separabilityof,Aloy_2021,marconi2020entangled}.
	
	Up till now vast majority of the scientific interest was focused on fully symmetric tasks, 
	  like parallel teleportation \cite{helwig2013absolutely} and
	symmetric quantum secret sharing protocols \cite{HelwigAME}. 
	In various realistic situations  it seems natural that 
	such a full symmetry between collaborating systems is not possible, required or even desirable. 
	As an example, it was shown  \cite{HIGUCHI2000213} that a four-qubit state 
	maximally entangled with respect to all possible symmetric partitions does not exist. 
	Such a state would allow one for the
	parallel teleportation of two qubits between any two subsystems to the remaining pair of systems. 
	Nevertheless, the following state:
	\[
	\dfrac{1}{2} \Big( \ket{1100}+\ket{0110}+\ket{0011}+\ket{1001} \Big)
	\]
	allows for the teleportation of a single qubit  to an arbitrary subsystem, 
	and additionally for the parallel teleportation across the partition $13|24$. 
	It is especially reasonable to share resources in a not fully symmetric way in variants of quantum secrets sharing schemes, allowing only some parties for cooperation. 
	Alternatively, such schemes were already considered, modelled by quantum channels \cite{Kimble_2008} or pairs of EPR state shared between nodes of a network \cite{Ac_n_2007,DistributionEntanglement,Perseguers_2010,PhysRevA.77.022308}. 
	Implementation of such communication networks is expected to be developed in the near future \cite{Gisin_2007,Kimble_2008}, with a variety of possible applications \cite{ComplexNetworks2}. 
	
	  In this work 
	we present an approach for constructing highly-entangled quantum 
	states with a given group of symmetry. Their form is similar to \cref{Dicke}, whereas the summation runs over elements of a specific subgroup $H$  of the symmetric
	group $ \mathcal{S}_N$. 
	We refer to such states as {\sl Dicke-like states} and analyse,
	which types of symmetries are feasible. 
	In particular, we present explicit constructions of quantum states
	based on highly symmetric geometrical objects, such as regular polygons,  
	Platonic solids, and regular plane tilings. 
	We compare them as entanglement resources with the original Dicke states,
	which were investigated as  ground states of specific $2$-body Hamiltonians with
	a well defined number of excitations \cite{PhysRevE.67.066203,DickeModelRevisited}. 
	We present the generalized,  Dicke-like states as a ground states of a 
	wider class of  Hamiltonians with $3$-body interactions. 
	
	Dicke-like states constitute a canonical basis of all pure quantum states with a given symmetry. 
	Apart from the theoretical aspect of this statement, such a basis might be efficiently applied in the context of quantum chemistry. 
	Various molecules in Nature (like benzene) stand out with remarkable symmetries. 
	In principle, the spin-model of such states might be efficiently simulated by a proper superposition of Dicke-like states. 
	In recent years, correlations and the entanglement contained in chemical bounds was investigated \cite{Szalay_2017,Szalay_2015,ding2020concept}, and a special attention was dedicated to highly symmetrical molecules \cite{Szalay_2015,ding2020concept}. 
	Although for most molecules the total correlation between orbitals seems to be classical \cite{ding2020concept}, the general significance of entanglement in chemical bonds is not yet clear. 
	Therefore,  general investigations of entanglement in highly symmetric systems may shed some light on the nature of correlation in relevant chemical molecules. 
	
	Furthermore, we introduce a larger class of genuinely entangled states based on an arbitrary network structure, graphically represented as a (hyper)graph. 
	Once more, this construction generalizes the Dicke states \cref{Dicke},
	however, 
	 excitations appear only in particular subsystems represented by (hyper)edges of the graph. 
	We do not impose any restrictions on the network structure, however, the detailed analysis of entanglement properties is performed under the assumption on regularity. 
	In order to construct excitation-states, we propose quantum circuits whose complexity is comparable with the complexity of quantum circuits proposed for Dicke states \cite{DickeCircuits,brtschi2019deterministic}. 
	Joint unitary operations in the circuit are performed on the neighbouring nodes. 
	%We present an alternative approach for constructing such networks whose nodes share one (possibly in many copies) genuinely entangled state. 
	
	We demonstrate entanglement properties of Dicke-like states and excitation-states in terms of entanglement resources \cite{RevModPhys.91.025001,Eltschka_2014} contained in particular subsystems.  
	In several cases an interesting phenomenon is observed: most of the entanglement is concentrated between nodes of distance two and is absent between immediate neighbours. 
	This indicates a particular advantage of such states in the case, in which  
	collaboration between neighbouring nodes is not desired. 
	In principle, constructed states are genuinely but not maximally entangled, persistent  
	with respect to measurements performed locally on each subsystem
	and with respect to losses of certain subsystems  \cite{PhysRevA.100.062329}.
	
	Mixed-state entanglement is a valuable resource for quantum communication protocols \cite{horodecki2001mixedstate,Bennett_1996}.
	Therefore the Dicke-like states and excitation-states might be distilled on particular subsystems \cite{PMID:11323664} and further used in concrete protocols. 
	Notice that once the excitation-state is established, it does not require any particular actions on nodes (as entanglement swapping \cite{coecke2004logic}), even for the collaboration of parties being far away in the network structure. 
	This is in contrast with the case anlyzed earlier, in which 
	the nodes of the graph share pairs of maximally entangled states. 
	
	This work is organized as follows. 
	In \cref{sec3a}, we introduce the notion of group symmetry of a multipartite state. 
	We demonstrate that not all kinds of symmetries are feasible. 
	Moreover, we define the Dicke-like states and discuss their separability criteria. 
	In \cref{hyper}, we generalize Dicke-like states in a less symmetric manner. 
	Basing on a given (hyper)graph, we introduce the excitation-states. 
	Such a graphical representation is later used in the construction
	 of quantum circuits in \cref{Circuit}, and Hamiltonians in \cref{Hamiltonians}, both relevant to 
	the Dicke-like states and the excitation-states. 
	\cref{cccc} analyzes the entanglement properties of introduced families of states. 
	In \cref{sec3} particular examples of Dicke-like states and excitation-states are presented. 
	We observe a phase transition of bipartite entanglement in excitation-states with respect to the average degree of a relevant (hyper)graph
	and present some more details in \cref{Phase transitions}. 
	In order to demonstrate the viability of the provided constructions, we have simulated one of the considered states on available quantum computers: IBM – Santiago and Athens.
	In  \cref{5-qubit} these results are presented and discussed.
	
	%%%%%%%%%%%%%%%%%%%%
	
	\section{Group of symmetry of a quantum state}
	\label{sec3a}
	A pure state $\ket{\psi}$ on an $N$-fold product Hilbert space is called \textit{symmetric} if it is invariant under permutation of $N$ subsystems, i.e. $\sigma \ket{\psi} = \ket{\psi}$ for any element $\sigma$ from the permutation group $ \mathcal{S}_N$. 
	Each such state $\ket{\psi}$ can be written in the computational basis:
	\begin{equation}
		\label{eq1}
		\ket{\psi} = \sum_{\sigma \in \mathcal{S}_N} \dfrac{1}{\sqrt{| \mathcal{S}_N |}}
		\ket{\phi_{\sigma (1)}}\cdots \ket{\phi_{\sigma (N)}},
	\end{equation}
	\noindent
	where the sum is taken over all permutations $\sigma$
	from the symmetric group $\mathcal{S}_N$. 
	Any symmetric state of the $N$-qubit system can be graphically represented 
	 as a collection of $N$ points on the Bloch sphere 
	 corresponding to vectors  $\ket{\phi_i},\ldots , \ket{\phi_N}$. 
	Such a visualization scheme is called the \textit{stellar} or \textit{Majorana} representation,
	 see \cref{fig1}.
	 This representation,
	 originaly used for pure states of simple systems of a finite dimension,
	 was later generalized for mixed states \cite{Serrano_Ens_stiga_2020},
	 symmetric states of systems consisting of several qubits 
	 \cite{Aulbach_2010,Markham_2011,PhysRevA.85.032314},
	 and symmetric states of higher dimensional subsystems \cite{CHGBSZ21}.
	 
	In this section, we discuss a natural generalization of this approach -- the
	 \textit{generalized stellar representation}, suitable for
	 quantum states exhibiting modes symmetries, i.e. for which summation in \cref{eq1} 
	 runs over a subgroup $H$ of the symmetric group $\mathcal{S}_N$. 
	%Moreover, some of those states might be visualized by \textit{generalized stellar representation}. 
	We begin with a very natural definition of the group of symmetry of a quantum state.
	% $\ket{\psi}$. 
	%is the largest group stabilizing $\ket{\psi}$.
	
	\begin{definition}
		A state $\ket{\psi}$ of $N$ subsystems is called $H$-symmetric, where $H$ is a subgroup of the permutation group, $H< \mathcal{S}_N$, iff it is permutation invariant for any $\sigma \in H$, and only for such permutations.
	\end{definition}
	
	\begin{figure}[h!]
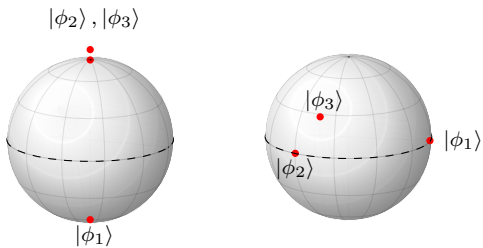

		\subfloat{
			\begin{blochsphere}[radius=1.1 cm,tilt=15,rotation=-20]
				\drawBallGrid[style={opacity=0.1}]{30}{30}
				
				\drawGreatCircle[style={dashed}]{0}{0}{0}
				
				\drawAxis[style={draw=black}]{0}{0}
				
				\drawAxis[color=black]{90}{0}
				
				\labelLatLon{up}{90}{0};
				\labelLatLon{down}{-90}{90};
				\node[circle,fill,red] at (up){.};
				\node[circle,fill,red] at (0,1.2){.};
				\node[above] at (0,1.4){{ \footnotesize $ \ket{\phi_2},\ket{\phi_3}$ }};
				\node[circle,fill,red] at (down){.};
				\node[below] at (down) {{ \footnotesize $\ket{\phi_1}$}};
			\end{blochsphere}  
		}
		\hspace{5mm}
		\subfloat{
			\begin{blochsphere}[radius=1.1 cm,tilt=15,rotation=-10]
				\drawBallGrid[style={opacity=0.1}]{30}{30}
				
				\drawGreatCircle[style={dashed}]{0}{0}{0}
				
				\drawAxis[style={draw=black}]{0}{0}
				
				\drawAxis[color=black]{90}{0}
				
				\labelLatLon{one}{0}{0};
				\labelLatLon{two}{0}{120};
				\labelLatLon{three}{0}{240};
				\node[circle,fill,red] at (one){.};
				\node[circle,fill,red] at (two){.};
				\node[circle,fill,red] at (three){.};
				\node[right] at (one) {{ \footnotesize $\ket{\phi_1}$ }};
				\node[below] at (two) {{\footnotesize  $\ket{\phi_2}$}};
				\node[above] at (three) {{ \footnotesize $\ket{\phi_3}$}};
				
				\node[above,white] at (up) {{ $ \ket{\phi_2},\ket{\phi_3}$ }};
				\node[below,white] at (down) {{ $\ket{\phi_1}$}};
			\end{blochsphere} 
		}
		\caption{Stellar representation of three qubits 
		states. a) the state  
		 $\ket{\text{W}} \propto  \ket{001} + \ket{010} +  \ket{100} $
		is represented by two stars at the North pole and a single star at the South.
		b) the  state $\ket{\text{GHZ}} \propto \ket{000}+ \ket{111}$ 
		   corresponds  to three stars evenly distributed on the equatorial plane:
		    $\ket{0} + e^{2ik/3 }\ket{1}$ for $k=0,1,2$. 
		 The stellar representation  is not unique as
		  any rotation of the Bloch sphere  around the vertical axis 
			%%?? %%
			%% and attached points 
			yields the same state.}
		\label{fig1}
	\end{figure}
	
	As we shall see below, constructing such symmetric states can be involved. 
	Consider first  a system consisting of three qubits. 
	There exist  two particular classes of states which should be distinguished \cite{ThreeQub}:
	\begin{align*}
		\ket{\text{GHZ}} &\propto \ket{000} +\ket{111} , \\
		\ket{\text{W}} &\propto \ket{001} +\ket{010} +  \ket{100} .
	\end{align*}
	Both states are symmetric with respect to any permutation of particles, see \cref{fig1}. 
	We discuss briefly the problem of constructing three--qubit quantum states
	with other types of symmetry. 
	It is easy to find an example of $\mathcal{S}_2$-type of symmetry in three-qubit setting, namely $ \ket{001} +\ket{010} $. 
	The mentioned state is obviously separable, but a similar example $|\chi\rangle$
	can be found also among genuinely entangled states,
	\[
	|\chi\rangle \propto \ket{001}+\ket{010}+2\ket{100}+ 2\ket{111}.
	\]
	There is one more non-trivial subgroup of the symmetric group $\mathcal{S}_3$, 
	namely the \textit{alternating} group $\mathcal{A}_3$, the group of all even permutations. 
	However, 
	%%it will be shown below that,
	in contrast to the group $\mathcal{S}_3$,
	there is no three-qubit state with such  a symmetry,
	which is not symmetric with respect to the symmetric group.   
	This phenomenon might be explained in the following way: 
	Restricting the group of summation in \cref{eq1} to $\mathcal{A}_3$ does not imply that some
	further symmetries can be present in the resulted state. 
	In particular, summation over all $\mathcal{A}_3$ permutations in \cref{eq1} of the vector
	$\ket{100}$, leads to the state $\ket{W}$, which is fully symmetric. 
	A similar behavior might be observed for stars leading to the state $\ket{GHZ}$. 
	Therefore in both cases the group of symmetry as the largest group stabilizing the state, is 
	the full symmetric group $\mathcal{S}_3$.
	
	%\begin{definition}
	%The state $\ket{\psi}$ is $H$-symmetric, where $H< S_N$ iff it is permutation invariant for any $\sigma \in H$, and only for such permutations.
	%\end{definition}
	
	We discuss the necessary and sufficient conditions 
	for existence of a quantum state with a given symmetry. 
	Let us begin with a simple observation 
	that for any group $H < \mathcal{S}_N$ there exists an $H$-symmetric state $\ket{\psi} \in \mathcal{H}^{\otimes N}_d$ if the number of energy levels $d$ in each 
	subsystem is large enough.
	
	\begin{proposition}
		Consider the subgroup $H < \mathcal{S}_N$. The following state $\ket{\psi} \in \mathcal{H}^{\otimes N}_N$ of the local dimension $N$:
		\begin{equation*}
			\ket{\psi} \propto \sum_{\sigma \in H} \ket{ \sigma_0 (0) \cdots \sigma_{N-1} (N-1)} 
		\end{equation*}
		is $H$ symmetric.
	\end{proposition}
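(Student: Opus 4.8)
The plan is to exploit the fact that the computational basis states appearing in the sum are mutually orthogonal and are faithfully labelled by the group elements. Because each $\sigma\in H<\mathcal{S}_N$ is a bijection of $\{0,\dots,N-1\}$, the string $\bigl(\sigma(0),\dots,\sigma(N-1)\bigr)$ is a genuine rearrangement of the $N$ symbols, and distinct $\sigma$ yield distinct strings. Here the hypothesis that the local dimension equals $N$ is essential: it guarantees that all values $\sigma(i)\in\{0,\dots,N-1\}$ are legitimate basis labels and that no two summands collide. Consequently $\ket{\psi}$ is, up to normalisation, a uniform superposition of $|H|$ pairwise orthogonal product states indexed bijectively by the elements of $H$, and the whole argument reduces to bookkeeping on these labels.

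First I would fix the convention for the action of a permutation $\tau\in\mathcal{S}_N$ on the $N$ subsystems, namely $\tau\ket{a_0\cdots a_{N-1}}=\ket{a_{\tau^{-1}(0)}\cdots a_{\tau^{-1}(N-1)}}$, so that the content of subsystem $i$ is transported to subsystem $\tau(i)$. Applying this to a single summand places, in position $j$, the value $\sigma(\tau^{-1}(j))$, hence
\begin{equation*}
\tau\,\ket{\sigma(0)\cdots\sigma(N-1)}=\ket{(\sigma\tau^{-1})(0)\cdots(\sigma\tau^{-1})(N-1)} .
\end{equation*}
In words, $\tau$ relabels the summand indexed by $\sigma$ to the one indexed by $\sigma\tau^{-1}$; the induced action on labels is right multiplication by $\tau^{-1}$.

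For the invariance I would take $\tau\in H$; since $H$ is a group, the map $\sigma\mapsto\sigma\tau^{-1}$ is a bijection of $H$ onto itself, so reindexing the sum gives $\tau\ket{\psi}=\ket{\psi}$, and $\ket{\psi}$ is invariant under every element of $H$. For the converse — that no further symmetry sneaks in — I would take an arbitrary $\tau$ with $\tau\ket{\psi}=\ket{\psi}$; by the label computation the set of basis states supporting $\tau\ket{\psi}$ is indexed by the right coset $H\tau^{-1}$, whereas that of $\ket{\psi}$ is indexed by $H$. Orthonormality of the summands forces these supports to coincide as subsets of $\mathcal{S}_N$, i.e. $H\tau^{-1}=H$, and since $e\in H$ this yields $\tau^{-1}\in H$, hence $\tau\in H$. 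Therefore the stabiliser of $\ket{\psi}$ is exactly $H$, which is what the definition of $H$-symmetry requires.

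The routine steps (reindexing and normalisation) are immediate; the main thing to get right is the action convention together with the elementary but load-bearing observation that the summands are orthonormal and uniquely tagged by $H$. This is precisely what the choice of local dimension $N$ secures, and it is what upgrades the coset identity $H\tau^{-1}=H$ into the sharp conclusion $\tau\in H$, thereby ruling out any accidental extra symmetry of the kind observed for $\mathcal{A}_3$ in the qubit setting.
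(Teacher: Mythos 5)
Your proof is correct and follows essentially the same route as the paper's: both arguments rest on the observation that the summands are distinct computational basis states labelled bijectively by the elements of $H$, so that any permutation outside $H$ sends a term of $\ket{\psi}$ to a basis state absent from its support. The paper states this in two sentences; you merely make the action convention, the reindexing for $\tau\in H$, and the coset identity $H\tau^{-1}=H$ explicit, which is a welcome but not substantively different elaboration.
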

	
	\begin{proof}
		It is easy to see, that the group $H$ stabilizes $\ket{\psi}$. Suppose there is a larger group $H'$ containing $H$, i.e. $H<H'$, and stabilizing $\ket{\psi}$. Take any element $\sigma \in H'\setminus H$. Observe, that there is no term
		$
		\ket{ \sigma_0 (0) \cdots \sigma_{N-1} (N-1)}
		$
		in $\ket{\psi} $, hence $H' $ does not stabilize $\ket{\psi} $.
	\end{proof}
	
	\noindent
	For instance, the three qutrits state
	\begin{equation*}
		\ket{\psi} \propto \ket{012} +\ket{201} +\ket{120}
	\end{equation*}
	is $\mathcal{A}_3$-symmetric. 
	Indeed, one can see, that the cyclic permutation of the last three qutrits does not change the entire state.
	Summarizing, there is no $\mathcal{A}_3$-symmetric three-qubit state, but we found an
	desired three-qutrit state. 
	This observation encourages us to pose the following question.
	
	\begin{question}
		Consider any group of symmetry $H< \mathcal{S}_N$. What is the minimal local dimension $d$ for which there exists a $H$-symmetric state $\ket{\psi} \in \mathcal{H}^{\otimes N}_d$?
	\end{question}
	
	In order to answer the question above, consider a suitable basis for symmetric states. Any symmetric state $\ket{\psi}$ is a superposition of Dicke 
	states \cite{Dicke54,PhysRevA.67.022112}:
	\[
	\ket{\psi} \propto \alpha_0 \ket{D_N^0} +\cdots +  \alpha_N \ket{D_N^N}.
	\]
	Therefore, the Dicke states, $\ket{D_N^k}$ with $k=0,1,\dots, N$
	can be treated as a basis of the symmetric state space. 
	In other words, the coefficient $\alpha_{i_1, \ldots,i_k}$ in front of any term $e_{i_1, \ldots,i_k} $ in the symmetric state is the same, and the symmetric state is uniquely determined by the coefficients $\alpha_0, \alpha_1 ,\alpha_{12},\ldots,\alpha_{1\cdots N}$ in front of terms $\ket{1\cdots 1 0\cdots0}$. 
	A similar decomposition for any $H$-symmetric states is possible, 
	however, its building blocks have to be suitably selected. 
	
	%Dicke states $\ket{D_N^k}$ forms the basis of the symmetric state space. 
	%We shall provide such a basis in a more general setting of $G$-symmetric states. 
	
	\begin{definition}
		\label{Dickelike}
		For a given subgroup $H <\mathcal{S}_N$, we define the $N$-qubit \textit{Dicke-like $H$ state} with $k$ excitations is the following way:
		\begin{equation}
			\label{Dicke-like}
			\ket{\text{D}_N^k}_H: =  \dfrac{1}{\sqrt{|H|}}\sum_{\sigma \in H}  \sigma \Big( \ket{ \underbrace{1\;\cdots \;1}_{k \text{ times}} \; \underbrace{0 \;\cdots \; 0}_{N-k \text{ times}}} \Big)
		\end{equation}
		where the summation runs over all permutations belonging to the group $H$. 
	\end{definition}
	
	Notice that the group symmetry of the Dicke-like states is not necessarily $H$. 
	In general, usage of such states to construct $H$-symmetric states is not straightforward. 
	Example presented below illustrates problems which might occur. 
	
	Suppose, we are looking for the general form of the cyclic, $\mathcal{C}_4$-symmetric state $\ket{\phi}$ (invariant under cyclic permutations). 
	It is of the following form:
	\begin{align*}
		\ket{\phi} =&\alpha_0 \ket{D_4^0}+ \alpha_1 \ket{D_4^1}+ \alpha_{2,1} \ket{D_4^2}_{\mathcal{C}_4}+\alpha_{2,2} \ket{D_4^2}_{\mathcal{S}_2 \times \mathcal{S}_2}+ \\
		& \alpha_3 \ket{D_4^3}+ \alpha_4 \ket{D_4^4},
	\end{align*}
	where 
	%\begin{align*}
	\begin{eqnarray}
	\label{D42a}
	\!\!\!\!\!	\ket{D_4^2}_{\mathcal{C}_4}&  = &\frac{1}{2}\Big( \ket{1100}+\ket{0110}+\ket{0011}+\ket{1001}\Big), \  \\
	\label{D42b}
		\ket{D_4^2}_{\mathcal{S}_2 \times \mathcal{S}_2}&  = & \frac{1}{\sqrt{2}}\Big(\ket{1010}+\ket{0101}\Big).
%	\end{align*}
	\end{eqnarray}
	In each of them separate coefficient might be adjusted. 
	Observe that up to the normalization constant 
	\[
	\ket{D_4^2} \;\propto\;\ket{D_4^2}_{\mathcal{C}_4} +\ket{D_4^2}_{\mathcal{S}_2 \times \mathcal{S}_2} ,
	\]
	which means that the fully symmetric term $\ket{D_4^2}$ splits into two classes: $\ket{D_4^2}_{\mathcal{C}_4} $ and $\ket{D_4^2}_{\mathcal{S}_2 \times \mathcal{S}_2}$. 
	This division was obtained by taking appropriate terms $\ket{1\cdots 1 0\cdots0}$ and acting
	on them by the cyclic group $\mathcal{C}_4$, unless not all such terms belong to the prior class. 
	
	Decomposing the state space according to the symmetric subspaces has
	 one additional advantage. 
	Above decomposition implies that 
	cyclic symmetric state of four qubits does not exist. 
	Observe that the symmetry group of the state $\ket{D_4^2}_{\mathcal{C}_4}$ 
	 is the dihedral group $\mathcal{D}_8$ (in fact for $\ket{D_4^2}_{\mathcal{S}_2 \times \mathcal{S}_2} $ is the same). 
	Therefore, by taking $ \alpha_{2,1} \neq \alpha_{2,2}$, we obtain the $\mathcal{D}_8$ symmetric state, in any other case, the $\mathcal{S}_N$ symmetric case. Therefore, 
	$\mathcal{C}_4$ symmetric state $\ket{\psi} \in \mathcal{H}^{\otimes 4}_2$ does
	not exist. 
	In fact, the $\ket{D_4^2}_{\mathcal{C}_4} $ and $\ket{D_4^2}_{\mathcal{S}_2 \times \mathcal{S}_2}$ are non-trivial examples of the dihedral $\mathcal{D}_8$-symmetric states, nevertheless both of them are separable with respect to the partition ($13|24$). 
	
	On the other hand, 
	the $\mathcal{D}_8$-symmetric state can be constructed among genuinely 
	entangled states. 
	Indeed, a superposition of $\ket{D_4^2}_1$ and $\ket{1111}$, 
	\[
	\ket{\phi} \propto \ket{1100}+\ket{0110}+\ket{0011}+\ket{1001} + 2\ket{1111}
	\]
	leads to a  $\mathcal{D}_8$-symmetric state.
	
	Such an analysis might be performed for other subgroups $H<\mathcal{S}_N$ for small number of parties $N$. 
	We list below all subgroups of $\mathcal{S}_N$ for $N=3,4$, which might be realized as a symmetric group of a quantum state of $N$ qubits.
	
	\begin{observation}
		There is no $\mathcal{A}_3$-symmetric three qubit state. 
		All other subgroups of $\mathcal{S}_3$ are allowed, relevant states were priori constructed. 
		In a four-partite setting, the cyclic group $\mathcal{C}_4$, and the alternating subgroups $\mathcal{A}_3$ and $\mathcal{A}_4$ cannot be realized as a group of symmetry of four-partite qubit state. 
	\end{observation}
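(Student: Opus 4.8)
The strategy is to reduce the whole statement to a single combinatorial criterion about the orbits of $H$ on the computational basis. Since each $N$-qubit basis vector $\ket{x}$ is labelled by a binary string $x\in\{0,1\}^N$, a permutation $\sigma\in\mathcal{S}_N$ acts on the basis by permuting bit positions, and invariance $\sigma\ket\psi=\ket\psi$ for all $\sigma\in H$ forces the coefficients of $\ket\psi$ to be constant on $H$-orbits. Hence, as in the discussion preceding \cref{Dickelike}, an $H$-symmetric state is exactly a superposition $\sum_O c_O\ket{O}$ of orbit sums $\ket{O}=\sum_{x\in O}\ket{x}$, where $O$ runs over the orbits of $H$ on $\{0,1\}^N$ (these are precisely the Dicke-like states $\ket{D_N^k}_H$ split according to distinct orbits). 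I would then introduce the group $G(H)\geq H$ of all $\tau\in\mathcal{S}_N$ preserving every $H$-orbit setwise, equivalently the largest subgroup of $\mathcal{S}_N$ having the same orbit partition on $\{0,1\}^N$ as $H$.

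The first key step is to prove the criterion: $H$ is the symmetry group of some $N$-qubit state \emph{if and only if} $H=G(H)$. For the ``only if'' direction, any $g\in G(H)$ permutes the basis vectors inside each $H$-orbit and therefore fixes every $\ket{O}$ and every $H$-symmetric state; thus the symmetry group of such a state always contains $G(H)$, so when $G(H)\supsetneq H$ the value $H$ is never attained. For the ``if'' direction, choosing the $c_O$ distinct and nonzero forces any symmetry $\tau$ to satisfy $c_{O(\tau^{-1}y)}=c_{O(y)}$ for every basis vector $y$, whence $\tau$ preserves the orbit partition and lies in $G(H)=H$; so this state realizes $H$ exactly. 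With the criterion established, the entire Observation reduces to computing orbit partitions.

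The second step is the case analysis. For $\mathcal{A}_3<\mathcal{S}_3$ the three-cycle cyclically permutes both the three weight-one strings and the three weight-two strings, so its orbits coincide with those of $\mathcal{S}_3$; for $\mathcal{A}_4<\mathcal{S}_4$ the group is $2$-transitive on the four points, hence transitive on $k$-subsets for each $k$, and its orbits coincide with those of $\mathcal{S}_4$. Thus $G(\mathcal{A}_3)=\mathcal{S}_3$ and $G(\mathcal{A}_4)=\mathcal{S}_4$, excluding both (the three-qubit $\mathcal{A}_3$ case recovering the argument already given in the text). For the copy of $\mathcal{A}_3$ inside $\mathcal{S}_4$ generated by a $3$-cycle — all such copies being conjugate, so one representative suffices — the orbits on the three moved positions again match $\mathcal{S}_3$, giving $G(\mathcal{A}_3)=\mathcal{S}_3\supsetneq\mathcal{A}_3$. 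The positive $\mathcal{S}_3$ cases (the trivial group, the order-two subgroups, and $\mathcal{S}_3$ itself) are witnessed by the explicit states displayed earlier, such as $\ket{001}+\ket{010}$, $\ket{\chi}$, $\ket{\text{W}}$ and $\ket{\text{GHZ}}$, or follow from the ``if'' direction with generic coefficients.

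The case I expect to need the most care, already foreshadowed in the preceding discussion, is $\mathcal{C}_4<\mathcal{S}_4$. Its orbit partition is genuinely nontrivial: at weight two it splits into the ``adjacent'' orbit $\{1100,0110,0011,1001\}$ giving $\ket{D_4^2}_{\mathcal{C}_4}$ and the ``opposite'' orbit $\{1010,0101\}$ giving $\ket{D_4^2}_{\mathcal{S}_2\times\mathcal{S}_2}$. The main obstacle is to pin down $G(\mathcal{C}_4)$ precisely — showing it is strictly larger than $\mathcal{C}_4$ yet no larger than $\mathcal{D}_8$. A permutation preserves these two blocks exactly when it preserves the adjacency relation of the $4$-cycle on the positions, identifying $1100\leftrightarrow\{1,2\}$ as an edge and $1010\leftrightarrow\{1,3\}$ as a diagonal, so $G(\mathcal{C}_4)$ equals the automorphism group of the cycle $C_4$, namely $\mathcal{D}_8\supsetneq\mathcal{C}_4$. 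Since every other weight class is a single orbit already fixed by all of $\mathcal{S}_4$, this shows every $\mathcal{C}_4$-symmetric four-qubit state is in fact $\mathcal{D}_8$-symmetric, completing the exclusion of $\mathcal{C}_4$ and matching the explicit computation given above.
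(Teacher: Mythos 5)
Your proposal is correct, and at its core it runs on the same engine as the paper's argument: decompose an $H$-invariant state into sums over $H$-orbits of basis strings (the Dicke-like states), and observe that the stabilizer of any such state automatically contains the stabilizer of the orbit partition. What you add is a clean necessary-and-sufficient criterion --- $H$ is realizable iff $H$ equals the group $G(H)$ of all permutations preserving every $H$-orbit setwise --- whereas the paper argues informally and case by case: for $\mathcal{A}_3$ on three qubits it notes that summing $\ket{100}$ (or the GHZ constellation) over $\mathcal{A}_3$ reproduces the fully symmetric states $\ket{W}$ and $\ket{\mathrm{GHZ}}$, and for $\mathcal{C}_4$ it observes that both weight-two orbit sums $\ket{D_4^2}_{\mathcal{C}_4}$ and $\ket{D_4^2}_{\mathcal{S}_2\times\mathcal{S}_2}$ are individually $\mathcal{D}_8$-symmetric, so a superposition is $\mathcal{D}_8$-symmetric at best and never exactly $\mathcal{C}_4$-symmetric. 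The paper never spells out the $\mathcal{A}_4<\mathcal{S}_4$ case or the embedded $\mathcal{A}_3<\mathcal{S}_4$ case (they are only asserted), so your $2$-transitivity argument and the singleton-orbit argument forcing the fourth position to be fixed fill genuine gaps; likewise your ``if'' direction with pairwise distinct nonzero orbit coefficients supplies the positive half of the statement (all subgroups of $\mathcal{S}_3$ other than $\mathcal{A}_3$ are allowed) in a uniform way rather than only by the ad hoc witnesses $\ket{001}+\ket{010}$, $\ket{\chi}$, $\ket{W}$, $\ket{\mathrm{GHZ}}$ displayed in the text. In short: same decomposition, but your $H=G(H)$ lemma is a sharper and more systematic packaging of what the paper (and its poset-based procedure) does implicitly, and I find no gap in it.
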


	The general analysis is tightly connected with the \textit{partially ordered set} (poset) of all subgroups of $\mathcal{S}_N$, which has rather complicated structure \cite{Poset}. 
	%We discuss it in more detailed way in \cref{app3}. 
	We conjecture that alternating groups $\mathcal{A}_N$ are rather difficult to realize as a group of symmetries of quantum states, impossible in the $N$-qubit setting.
	
	Symmetric states of $N$-qubit system
	have an effective representation, 
	called \textit{Stellar representation} \cite{M_kel__2010}, as $N$ points (stares) on the Bloch sphere, related to the roots of the Majorana Polynomial. 
	It was recently generalized for mixed states setting \cite{Serrano_Ens_stiga_2020}. 
	Stellar representation turned out to be useful for 
	classification of entanglement in symmetric quantum states \cite{M_kel__2010,CrossRatioNqubits,PhysRevA.85.032314}. 
	Moreover, by imposing the special symmetry conditions on the stars, related states exhibits highly entanglement properties \cite{Martin10,Martin15,Martin17}. 
	Stellar representation might be also generalized to the $H$-symmetric states. 
	As it was before, consider $N$ points on the Bloch sphere: 
	$
	\ket{\phi_1},\ldots , \ket{\phi_N},
	$
	and the following product
	\begin{equation}
		\label{eq2}
		\ket{\psi} =
		\dfrac{1}{\sqrt{|H|}}
		\sum_{\sigma \in H} 
		\ket{\phi_{\sigma (1)}}\cdots \ket{\phi_{\sigma (N)}},
	\end{equation}
	\noindent
	where the sum is taken over all permutations $\sigma$ from the group $H$. 
	We shall represent state $\ket{\psi}$ as a collection of $N$ points 
	on the Bloch sphere relevant to vectors $\ket{\phi_i},\ldots, \ket{\phi_N}$
	with indicated the group $H$ which is acting in \cref{eq2}, see \cref{StarsN3}. 
	%\ald{Problem. How do we know that there is only a single term in \cref{eq2}? Somehow it was a case of symmetric states, but why? Nevertheless, some of such states might be definitely represented by enhanced stellar representation.}. 
	%%%%%%%%%%%%%%%%%
	
	Note that a given constellation of `stars' at the Bloch sphere together with the
	selected symmetry group $H$ do not represent uniquely the quantum state. 
	The important information is carried in how the group $H$ is contained in $\mathcal{S}_N$, which mathematically might be expressed by immersion $H\hookrightarrow \mathcal{S}_N$ of the group $H$ into the symmetric group $\mathcal{S}_N$. 
	%Obviously, by investigating the state up to permutations of parties, 
	%{\color{red} All states corresponding to various permutations of the subsystems are equivalent.} 
	
	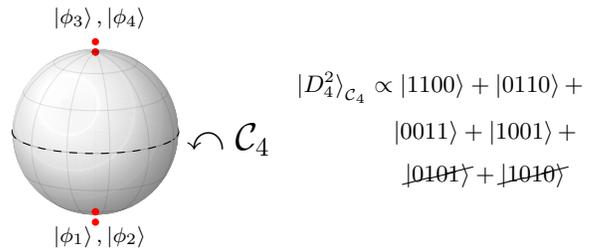
\begin{figure}[h!]
		\subfloat{
			\begin{blochsphere}[radius=1.1 cm,tilt=15,rotation=-20]
				\drawBallGrid[style={opacity=0.1}]{30}{30}
				
				\drawGreatCircle[style={dashed}]{0}{0}{0}
				
				\drawAxis[style={draw=black}]{0}{0}
				
				\drawAxis[color=black]{90}{0}
				
				\labelLatLon{up}{90}{0};
				\labelLatLon{down}{-90}{90};
				\node[circle,fill,red] at (up){.};
				\node[circle,fill,red] at (down){.};
				\node[circle,fill,red] at (up){.};
				\node[circle,fill,red] at (0,1.2){.};
				\node[above] at (0,1.3){{ \footnotesize $ \ket{\phi_3},\ket{\phi_4}$ }};
				\node[circle,fill,red] at (down){.};
				\node[circle,fill,red] at (0,-1.2){.};
				\node[above] at (0,-1.6){{ \footnotesize $ \ket{\phi_1},\ket{\phi_2}$ }};
				\labelLatLon{one}{0}{0};
				\node[right] at (one) {{ \Large $\curvearrowleft \mathcal{C}_4$ }};
			\end{blochsphere}  
		}
		\begin{tikzpicture}
			\begin{scope}  
				\tikzstyle{every node}=[] 
				\node[] at (2,5) {$\ket{D_4^2}_{\mathcal{C}_4} \propto\ket{1100}+ \ket{0110}+$};
				\node[] at (2.6,4.4) {$\ket{0011}+ \ket{1001}+$};
				\node[] at (2.6,3.8) {$\cancel{\ket{0101}}+ \cancel{\ket{1010}}$};
				\node[text=white] at (0,3) {.};
			\end{scope}  
		\end{tikzpicture}
		\caption{Stellar representation of 
			a Dicke state $\ket{D_N^k}$ consists 
			of $k$ stars at the South Pole and $N-k$ stars at the North Pole \cite{PhysRevA.85.032314}. A similar picture holds true for the Dicke-like states: 
			 $\ket{D_4^2}_{\mathcal{C}_4}$  defined in Eq. (\ref{D42a}) and
			not equivalent to the Dicke state $\ket{D_4^2}$,
			is represented by four stars on the Bloch sphere, with indicated action of the Cyclic group 
			${\mathcal{C}_4}$. 
			Instead of choosing  this group
			%taking the Cyclic group $\mathcal{C}_4$, 
			one could consider the action of
			% another group, namely
			the Dihedral group $\mathcal{D}_8$, see \cref{Platonic states}. 
			The resulting states are the same, $\ket{D_4^2}_{\mathcal{C}_4} = \ket{D_4^2}_{\mathcal{D}_8} $, and
			%were presented as
			they are
			based on the graph construction, see \cref{figY}.
			%% To byl za dlugi opis  i wymagal skrocenia!
			%%% 
			% Observe that by restricting the group which is acting on the Bloch sphere, the resulting state $\ket{D_4^2}_{\mathcal{C}_4}$ is not equivalent to the Dicke state $\ket{D_4^2}$. Indeed, terms $\ket{D_4^2}$ which are not present in $\ket{D_4^2}_{\mathcal{D}_8}$ are also indicated.
		}
		\label{StarsN3}
	\end{figure}

	%%%%%%%%%%%%%%%%%%%%%%%%%%%%
	
	\section{States from (hyper)graphs}
	\label{hyper}

	In the previous section we discussed the problem of determining quantum states invariant under specified permutations $\sigma \in H$, where $H < \mathcal{S}_N$. 
	Moreover, we presented the Dicke-like states, 
	which are building blocks for all such modestly symmetric states. 
	They were obtained by limiting the group of summation in the 
	standard construction of the classical Dicke states. 
	As specification of all symmetries of a given quantum state is a challenging task,
	we propose here an alternative, a more geometric approach to the problem. 
	Certain quantum states can be associated with (hyper)graphs, whereas their symmetries as automorphisms of (hyper)graphs. 
	
	Let us emphasize here  that graphs \cite{Graph1,Graph2} and hypergraphs \cite{Hypergraphs_2013}  were already successfully used in order to construct genuinely entangled states, and the notion of \textit{graph states} is well established. 
	Graph states (known also as cluster states \cite{PhysRevLett.86.910}) exhibit genuine entanglement properties \cite{helwig2013absolutely}. They are useful in the context of the one-way quantum computer or quantum error-correcting codes \cite{LULCsupport,Raussendorf_2007}. 
	Another approach allows one to generate 
	a family of random quantum states related to a given graph \cite{CNZ10,CNZ13},
	the vertices of which describe the interaction between subsystems.
	
	In this work we propose another scheme to associate 
	to a given graph with $N$ vertices
	a single pure quantum state of an $N$-party system.
	Such a representation reflects not only the symmetry,
	 but also the structure
	of a quantum circuit under which presented families of quantum states 
	can be constructed. 
	
	Recall that a \emph{graph} $G$ is a pair $(V,E)$ where $V$ is a finite set, and $E$ is a collection of two-element subsets of $V$. 
	We refer to elements of $V$ as \textit{vertices}, and elements of $E$ as \textit{edges} respectively. 
	A successful generalization of the notion of a graph is \textit{hypergraph} 
	\cite{Hypergraphs_2013},
	for which edges are arbitrary (not necessarily 2-element) subsets of $V$. 
	A hypergraph is called \textit{uniform} if its all edges consist of the same number of elements equal to $k$, we refer to such an object as $k$-hypergraph.
	For example, a 2-hypergraph is simply a graph.
	We denote the number of vertices by $N$. % and the number of edges by $e$.
	%{\color{red} a co oznacza $|E|$ we wzorze (5)? Tez number of edges??}
	For simplicity, we assume that vertices of the (hyper)graph are labeled by numbers $1,\ldots,N$.
	
	\begin{definition}
		With a given (hyper)graph $G=(V,E)$, we associate a quantum state of $N$ qubits in the following way: 
		\begin{equation}
			\ket{G} := \dfrac{1}{\sqrt{|E|}}  \sum_{e \in E} \ket{\psi_e}
		\end{equation} 
		where $\ket{\psi_e}$ is a tensor product of $\ket{1}$ on positions labelled by indices form the (hyper)edge $e$ 
		%{\color{red} To $e$ to jest konkretna krawedz czy ich liczba?? Prosze ujednolicic scislic notacje !}
		and $\ket{0}$ on other positions. We shall refer to such states as \textit{excitation--states}. %\ald{excitation states. Is it better notation?}
		\cref{ex1p} illustrates this definition. 
	\end{definition}
	
	%%%%%%%%%%%% 
	%%%%%%%%%%
	\noindent
	In the definition above, we could consider \textit{weighted hypergraphs}, i.e. in which each edge $e\in E$ has a \textit{weight} given by a complex number. 
	Obviously, the normalization of a state should be changed. 
	Interestingly, in such a way, we can  obtain a general form of a quantum state. 
	Indeed, a given quantum state $\ket{\psi} \in \mathcal{H}^N_d$ might be presented in a computational basis. 
	From this form, we can construct a weighted hypergraph, where edges are relevant to the non-vanishing terms and weights are given by coefficients. 
	Nevertheless, to keep the presentation simple,
	in this work we do not consider weighted graphs. 
	
	\begin{figure}[h!]
		\definecolor{pur}{RGB}{186,146,162}
		\definecolor{greeo}{RGB}{91,173,69}
		\begin{tikzpicture}
			\begin{scope}[scale=.3]
				\tikzstyle{every node}=[circle,draw,minimum size = 0.4cm]
				\tikzstyle{every node}=[circle,inner sep=4pt,draw=gray!95!white,fill=greeo!10!white,thick,anchor=base];
				\node(1) at (0,5) {};
				\node(3) at (5,0) {};
				\node(2) at (5,5){};
				\node(4) at (0,0) {};
				
				\tikzstyle{every node}=[]
				\node[] at (0,5) {\footnotesize 2};
				\node[] at (5,5) { \footnotesize 4};
				\node[] at (5,0) { \footnotesize 3};
				\node[]  at (0,0) {\footnotesize 1};
				
				\draw [gray, thick] plot [smooth cycle, tension=0.2] coordinates { (0,6.1)   (6,0.1)    (5,-1.1) (-1,4.9) };
				\draw [blue,thick] plot [smooth cycle, tension=0.3] coordinates { (-1.2,-1.2) (6.2,-1.2) (6.2,6.2)  (-1.2,6.2) };
				\draw [purple, thick] plot [smooth cycle, tension=0.7] coordinates { (-1,-1)   (-1,1)    (1,1) (1,-1) };
				
				\tikzstyle{every node}=[]
				\node(1a) at (2.5,-3) { $\ket{\Psi_\text{tel}}=$ \textcolor{blue}{$\ket{1111}$} + \textcolor{purple}{$\ket{1000}$} +\textcolor{gray}{$\ket{0110}$} };
			\end{scope} 
		\end{tikzpicture}
		\caption{Telescope state $\ket{\Psi_\text{tel}}$ \cite{ProblemTangle} represented as an excitation-state. The (hyper)graph is not uniform since (hyper)edges are of different capacities. The relevant state is simply obtained by reading over all edges. It has only one non-trivial symmetry, which exchange parties $2$ and $3$. }
		\label{ex1p}
	\end{figure}
	
	Observe that the $\ket{\text{W}}$ state, the generalized $\ket{\text{W}_N}$ states, 
	and the Dicke states $\ket{\text{D}_N^k}$ 
	are associated with the complete $k$-hypergraphs under provided construction, see \cref{ex1}. 
	All these states are fully symmetric. 
	This reflects the fact that \textit{automorphisms group} of a complete $k$-hypergraph is 
	the symmetric group $\mathcal{S}_N$. 
	This relation might be generalized for all excitation-states. 
	
	\begin{corollary}
		Consider an excitation-state $\ket{G}$ related to the hypergraph $G$. The group of symmetries of a state $\ket{G}$ is an automorphisms group of the related hypergraph $G$.
	\end{corollary}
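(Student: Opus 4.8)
The plan is to prove the two inclusions separately, showing that the stabilizer of $\ket{G}$ inside $\mathcal{S}_N$ coincides exactly with the automorphism group $\mathrm{Aut}(G)$. The whole argument rests on a single structural observation: a permutation $\sigma \in \mathcal{S}_N$ acts on the $N$ tensor factors, and therefore sends each basis vector $\ket{\psi_e}$ to $\ket{\psi_{\sigma(e)}}$, where $\sigma(e) := \{\sigma(i) : i \in e\}$ is the image subset. In other words, the induced action of $\sigma$ on computational basis vectors labelled by subsets of $V$ is precisely the action of $\sigma$ on subsets of vertices. Since distinct subsets index distinct (hence orthonormal) basis vectors, this bookkeeping lets me read off the symmetries directly from the combinatorics of the edge set $E$.

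First I would establish $\mathrm{Aut}(G) \subseteq \mathrm{Stab}(\ket{G})$. If $\sigma$ is an automorphism, then $e \mapsto \sigma(e)$ is a bijection of $E$ onto itself, so applying $\sigma$ to $\ket{G} = \tfrac{1}{\sqrt{|E|}}\sum_{e\in E}\ket{\psi_e}$ merely reindexes the sum, giving $\sigma\ket{G} = \tfrac{1}{\sqrt{|E|}}\sum_{e\in E}\ket{\psi_{\sigma(e)}} = \ket{G}$. Thus every automorphism fixes the state. For the reverse inclusion I would take $\sigma$ with $\sigma\ket{G} = \ket{G}$ and compare coefficients in the computational basis. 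The coefficient of a basis vector $\ket{\psi_S}$ indexed by a subset $S \subseteq V$ in $\sigma\ket{G}$ equals $\tfrac{1}{\sqrt{|E|}}$ precisely when $\sigma^{-1}(S) \in E$, i.e. when $S \in \sigma(E)$, and vanishes otherwise; here injectivity of $\sigma$ on subsets guarantees that no two edges collapse onto the same $S$, so no coefficients accumulate. Equating this with the coefficient pattern of $\ket{G}$, which is $\tfrac{1}{\sqrt{|E|}}$ exactly on $S \in E$, and invoking orthonormality of the $\ket{\psi_S}$ to justify the term-by-term comparison, forces $\sigma(E) = E$. Hence $\sigma$ is an automorphism, completing the equality. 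The same computation applies verbatim to hypergraphs with edges of mixed size, since $\sigma$ preserves Hamming weight and never identifies subsets of different cardinalities.

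The step requiring the most care -- the genuine content of the argument -- is the reverse inclusion, specifically the claim that a stabilizing permutation cannot map an edge to a non-edge. This is exactly where the \emph{uniform} weighting of the excitation-state is essential: because all amplitudes equal $\tfrac{1}{\sqrt{|E|}}$, the equality $\sigma\ket{G} = \ket{G}$ collapses to the set equality $\sigma(E) = E$ rather than to a system of amplitude-matching constraints. In the weighted-hypergraph generalisation mentioned just after the definition one would additionally need $\sigma$ to preserve the weight function, which is why the clean statement holds precisely in the unweighted case considered here.
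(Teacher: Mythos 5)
Your proof is correct. The paper states this corollary without giving an explicit proof (it is presented as an immediate generalization of the observation that complete $k$-hypergraphs yield the fully symmetric Dicke states), and your two-inclusion argument --- reindexing the sum for $\mathrm{Aut}(G)\subseteq\mathrm{Stab}(\ket{G})$, and comparing coefficients of the orthonormal computational basis vectors $\ket{\psi_S}$ to force $\sigma(E)=E$ for the converse --- is exactly the justification the authors leave implicit. Your remarks on why no two edges can collapse onto one basis vector and on the role of the uniform amplitudes (and what would change for weighted hypergraphs) correctly identify the only points where care is needed.
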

	
	\noindent
	Even though determining the automorphisms group for a graph is NP-hard problem, it is widely studied for several classes of graphs. 
	Therefore, this graphical representation of quantum states 
	may be conclusive for determining its group of symmetries. 
	We shall use this relation directly on various examples of Dicke-like states. 
	
	The Dicke-like states introduced in Section \ref{sec3a}
	 belong to the class of excitation-states
	%They  belong to the class of highly symmetric states and 
	and form building blocks for all states of intermediate symmetry. 
	%In fact, Dicke-like states
	%We present how $\ket{D_N^k}_H$ might be obtained as excitation-state. 
	Notice that the (hyper)graph construction of excitation-states is far more general, in particular, 
	it does not require any symmetries.
	
	\begin{corollary}
		Each Dicke-like state $\ket{D_N^k}_H$ is equivalent to the excitation-state of a 2-transitive $k$-hypergraph.
	Furthermore, such a hypergraph state is a Dicke-like state with the group of symmetries 
	determined by the group automorphisms of the graph.
	\end{corollary}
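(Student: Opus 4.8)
\emph{Proof plan.} The plan is to show that the two constructions are really the same object written in two languages. First I would fix the dictionary between $k$-excitation basis vectors and $k$-subsets: a computational basis vector with exactly $k$ ones is determined by the set $S=\{i:x_i=1\}$ of excited positions, and $S\mapsto\ket{\psi_S}$ is a bijection onto such vectors. Under this dictionary the reference string $\ket{1\cdots1\,0\cdots0}$ corresponds to $S_0=\{1,\dots,k\}$ and a permutation acts by $\sigma\ket{\psi_S}=\ket{\psi_{\sigma(S)}}$. Given $\ket{D_N^k}_H$, I would then define the $k$-hypergraph $G=(V,E)$ with $V=\{1,\dots,N\}$ and edge set $E=\{\sigma(S_0):\sigma\in H\}$, the $H$-orbit of $S_0$, and argue that $\ket{G}$ and $\ket{D_N^k}_H$ are the same state.

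The second step reconciles the two sums. The Dicke-like sum runs over all of $H$, so by the orbit--stabiliser theorem each edge $e\in E$ is produced exactly $|\mathrm{Stab}_H(S_0)|=|H|/|E|$ times, while the excitation-state lists each edge once; hence $\ket{D_N^k}_H$ and $\ket{G}$ differ only by the positive scalar $\sqrt{|\mathrm{Stab}_H(S_0)|}$ coming from this multiplicity together with the two prefactors, so they coincide up to normalisation (as the same ray). Since $E$ is by construction a single $H$-orbit, $H$---and a fortiori the larger group $\mathrm{Aut}(G)\supseteq H$---acts transitively on the $k$-edges, which is exactly the $2$-transitivity property claimed for $G$. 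The true symmetry group of $\ket{G}$ is then supplied directly by the preceding Corollary as $\mathrm{Aut}(G)$.

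For the ``Furthermore'' clause I would reverse the argument: if $G$ is a $2$-transitive $k$-hypergraph then $E$ is a single orbit of $\mathrm{Aut}(G)$, so choosing any reference edge $e_0$ and relabelling vertices so that $e_0=\{1,\dots,k\}$ yields $\ket{G}=\ket{D_N^k}_{\mathrm{Aut}(G)}$, which exhibits the excitation-state as a Dicke-like state, and the preceding Corollary again pins its symmetry group to $\mathrm{Aut}(G)$. I expect the main obstacle to be conceptual rather than computational: the genuine symmetry group $\mathrm{Aut}(G)$ need not equal the group $H$ one began with, since the orbit $E$ may admit extra automorphisms. This is precisely the content the statement is flagging---the same state arises from several groups, as in $\ket{D_4^2}_{\mathcal{C}_4}=\ket{D_4^2}_{\mathcal{D}_8}$ from the $4$-cycle---so the care lies in stating the transitivity condition correctly and in the orbit-counting (normalisation) bookkeeping, the rest following from the subset/basis-vector dictionary.
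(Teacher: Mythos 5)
Your proposal is correct and follows essentially the same route as the paper: build the $k$-hypergraph whose edges are the $H$-orbit of the reference edge $\{1,\dots,k\}$, observe that edge-transitivity is automatic for an orbit, and reverse the construction by relabelling a reference edge for the converse. Your orbit--stabiliser bookkeeping makes explicit the equal-multiplicity/normalisation point that the paper only asserts, and your closing remark that $\mathrm{Aut}(G)$ may strictly contain $H$ matches the paper's own caveat.
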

	
	%So far, we presented two similar (but not the same) construction of genuinely entangled states: excitation-states $\ket{G}$, related to the graphs $G$, and Dicke-like states $\ket{D_N^k}_H$, by limiting the group of summation in a classical Dicke states construction to the subgroup $H<S_N$. 
	%In fact, (hyper)graph construction is far more general, does not requires any symmetries, allowed us for the entanglement analysis, and as we shall see, leads to the construction of quantum circuits for $\ket{G}$ states. 
	%Dicke-like states are special type of a excitation-states which exhibits non-trivial symmetrical structure. 
	%We present how $\ket{D_N^k}_H$ might be obtained as excitation-state. 
	
	\begin{proof}
		Recall that the state $\ket{D_N^k}_H$ was obtained via the action of a group $H$ on the vector $\ket{1\cdots 1 0\cdots 0}$ with $k$ entries equal to unity. 
		The group $H<\mathcal{S}_N$, so it acts on the $N$ element set $\{ 1,\ldots ,N\}$. 
		Consider now the following (hyper)graph. 
		For any element $h \in H$, join $h(1),\ldots, h(k)$ with an edge. 
		The relevant excitation-state coincides with $\ket{D_N^k}_H$. 
		The group of symmetry for a Dicke-like state corresponds to the group of automorphisms of a relevant (hyper)graph. 
		By the above construction, related graph is \textit{edge-transitive}, i.e. for any two edges $e_1=\{v_1 ,\ldots ,v_k \}$ and 
		$e_2=\{w_1 ,\ldots ,w_k \}$, 
		there exists an element $h \in H$ such that $h(v_i)=w_i$ for any $i=1,\ldots,k$. 
		Interestingly, the reverse statement is also true. 
		Indeed, take the group of automorphisms of a graph, and suppose it is edge-transitive. 
		Without loss of generality suppose that vertices $1,\ldots,k$ are connected in an edge $e_0$. 
		Observe that any other edge $e$, is a result of some automorphism $h$, i.e. $e =h(e_0)$ (all terms in an excitation-state might be obtained from $\ket{1\cdots 1 0\cdots 0}$ by an action of automorphisms). 
		Moreover, none of the edges is distinguished, hence each of them is achieved the same number of times, as the coefficients in front of the edges are equal. 
	\end{proof}
	
	%%%%%%%%%%%
	Let us briefly compare the concept of constructing genuinely entangled quantum states 
	presented above with the 	notion of graph states. 
	Firstly, notice that the excitation state is evidently different than the 
	graph state associated with the same graph. 
	The latter is strictly connected to the quantum circuits, in which edges of a graph are replaced by $CNOT$ gates. 
	Therefore, the graph state is closely related to its deterministic realization by the aforementioned two-qubit gates. 
	At the same time, encoding the closed formula of a graph state (written in computational basis), and its entanglement properties is rather a challenging task. 
	%At the same time, the exact formula for graph state (written in computational basis) and the entanglement properties are not that obvious. 
	For an excitation-state the situation is different. %occurs to be reverse. 
	The exact form of a state and its entanglement properties are straightforward, 
	contrary to its physical realization under quantum circuits. 
	In fact, the quantum circuit for excitation-states refers to the graph construction as well. Nevertheless, it is significantly more complex than for the graph states, see \cref{Circuit} for further discussion. 
	
%%%%%%%%%%%  ****  %%%%%
	
	\section{Entanglement properties}
	\label{cccc}
	
	An excitation-state $\ket{G}$ forms in general a genuinely entangled state. 
	As we shall see, its entanglement properties reflect the structure of the (hyper)graph. 
	We analyse reductions of the $N$-partite pure state $\rho^G=|G\rangle \langle G|$
	onto a bipartite system $S$ % of $k$ vertices ??
	and describe its entanglement
	%%%%%%%%%%%%%%%%%%%%%%
	in terms of the concurrence $C$ \cite{PhysRevLett.78.5022}. 
	This quantity is a faithful entanglement measure 
	easy to compute for any two-qubit mixed state 
	and for any bipartition of a multipartite pure state. 
	%%%%%%%%%%%
	
	For any two-qubit mixed state $\rho$  its concurrence reads
	\cite{PhysRevLett.78.5022},
	\begin{equation}
		\label{concurrence}
		C (\rho ):= \text{max} \{\lambda_1 -\lambda_2 -\lambda_3 -\lambda_4 ,0\},
	\end{equation}
	in which $ \lambda _{1},\ldots,\lambda _{4}$ denote square 
	roots of the eigenvalues of a Hermitian matrix ${\sqrt {\rho }}{\tilde {\rho }}{\sqrt {\rho }}$
	ordered  decreasingly,
	where
	\[
	{\displaystyle {\tilde {\rho }}=(\sigma _{y}\otimes \sigma _{y})\rho ^{*}(\sigma _{y}\otimes \sigma _{y})}
	\]
	is, the so-called, spin-flipped form of $\rho $. 
	Concurrence forms an entanglement monotone \cite{EntMonotones}, 
	and it is faithful, as it admits strictly positive values for entangled states and 
	it vanishes for separable states.

	The generalized concurrence $C_{v |\text{rest}}$ 
	measures the entnaglement between the subsystem $v$ and the rest of the system. 
	For pure states, this quantity is determined by the relevant reduced density matrix \cite{PhysRevLett.78.5022},
	$C_{v |\text{rest}} =2 \; \sqrt{\text{det} \rho_v} $,
	so its values belong to the range $[0,1]$.
	
	The distribution of bipartite quantum entanglement, measured by the concurrence amongst 
	$N$ qubits satisfies a monogamy inequality \cite{DistributedEntanglement,Osborne_2006}:
	\begin{equation}
		\label{eq5}
		C_{v_1  |v_2 \ldots v_N }^2 \geq C_{v_1 v_2 }^2 + \ldots C_{v_1   v_N}^2 ,
	\end{equation}
	where $v_2 \ldots v_N$ are vertices relevant to subsystems. 
	In that sense, the squared concurrence, also called {\sl tangle}
	\cite{PhysRevLett.78.5022},
	 correctly quantifies bipartite entanglement in multi-partite systems, inasmuch as the entanglement between subsystem $v_1$ and the rest of the system does not exceed the sum of entanglement between $v_1$ and any other qubit. %Saturation of \cref{eq5} measures the 

	For any subsystem $S$ of the studied system we define its complementary
	subsystem $\bar S$.  For any vertex $v \in S$
	we introduce the set $S_v:=S\setminus v$.
	% denotes the subsystem complementary to $S$, which respect to which
	% the partial trace is taken.
	Firstly, we show that for any subsystem $S$ and any distinguished particle $v \in S$ 
	which is far enough from the rest of the subsystem $S$, the reduced state $\rho^G_S:={\rm Tr}_{\bar S} \rho^G$ separates across 
	the partition $v \big\vert  S_v$.
	% $v \big\vert  S\setminus v$.
	% unclear notation changed (KZ)
	%see   Fig. \ref{figXX}.  
	Moreover, we show that this is necessary, but not sufficient condition for separability in any subsystem. 
	Secondly, we provide exact formulae relating the two-party concurrence $C_{vw}$ 
	between any two particles, $v$ and $w$,
	with the generalized concurrence $C_{v |\text{rest}}$ 
	between particle $v$ and the rest of the system. 
	We are going to use this result for quantifying entanglement of various examples in \cref{sec3}. 
	
	As we shall see, inequality (\ref{eq5}) is generally not saturated, 
	%hence we notify presence of non-bipartite entanglement. 
	which implies presence of multipartite entanglement.
	Finally, we discuss separability of an excitation-state $\ket{G}$
	and  provide necessary and sufficient conditions for separability
	in terms of edge structure of the relevant graph. 
	%Separability of the state $\ket{G}$ is a not-likely event, 
	Our results show that an excitation-state $\ket{G}$ 
	is typically  %%genuinely
	strongly entangled.

	We restrict the analysis to the $k$-uniform hypergraphs, and assume our graphs to be connected.
	Not connected graphs are relevant for the tensor product of two excitation-states, and hence might be analyzed separately. 
	%Moreover, we restrict the analysis to the $k$-hypergraphs. 
	%This is a natural assumption in spite of generalizing Dicke states, for which the number of excitations is constant. 
	%which were relevant with complete $k$-regular graphs. 
	%%More general studies are too much case dependent, nevertheless, exact formulas might be obtained. 
	%%%%%%%%%Finally, we provide the graphical illustration 
	% of entanglement characteristic of excitations (hyper)graph states. 
	By distance $d$ between vertices $v_0$ and $w$, we understand the minimal number of vertices $v_1 , \ldots v_d$, such that there exist (hyper)edges $e_1, \ldots, e_d$:
	\[
	v_{i-1}, v_i \in e_i
	\]
	and $v_d =w$. 
	
	Consider a subsystem $S$ with one distinguished particle $v$ and suppose that the distance between corresponding nodes in the graph is $d (v, w ) >2$ 
	for any $w \in S_v=S \setminus v$. 
	%{\color{red} see Fig. \ref{figXX}. }
	The reduced density matrix $\rho^G_S$ is separable across the partition 
	$v \big\vert S_v$. %\setminus v $. 
	Indeed, one can show that
	\begin{align*}
		\rho^G_S \propto  \; d_v
		&  \ket{1}_v \bra{1} \otimes \ket{0\ldots 0}_{v^c} \bra{0\ldots 0} + \\
		& \ket{0}_v \bra{0}   \otimes \Big(  \underbrace{\rho^G_{S_v} -d_v  \ket{0\ldots 0}_{v^c} \bra{0\ldots 0} }_{\text{all terms are non-negative}}\Big) ,
	\end{align*}
	where $d_v$ denotes the degree of the relevant vertex. 
	Obviously, presented form of reduced density matrix is completely separable.
	
	\begin{corollary}
		The reduced density matrix $\rho_G^S$ of a subsystem $S = v \cup S'$ is  completely separable with respect to the partition
		$v \big\vert S'$ for any particle $v$ such that $d (v, S' ) >2$. 
	\end{corollary}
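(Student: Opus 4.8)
The plan is to prove the structural statement already sketched in the display preceding the corollary, namely that $\rho^G_S$ carries no coherence between $\ket{0}_v$ and $\ket{1}_v$, so that it takes the classical--quantum form $\rho^G_S = \ket{1}_v\bra{1}\otimes A + \ket{0}_v\bra{0}\otimes B$ with $A,B$ positive semidefinite; any such state is manifestly separable across $v\,\vert\,S'$. First I would expand $\rho^G_S=\tr_{\bar S}\ket{G}\bra{G}=\tfrac{1}{|E|}\sum_{e,e'\in E}\tr_{\bar S}\ket{\psi_e}\bra{\psi_{e'}}$ and record the elementary fact that, since each $\ket{\psi_e}$ is a computational-basis vector, $\tr_{\bar S}\ket{\psi_e}\bra{\psi_{e'}}$ vanishes unless the restrictions of $e$ and $e'$ to $\bar S$ coincide, $e\cap\bar S=e'\cap\bar S$, in which case it equals $\ket{\psi_e|_S}\bra{\psi_{e'}|_S}$.

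Next I would split the edge set into $E_v=\{e:v\in e\}$ (with $|E_v|=d_v$) and its complement $E_0$. Since every vertex of an edge $e\in E_v$ other than $v$ is a neighbour of $v$, hence at distance $1<2$ from $v$ and so not in $S'$, one has $e\cap S=\{v\}$ and $\ket{\psi_e|_S}=\ket{1}_v\otimes\ket{0\cdots0}_{S'}$. Organising the surviving pairs by their value at $v$, the $(1,1)$ contributions force $e\setminus\{v\}=e'\setminus\{v\}$, i.e. $e=e'$, and so sum (up to the overall factor $1/|E|$) to $d_v\,\ket{1}_v\bra{1}\otimes\ket{0\cdots0}_{S'}\bra{0\cdots0}$; the $(0,0)$ contributions lie entirely in the $\ket{0}_v\bra{0}$ block.

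The crux --- and the step where both hypotheses are indispensable --- is to show that the off-diagonal $(1,0)$ and $(0,1)$ blocks vanish, since these are exactly the coherences that would spoil separability. Suppose a pair $e\in E_v$, $e'\in E_0$ survives, so that $e\cap\bar S=e'\cap\bar S$. By $k$-uniformity $|e\cap\bar S|=|e\setminus\{v\}|=k-1$, whereas if $e'$ were contained in $\bar S$ we would have $|e'\cap\bar S|=k$; hence $e'$ must meet $S'$, say $w\in e'\cap S'$. Choosing any $u\in e\setminus\{v\}$ (nonempty for $k\ge2$), we have $u\in e\cap\bar S=e'\cap\bar S\subseteq e'$, so the edges $e\ni v,u$ and $e'\ni u,w$ form a length-two path from $v$ to $w\in S'$, contradicting $d(v,w)>2$. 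Therefore no such pair exists and both off-diagonal blocks are empty.

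Finally I would identify the $(0,0)$ block with the displayed expression by running the same bookkeeping on $\rho^G_{S'}=\tr_{\overline{S'}}\ket{G}\bra{G}$: here the matching condition also fixes whether $v$ belongs to the edges, its $E_v$ part contributes precisely $\tfrac{d_v}{|E|}\ket{0\cdots0}_{S'}\bra{0\cdots0}$, and its $E_0$ part is the very sum appearing in the $(0,0)$ block, so that block equals $\rho^G_{S'}-\tfrac{d_v}{|E|}\ket{0\cdots0}_{S'}\bra{0\cdots0}$, as claimed. Positivity of this matrix is automatic, since it is a diagonal block of the positive semidefinite $\rho^G_S$, while the $(1,1)$ block is a positive multiple of a projector. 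Assembling the three blocks yields the advertised classical--quantum form, which is separable across $v\,\vert\,S'$. I expect the only genuinely delicate point to be the coherence-vanishing step, as it is where $k$-uniformity (ruling out $e'\subseteq\bar S$) and the distance bound (ruling out $e'$ touching $S'$) must be combined; everything else is bookkeeping on computational-basis overlaps.
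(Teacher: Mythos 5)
Your proposal is correct and follows essentially the same route as the paper: the paper simply asserts the classical–quantum decomposition $\rho^G_S \propto d_v\,\ket{1}_v\bra{1}\otimes\ket{0\cdots0}\bra{0\cdots0}+\ket{0}_v\bra{0}\otimes\bigl(\rho^G_{S_v}-d_v\ket{0\cdots0}\bra{0\cdots0}\bigr)$ with "one can show that," and you supply exactly the missing bookkeeping, in particular the key observation that the $\ket{1}_v\bra{0}_v$ coherences vanish because a surviving pair $e\in E_v$, $e'\in E_0$ would produce a length-two path from $v$ into $S'$. The argument is sound (given the paper's standing assumptions of $k$-uniformity and connectedness, which exclude the degenerate $k=1$ case you flag), so no further comparison is needed.
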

	
	%\begin{figure}[h!]
	%\label{figXX}
	%\caption{Particle $v$ in subsystem $S$ is far enough from the rest of the subsystem $S$, i.e the distance $d (v, w ) >2$ for any $w \in S'$. The reduced state $\rho^G_S$ is separable across the partition $v \big\vert S' $.
	%{\color{red} missing figure !}
	%}
	%\end{figure}
	
	%The \textit{degree} $d_v$ of the vertex $v$ is number of vertices adjacent to $v$. 
	%The \textit{neighborhood} $N_{vw}$ of two vertices is the number of vertices adjacent to both: $v$ and $w$. 
	
	The \textit{degree} $d_v$ of the vertex $v$ is the number of edges on which $v$ is incident. 
	The \textit{joint neighborhood} $n_{vw}$ of two vertices is the number of sets $W$ such that both: $W \cup v$ and $W \cup w$ constitute an edge. 
	The \textit{section} $s_{vw}$ is the number of edges on which $v$ and $w$ are incident. 
	Note that for graphs, the joint neighborhood is simply the number of vertices adjacent to both: $v$ and $w$, while section $s_{vw}=1$ or $s_{vw}=0$ depending whather vertices $v$ and $w$ are connected. 
	
	Consider now reduction of a pure state, $\rho^G=|G\rangle \langle G|$
	to the  subsystem $S= \{ v, w\}$ consisting of two parties. 
	Careful analysis of a (hyper)graph structure yields the following form of reduced density matrix $\rho^G_{vw}={\rm Tr}_{\bar S} \rho^G$:
	\begin{equation*}
		\rho^G_{vw}  =\dfrac{1}{|E|}
		{\displaystyle
			\begin{pmatrix}
				\lambda & 0 & 0 & 0 \\
				0 & d_{v}-s_{vw} & n_{vw}& 0 \\
				0 & n_{vw} & d_{w}-s_{vw}& 0 \\
				0 & 0 & 0 & s_{vw}\\
		\end{pmatrix}} ,
	\end{equation*}
	where the first element $\lambda =|E| - d_v -d_w   +s_{vw}$. 
	By the PPT test the reduced state $\rho^G_{vw}$ is entangled 
	iff $n_{vw}^2 > \lambda s_{vw}$,
	while  the exact amount of entanglement 
	can be characterized by the concurrence.

	%see \cref{calculation} for details. 
	To simplify the notation,
	for a given excitation state $\ket{G}$ and  two selected subsystems  $v$ and $w$,
	the concurrence of the reduced state $\rho_{vw}^G$, will be denoted as 
	\[
	C_{vw} := C (\rho_{vw}^G ) .
	\]
	We compute the two-party concurrence $C_{vw}$ according to \cref{concurrence}: % see \cref{calculation}:
	\begin{equation}
		\label{condition}
		C_{vw} =\text{max} \Big\{ 0, \dfrac{2}{|E|} \big( n_{vw} - \sqrt{s_{vw} \lambda } \big) \Big\}
	\end{equation}
	where $\lambda =|E| - d_v -d_w   +s_{vw}$. 
	%if the right hand side is positive, and $0$ otherwise. 
	
	Since concurrence is a faithful entanglement measure for 2-parties systems  \cite{EntMonotones,GeneralizedConcurrence}, the positivity of the right-hand side of 
	the above equation 
	% \cref{condition} 
	is necessary and sufficient for entanglement between subsystems $v$ and $w$. 
	Therefore we shall first analyze the case for which \cref{condition} 
	%takes non-positive values.
	reduces to zero. 
	Notice that the concurrence trivially vanishes if $n_{vw} =0$, which means that $v$ and $w$ have no common neighborhood. 
	This condition is fulfilled precisely in two situations, either $d(v,w)=1$ or $d(v,w)>2$. 
	The above observation %partially 
	matches the previous discussion 
	that the system $\rho^G_{vS'}$ always separates if  $d (v, S' ) >2$. 
	%There exist also a case if %trivial situation when 
	Consider now the case in which  \cref{condition} takes strictly positive values. 
	Indeed, for $v$ and $w$ of distance $d(v,w) =2$,
	the neighborhood $n_{vw} \neq 0$ and the section $s_{vw}=0$.
	
	\begin{corollary}
		The reduced 2-parties state $\rho^G_{vw}$ is entangled iff $d(v,w) =2$ or $d(v,w) =1$ and 
		\begin{equation*}
			n_{vw}^2 > s_{vw} \big(|E|-d_v-d_w+s_{vw} \big).
		\end{equation*}
	\end{corollary}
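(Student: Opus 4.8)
The plan is to read the result directly off the two-party concurrence formula \cref{condition}, which was already obtained from the explicit reduced density matrix $\rho^G_{vw}$, and then to translate its positivity into the stated graph-theoretic language through a case analysis on the distance $d(v,w)$. Since the concurrence is a faithful entanglement measure for two qubits, $\rho^G_{vw}$ is entangled if and only if $C_{vw}>0$. Both $s_{vw}$ and $\lambda = |E|-d_v-d_w+s_{vw}$ are non-negative (the latter counts edges avoiding both $v$ and $w$, and is $|E|$ times a diagonal entry of a density matrix), so the radicand in \cref{condition} is well defined and
\[
C_{vw}>0 \iff n_{vw} > \sqrt{s_{vw}\lambda} \iff n_{vw}^2 > s_{vw}\lambda ,
\]
which is exactly the PPT condition recorded below the matrix. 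It therefore suffices to evaluate this inequality in each distance regime.

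First I would dispose of the two extreme cases. If $d(v,w)>2$, then $v$ and $w$ admit no joint neighborhood, so $n_{vw}=0$; since the right-hand side $s_{vw}\lambda\geq 0$, the inequality fails and $\rho^G_{vw}$ is separable, recovering the earlier observation that the reduction separates whenever $d(v,S')>2$. If instead $d(v,w)=2$, then $v$ and $w$ lie on no common edge, whence $s_{vw}=0$, while they do share a neighborhood, so $n_{vw}\neq 0$; consequently $n_{vw}^2>0=s_{vw}\lambda$ and the state is entangled unconditionally. This settles the $d(v,w)=2$ branch of the equivalence.

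It remains to treat $d(v,w)=1$, in which case $v$ and $w$ share at least one edge and $s_{vw}\geq 1$. Substituting $\lambda=|E|-d_v-d_w+s_{vw}$ into $n_{vw}^2>s_{vw}\lambda$ yields precisely
\[
n_{vw}^2 > s_{vw}\big(|E|-d_v-d_w+s_{vw}\big),
\]
the displayed inequality. Collecting the three regimes then gives the stated biconditional.

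The algebra here is routine; the only step requiring care is the combinatorial dictionary between distance and the parameters $n_{vw},s_{vw}$, namely that $d(v,w)>2$ forces $n_{vw}=0$ and that $d(v,w)=2$ forces $s_{vw}=0$ together with $n_{vw}\neq 0$. For ordinary graphs these follow immediately from the definitions of joint neighborhood and section, but for general $k$-uniform hypergraphs one must check that the distance used here is compatible with the existence of a common $(k-1)$-set $W$ for which $W\cup v$ and $W\cup w$ are both edges. I expect this to be the main obstacle, and it is exactly the relationship asserted in the paragraph preceding the statement, which I would invoke directly.
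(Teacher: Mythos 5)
Your proposal is correct and follows essentially the same route as the paper: both read the entanglement criterion off the reduced density matrix via the PPT condition / positivity of the concurrence in \cref{condition}, and then translate it through a case analysis on $d(v,w)$ using $n_{vw}=0$ for $d(v,w)>2$ and $s_{vw}=0$, $n_{vw}\neq 0$ for $d(v,w)=2$. The hypergraph subtlety you flag at the end (that $d(v,w)=2$ must actually force $n_{vw}\neq 0$) is real, but the paper asserts it without further justification just as you do, so your treatment matches the paper's level of rigor.
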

	
	Roughly speaking, we observe the entanglement between 
	parties $v$ and $w$ if they share common neighbors. 
	On the other hand, entanglement  can disappear if both nodes
	%are too close to each other and they
	become the closest neighbors.   
	The inequality above is usually satisfied in almost complete (hyper)graphs. 
	For graphs with a local-like structure, it will be violated. 
	In such a case the bipartite entanglement will be present only for distant two parties. 
	Some statements in this spirit are presented below. 
	
	%\begin{corollary}
	%For a multi-partite graph $G_{V_1 \cdots V_\ell}$ the relevant excitations graph state $\ket{G}$ does not exhibit bipartite entanglement between parties from different sets $V_i \neq V_j$. \ald{is it separable? Yes, the answer here \cite{GeneralizedConcurrence}!} \ald{not 3-partite, not 4-partite, only bipartite chyba} \ald{for  3-partite, 4-partite,... graphs there is no bipartite entanglement but might be multipartite?!, so this collolary-in this form should be valid but the full separability not}
	%\end{corollary}

	Recall that a hypergraph might be identified with a pair $(V,E)$. 
	We define the \textit{product} of two disjoint hypergraphs $(V_1,E_1) \sqcup (V_2,E_2)$ as a hypergraph $(V_1 \cup V_2,E_1 \sqcup E_2  )$ with vertices being the union $V_1 \cup V_2$ of vertices sets and with edges of the following form:
	\[
	E_1 \sqcup E_2 : =\sum_{e_1 \in E_1 ,e_2 \in E_2} e_1 \cup e_2 .
	\] 
	In short, we say that such a hypergraph is a \textit{product hypergraph} with respect to the division $V_1 | V_2 $. 
	We derive the following criterion for separability of the excitation-state $\ket{G}$ corresponding to a $k$-hypergraph in terms of separation of  its edges. 
	
	\begin{proposition}
		The excitation-state $\ket{G}$ corresponding to a $k$-hypergraph is separable, $\ket{G} =\ket{G}_{V_1} \otimes \ket{G}_{V_2}$, iff it is a product hypergraph with respect to the division $V_1 | V_2 $.
	\end{proposition}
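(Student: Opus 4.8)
The plan is to pass from the hypergraph to the bipartite coefficient matrix of $\ket{G}$ across the cut $V_1 | V_2$, and to reduce the whole statement to the elementary fact that a $0$--$1$ matrix has rank one exactly when it is a combinatorial rectangle. The easy direction comes first. Suppose $G$ is the product hypergraph for $V_1 | V_2$, so every edge has the form $e = e_1 \cup e_2$ with $e_1 \in E_1 \subseteq 2^{V_1}$ and $e_2 \in E_2 \subseteq 2^{V_2}$, and distinct pairs give distinct edges since $e_1 = e \cap V_1$ and $e_2 = e \cap V_2$ are recovered from $e$. Because $V_1$ and $V_2$ are disjoint, the excitation pattern factorizes, $\ket{\psi_e} = \ket{\psi_{e_1}}_{V_1} \otimes \ket{\psi_{e_2}}_{V_2}$, and distributing the double sum over $E_1 \times E_2$ yields
\[
\ket{G} \propto \Big( \sum_{e_1 \in E_1} \ket{\psi_{e_1}}_{V_1} \Big) \otimes \Big( \sum_{e_2 \in E_2} \ket{\psi_{e_2}}_{V_2} \Big) \propto \ket{G}_{V_1} \otimes \ket{G}_{V_2},
\]
which is the asserted factorization into excitation states of $(V_1,E_1)$ and $(V_2,E_2)$.

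For the converse I would encode each edge $e$ by the pair $(a,b) := (e \cap V_1,\, e \cap V_2)$. This assignment is injective, and writing $\ket{a}_{V_1}$ for the computational basis vector of $V_1$ excited exactly on $a$, one has $\ket{G} \propto \sum_{e \in E} \ket{a}_{V_1} \otimes \ket{b}_{V_2}$. Hence the bipartite coefficient matrix $M$, with rows indexed by subsets $a \subseteq V_1$ and columns by subsets $b \subseteq V_2$, is (up to normalization) the $0$--$1$ incidence matrix of the pair set $P := \{(e \cap V_1,\, e \cap V_2) : e \in E\}$. By the Schmidt decomposition, $\ket{G}$ is a product state across $V_1 | V_2$ if and only if $M$ has rank one.

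The core combinatorial step, which I expect to be the crux of the argument, is the characterization of rank-one $0$--$1$ matrices. Such a matrix is an outer product $u v^{\mathsf T}$; since all its nonzero entries equal $1$, every nonzero row must coincide (and likewise every nonzero column), so its support is a full product set. Equivalently, $P = A \times B$ for some $A \subseteq 2^{V_1}$ and $B \subseteq 2^{V_2}$. Setting $E_1 := A$ and $E_2 := B$ then recovers $E = \{ a \cup b : a \in E_1,\, b \in E_2 \}$, i.e. $G$ is precisely the product hypergraph for the division $V_1 | V_2$, and it simultaneously exhibits the factors $\ket{G}_{V_1}$ and $\ket{G}_{V_2}$ as the excitation states of $(V_1,E_1)$ and $(V_2,E_2)$.

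I would close by checking the degenerate configurations, where all edges are contained in a single part: these correspond to taking the empty edge $\emptyset \in E_2$, giving the vacuum $\ket{0 \cdots 0}_{V_2}$ as the other factor, so they are covered by the same statement. The only genuine content is the rank-one argument of the previous paragraph; everything else is bookkeeping about how excitation patterns split across disjoint vertex sets, and the uniformity of the coefficients in $\ket{G}$ is exactly what turns the analytic rank-one condition into the combinatorial product-set condition.
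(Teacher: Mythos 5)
Your proof is correct, and the converse direction is argued along a genuinely different line than the paper's. The paper handles the "only if" direction by asserting two necessary conditions for separability --- that every edge splits its excitations between $V_1$ and $V_2$ in the same proportion, and that the edge set factors across the partition --- and then noting that these two conditions amount to $G$ being a product hypergraph; the crucial claim that separability forces the edge set to factor is essentially stated rather than derived. You instead pass to the bipartite coefficient matrix across the cut, observe that it is (up to normalization) the $0$--$1$ incidence matrix of the pair set $\{(e\cap V_1,\,e\cap V_2)\}$, invoke the Schmidt criterion that a pure state is a product iff this matrix has rank one, and then use the elementary fact that a rank-one $0$--$1$ matrix has product support $A\times B$. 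This buys a complete, self-contained derivation of the edge-factorization from separability, makes explicit that the uniformity of the amplitudes is what converts the analytic rank condition into a combinatorial one, and in fact never uses $k$-uniformity, so your argument covers arbitrary hypergraphs (with the paper's first condition then following automatically from uniformity once the edge set factors). The forward direction and the treatment of edges lying entirely in one part match the paper's.
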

	
	\begin{proof}
		On one hand, it is a straightforward observation that for a product hypergraph $(V_1 \cup V_2,E_1 \sqcup E_2  )$ the relevant excitation-state $\ket{G}$ is separable across the division $V_1 | V_2 $, i.e. $\ket{G} =\ket{G}_{V_1} \otimes \ket{G}_{V_2}$. 
		
		On the other hand, the following two conditions are necessary for the excitation-state $\ket{G}$ to be separable:
		\begin{enumerate}
			\item the number of excitations in $V_1$ and $ V_2 $ is the same, i.e. there are numbers $c_1,c_2$ such that $c_1+c_2 =k$ and for each edge $e\in E$, $e \cap V_i =c_i$;
			\item the set of edges factors across the partition $V_1 |V_2 $, i.e. there are sets $E_1\in P(V_1),E_2\in P(V_2)$, such that $E =\sum_{e_1 \in E_1 ,e_2 \in E_2} e_1 \cup e_2$.
		\end{enumerate}
		One may observe that those two conditions are equivalent to the fact that $G$ is the product hypergraph. 
		%
		%
		%Consider any bipartition of vertices $V_1 \cup V_2 =V$. 
		%It provides a bipartition of any edge $e^i \in E$ into $e_1^i \in  V_1 $ and $e_2^i \in V_2$. 
		%This bipartition has to be equivalues, i.e. for any $i,j$, $e^i_1 =e^j_1$ and $e^i_2 =e^j_2$. 
		%\ald{TBC}
	\end{proof}
	
	\noindent
	Some exemplary separable excitation-states are
	presented in \cref{figY,figYb}. 
	%%Notice that,   dwa razy podobny poczatek zdania..
	Verification of the above criterion is %rather
	straightforward. 
	Observe that for excitation-states $\ket{G}$ the
	separability criterion is fulfilled by the complete bipartite graph only. 
	Indeed, the first condition simplifies due to the fact that there might be edges between $V_1 \cup V_2 =V$ only. 
	The second condition, comes to completeness of the bipartite graph under the assumption of connectivity of the graph. 
	Following the usual notation in graph theory, let $K_{V_1 V_2}$ be a \textit{complete bipartite graph}, while $G= K_{V_1 \ldots V_k}$ \textit{complete multi-partite} $k$-hypergraph. % (with edges being unions of elements in sets $V_i$, one elements from each set $V_i$). 
	Recall that a multi-partite ($k$-partite) hypergraph $G$ is a hypergraph whose vertices can be partitioned into $k$ different independent sets. 
	Equivalently, it is a hypergraph that can be colored with $k$ colors.
	
	\begin{corollary}
		(Only for graphs) 
		\label{cor5}
		The excitation-state $\ket{G}$ is separable $\ket{G} =\ket{G}_{V_1} \otimes \ket{G}_{V_2}$ iff the relevant graph $G$ is complete bipartite graph, $G= K_{V_1 V_2}$. 
		In fact, such a state forms a tensor product of two $|W\rangle$-like states:
		\[
		\ket{G} =\ket{W}_{V_1} \otimes \ket{W}_{V_2} .
		\]
	\end{corollary}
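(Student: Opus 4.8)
The plan is to reduce everything to the preceding Proposition, which already characterizes separability of an excitation-state across $V_1|V_2$ as the statement that $G$ is a product hypergraph with respect to that partition. Since a graph is just a $2$-uniform hypergraph, I would specialize the two necessary-and-sufficient conditions of that Proposition to the case $k=2$. The excitation-number condition then reads $c_1+c_2=2$ with $c_i=|e\cap V_i|$ constant over every edge $e$, leaving only the three possibilities $(c_1,c_2)\in\{(2,0),(1,1),(0,2)\}$.

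First I would dispose of the unbalanced cases $(2,0)$ and $(0,2)$. If, say, $c_2=0$, then no edge meets $V_2$, so every vertex of $V_2$ is isolated; for a connected graph this forces $V_2=\emptyset$, a trivial split that I discard. This is exactly where the standing connectivity assumption is used, and it is the step most prone to a hidden gap, so I would state it explicitly rather than leave it implicit.

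The substantive case is $(1,1)$: every edge has exactly one endpoint in each of $V_1$ and $V_2$, so $G$ is bipartite. Here I invoke the factorization condition $E=\sum_{e_1\in E_1,e_2\in E_2}e_1\cup e_2$ of the Proposition, noting that with $c_1=c_2=1$ the sets $E_1,E_2$ consist of single vertices of $V_1$ and $V_2$ respectively. The union over all pairs then says precisely that every vertex appearing in $E_1$ is joined to every vertex appearing in $E_2$, i.e. the bipartite graph is complete between those two vertex sets; connectivity again forces $E_1$ to exhaust $V_1$ and $E_2$ to exhaust $V_2$, giving $G=K_{V_1V_2}$. The converse is immediate, since $K_{V_1V_2}$ is manifestly a product hypergraph (take $E_1,E_2$ to be the full sets of singletons), so the Proposition returns separability.

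Finally I would identify the tensor factors. The excitation-state of the complete $1$-uniform hypergraph on $V_i$, whose edges are exactly the singletons of $V_i$, is $\tfrac{1}{\sqrt{|V_i|}}\sum_{v\in V_i}\ket{0\cdots 1\cdots 0}$, which is by definition the W-like state $\ket{W}_{V_i}$. Reading the product structure off the complete bipartite graph therefore yields $\ket{G}=\ket{W}_{V_1}\otimes\ket{W}_{V_2}$, completing both directions. I expect the only real care to be needed in the bookkeeping of the case analysis and in the two appeals to connectivity; the remainder is a direct transcription of the Proposition to the $2$-uniform setting.
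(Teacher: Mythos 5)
Your proof is correct and takes essentially the same route as the paper: the authors also obtain the corollary by specializing the two conditions of the preceding Proposition to the $2$-uniform case, noting that the excitation-number condition forces each edge to straddle the partition and that the factorization condition together with connectivity yields completeness of the bipartite graph. Your explicit case analysis on $(c_1,c_2)$, the two appeals to connectivity, and the identification of the $1$-uniform factors as $W$ states merely spell out details the paper's two-sentence verification leaves implicit.
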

	
	\noindent
	We reformulate these separability criteria in terms of symmetries of the Dicke-like states. 
	%Observe that Dicke-like state with two excitations $\ket{D_N^2}_H$ is separable across bipartition $N_1|N_2 $ iff it is product of two $\ket{W}$ states, in other words:
	%\[
	%\ket{D_N^2}_H =\ket{D_{N_1}^1}_{S_{N_1}} \otimes \ket{D_{N_2}^1}_{S_{N_2}}.
	%\]
	%Therefore we get the following conclusion.
	
	\begin{corollary}
		A Dicke-like state  $\ket{D_N^2}_H$ with two excitations is separable across 
		the bipartition $N_1|N_2 $ iff
		its group of symmetry is equal to $\mathcal{S}_{N_1} \times \mathcal{S}_{N_2}$.
	\end{corollary}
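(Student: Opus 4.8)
The plan is to transport the statement to the purely graph-theoretic setting and then invoke the separability criterion of \cref{cor5}. By the earlier corollary identifying Dicke-like states with excitation-states, the state $\ket{D_N^2}_H$ (two excitations, so $k=2$) coincides with the excitation-state $\ket{G}$ of a connected, \emph{edge-transitive} graph $G=(V,E)$ on $N$ vertices, whose automorphism group $\mathrm{Aut}(G)$ equals the group of symmetry of the state. I would fix the bipartition $V=V_1\sqcup V_2$ with $|V_1|=N_1$, $|V_2|=N_2$, so that "$\mathcal{S}_{N_1}\times\mathcal{S}_{N_2}$'' denotes the Young subgroup of $\mathcal{S}_N$ fixing this partition, permuting each block freely. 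The whole problem then reduces to characterising when $G$ is the complete bipartite graph.

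First I would prove "separable $\Rightarrow$ symmetry group $=\mathcal{S}_{N_1}\times\mathcal{S}_{N_2}$''. Assuming $\ket{G}$ separates across $V_1|V_2$, \cref{cor5} forces $G=K_{V_1 V_2}$. It then remains the routine fact that $\mathrm{Aut}(K_{N_1,N_2})=\mathcal{S}_{N_1}\times\mathcal{S}_{N_2}$: the two colour classes are the only maximal independent sets of $K_{V_1V_2}$, so any automorphism must preserve them and may permute each independently. Combined with $\mathrm{Aut}(G)=$ symmetry group, this gives the claim.

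For the converse "symmetry group $=\mathcal{S}_{N_1}\times\mathcal{S}_{N_2}$ $\Rightarrow$ separable'', I would run an orbit analysis of the edges. Under the Young subgroup $\mathcal{S}_{N_1}\times\mathcal{S}_{N_2}$ the $2$-element subsets of $V$ split into exactly three orbits: pairs inside $V_1$, pairs inside $V_2$, and crossing pairs with one endpoint in each block. Since $G$ is edge-transitive, $E$ must be a single one of these orbits. The two "inside'' orbits yield a clique on one block together with isolated vertices of the other, hence disconnected graphs, which are excluded by the standing connectivity assumption. Therefore $E$ is exactly the crossing pairs, i.e.\ $G=K_{V_1 V_2}$, and \cref{cor5} makes $\ket{G}=\ket{W}_{V_1}\otimes\ket{W}_{V_2}$ separable.

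The main obstacle is the degenerate case $N_1=N_2$: there $\mathrm{Aut}(K_{N_1,N_2})$ is strictly larger than the Young subgroup, namely the wreath product $\mathcal{S}_{N_1}\wr\mathcal{S}_2$, because the two equal-sized blocks may additionally be swapped. I would resolve this by reading the bipartition $V_1|V_2$ as distinguished, i.e.\ by comparing with the subgroup of symmetries that fix the partition; under this convention the identification with $\mathcal{S}_{N_1}\times\mathcal{S}_{N_2}$ is exact, while for $N_1\neq N_2$ no caveat is needed since the block-swap is not a graph automorphism. A secondary point demanding care is to keep the equivalence $\ket{D_N^2}_H=\ket{G}$ compatible with the vertex labelling, so that "$\mathcal{S}_{N_1}\times\mathcal{S}_{N_2}$'' refers to the same concrete subgroup of $\mathcal{S}_N$ on both sides of the statement.
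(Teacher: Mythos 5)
Your proof is correct and follows the route the paper intends: the corollary is presented as a direct reformulation of \cref{cor5}, obtained exactly as you do by identifying $\ket{D_N^2}_H$ with the excitation-state of a connected edge-transitive graph whose automorphism group is the state's symmetry group, and the paper supplies no further argument beyond this. Your caveat for $N_1=N_2$ is well taken and worth recording --- the paper's own example $\ket{C_4}=\ket{D_4^2}_{\mathcal{C}_4}$, which is $K_{2,2}$ and separable across $13|24$, has symmetry group $\mathcal{D}_8\cong\mathcal{S}_2\wr\mathcal{S}_2$ rather than $\mathcal{S}_2\times\mathcal{S}_2$, so the ``only if'' direction holds literally only under the convention you propose (symmetries fixing the bipartition) or the restriction $N_1\neq N_2$.
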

	
	\noindent
	We can derive a similar conclusion to \cref{cor5} in the case of $k$-hypergraphs and multi-separability.

	\begin{corollary}
		\label{cor6}
		The $k$-regular excitation-state $\ket{G}$ separates $\ket{G} =\ket{G}_{V_1} \otimes \cdots \otimes \ket{G}_{V_k}$ iff the relevant hypergraph is complete multi-partite hypergraph $G= K_{V_1 \ldots V_k}$.   %In fact, 
		Such a state is a tensor product of $k$ states from the class $|W\rangle$,
		\[
		\ket{G} =\ket{W}_{V_1} \otimes \cdots \otimes \ket{W}_{V_k} .
		\]
	\end{corollary}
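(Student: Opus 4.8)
The plan is to prove the two implications separately, reducing the $k$-fold separation to the binary separability Proposition already established and then pinning down the edge structure by a short excitation count. Throughout I read ``$k$-regular'' as $k$-\emph{uniform}, consistently with the restriction to $k$-uniform hypergraphs made earlier, so that a complete multi-partite $k$-hypergraph $K_{V_1\cdots V_k}$ has as edges exactly the transversals $\{v_1,\ldots,v_k\}$ with $v_i\in V_i$, and $|E|=\prod_i |V_i|$.

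For the (routine) backward direction, I would assume $G=K_{V_1\cdots V_k}$ and compute $\ket{G}$ directly. Since the parts $V_i$ are mutually disjoint, each excitation vector $\ket{\psi_e}$ attached to an edge $e=\{v_1,\ldots,v_k\}$ factorises across the partition into single-excitation vectors supported in the respective parts. Summing over all edges, the multiple sum factorises and one reads off $\ket{G}=\bigotimes_{i=1}^k \ket{W}_{V_i}$, which is manifestly separable; this generalises verbatim the computation behind \cref{cor5}.

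For the forward direction, suppose $\ket{G}=\ket{G}_{V_1}\otimes\cdots\otimes\ket{G}_{V_k}$. First I would apply the binary separability Proposition to the coarse bipartition $V_1\,\big\vert\,(V_2\cup\cdots\cup V_k)$ and iterate (equivalently, induct on $k$ with \cref{cor5} supplying the base case $k=2$), concluding that each factor $\ket{G}_{V_i}$ is itself the excitation-state of a sub-hypergraph on $V_i$ and that every edge of $G$ is a disjoint union $e_1\cup\cdots\cup e_k$ of one edge $e_i$ from each part. Because $G$ is $k$-uniform, the identity $|e_1|+\cdots+|e_k|=k$ must hold for \emph{every} such choice, which forces each part to be uniform of some edge-size $c_i$ with $c_1+\cdots+c_k=k$. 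The key step is then the counting observation that, since there are $k$ nontrivial factors each contributing at least one excitation ($c_i\ge 1$), the constraint $\sum_{i=1}^k c_i=k$ forces $c_i=1$ for every $i$. Hence each $\ket{G}_{V_i}$ has only singleton edges; invoking connectivity of $G$ (so that no vertex is isolated) shows every vertex of $V_i$ occurs as such a singleton, whence $\ket{G}_{V_i}=\ket{W}_{V_i}$. Reassembling the factors gives $E=\{\{v_1,\ldots,v_k\}:v_i\in V_i\}$, i.e. $G=K_{V_1\cdots V_k}$.

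I expect the main obstacle to be the bookkeeping in reducing the $k$-fold tensor separation to the binary Proposition and in establishing that each factor is uniform with edge-size $c_i$; once the edge-size identity $\sum_i c_i=k$ is in hand, the pigeonhole conclusion $c_i=1$, the identification of each factor with $\ket{W}_{V_i}$, and the reconstruction of $G$ as the complete multi-partite hypergraph are all immediate.
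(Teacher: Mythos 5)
Your proposal is correct and follows essentially the route the paper intends: \cref{cor6} is stated there without an explicit proof as a ``similar conclusion'' to \cref{cor5}, and your argument --- iterating the binary product-hypergraph Proposition across the parts, using $k$-uniformity to force the part-wise edge sizes to satisfy $c_1+\cdots+c_k=k$ with each $c_i\ge 1$ and hence $c_i=1$, and invoking connectivity to identify each factor with $\ket{W}_{V_i}$ --- is exactly the generalization of the paper's proofs of the separability Proposition and of \cref{cor5}. No gaps.
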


	\begin{figure}[h!]
		\definecolor{pur}{RGB}{186,146,162}
		\definecolor{greeo}{RGB}{91,173,69}
		\begin{tikzpicture}
			\begin{scope}[scale=.3]
				\tikzstyle{every node}=[circle,draw,minimum size = 0.4cm]
				\tikzstyle{every node}=[circle,inner sep=4pt,draw=greeo!95!white,fill=greeo!15!white,thick,anchor=base];
				\node(1) at (0,5) {};
				\node(3) at (5,0) {};
				\tikzstyle{every node}=[circle,inner sep=4pt,draw=pur!95!white,fill=pur!15!white,thick,anchor=base];
				\node(2) at (5,5){};
				\node(4) at (0,0) {};
				
				\tikzstyle{every node}=[]
				\node[] at (0,5) {\footnotesize 1};
				\node[] at (5,5) { \footnotesize 2};
				\node[] at (5,0) { \footnotesize 3};
				\node[]  at (0,0) {\footnotesize 4};
				\node[] at (6.9,2.5)  {\Large $\cong$};
				
				\tikzstyle{every node}=[]
				%\node(1a) at (9,-2.5) {\footnotesize $\ket{C_4}= \ket{1100} +\ket{0110} +\ket{0011} +\ket{1001} $};
				%\node(2a) at (10.4,-4.4) { \footnotesize $=\big(\ket{01} +\ket{10}\big)_{13} \otimes \big(\ket{01} +\ket{10}\big)_{24}  $};
				
				\node(1a) at (21,5) {\footnotesize $\ket{C_4}= \ket{1100} +\ket{0110} +$};
				\node(2a) at (22.2,3.7) { \footnotesize $\ket{0011} +\ket{1001} $};
				%\node(1a) at (21,1.9) {\footnotesize $=\big(\ket{01} +\ket{10}\big)_{13}$};
				%\node(2a) at (22.2,0.1) { \footnotesize $\big(\ket{01} +\ket{10}\big)_{24} $};
				
				%\node[fill=bluegray!50](d) at (6,5) {4};
				\draw[ultra thick,gray] (1)--(2);
				\draw[ultra thick,gray] (2)--(3);
				\draw[ultra thick,gray] (3)--(4);
				\draw[ultra thick,gray] (1)--(4);
			\end{scope}

			\begin{scope}[shift={(-0.2,0)},scale=.3]
				\tikzstyle{every node}=[circle,draw,minimum size = 0.4cm]
				\tikzstyle{every node}=[circle,inner sep=4pt,draw=greeo!95!white,fill=greeo!15!white,thick,anchor=base];
				\node(1) at (10.5,5) {};
				\node(3) at (14.5,5) {};
				\tikzstyle{every node}=[circle,inner sep=4pt,draw=pur!95!white,fill=pur!15!white,thick,anchor=base];
				\node(2) at (9,0){};
				\node(4) at (16,0) {};
				
				\tikzstyle{every node}=[]
				\node[] at (10.5,5)  {\footnotesize 1};
				\node[] at (14.5,5) { \footnotesize 3};
				\node[] at (9,0) { \footnotesize 2};
				\node[]  at (16,0) {\footnotesize 4};
				
				%\node[fill=bluegray!50](d) at (6,5) {4};
				\draw[ultra thick,gray] (1)--(2);
				\draw[ultra thick,gray] (2)--(3);
				\draw[ultra thick,gray] (3)--(4);
				\draw[ultra thick,gray] (1)--(4);
			\end{scope}

			\begin{scope}[shift={(-2.1,2.5)},scale=.3]
				
				\tikzstyle{every node}=[circle,draw,minimum size = 0.4cm]
				\node(1) at (13,5) {};
				\node(2) at (17,5) {};
				\node(4) at (17,0) {};
				\node(5) at (13,0) {};
				\node(3) at (19,2.5) {};
				\node(6) at (11,2.5) {};
				
				\tikzstyle{every node}=[]
				\node[] at (13,5) {\footnotesize 1};
				\node[] at (17,5) { \footnotesize 2};
				\node[] at (17,0) { \footnotesize 4};
				\node[] at (13,0) {\footnotesize 5};
				\node[] at (19,2.5) { \footnotesize 3};
				\node[] at (11,2.5) {\footnotesize 6};
				
				\draw[ultra thick,gray] (1)--(2);
				\draw[ultra thick,gray] (2)--(3);
				\draw[ultra thick,gray] (3)--(4);
				\draw[ultra thick,gray] (4)--(5);
				\draw[ultra thick,gray] (6)--(5);
				\draw[ultra thick,gray] (6)--(1);
				
				\tikzstyle{every node}=[]
				\node(1a) at (26,5) {\footnotesize $\ket{C_6}= \ket{110000} +\ket{011000} +$};
				\node(2a) at (27.55,3.7) { \footnotesize $\ket{001100} +\ket{000110} + $};
				\node(2a) at (27.1,2.4) { \footnotesize $\ket{000011} +\ket{100001}  $};
				
			\end{scope}
		\end{tikzpicture}
		\caption{Two cycle graphs: $C_4$ and $C_6$ with the corresponding states $\ket{C_4}$ and $\ket{C_6}$ listed on the right. Observe that the cycle $C_4$ forms a complete bipartite graph, 
			$C_4 =K_{22}$. According to \cref{cor5}, $\ket{C_6}$ is genuinely entangled while $\ket{C_4}$ is separable, as 
			$\ket{C_4} =(\ket{01} +\ket{10})_{13} \otimes (\ket{01} +\ket{10})_{24} $. 
			%Even thought the bipartite entanglement in the state $\ket{C_6}$ is present for nodes with the same parity, the state $\ket{C_6}$ is genuinely entangled (according to \cref{cor5}).
			%%%% powtorzenie wykomentowane
		}
		\label{figY}
	\end{figure}
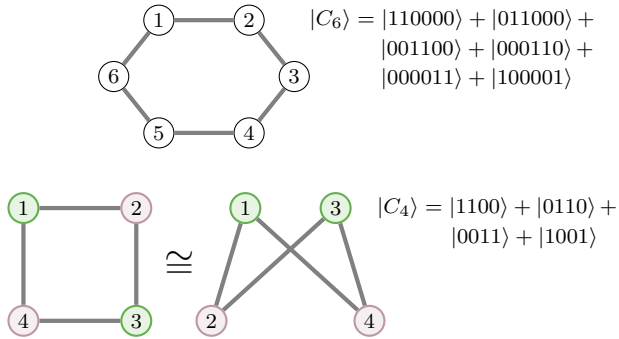
	
	\begin{figure}[h!]
		\definecolor{cof}{RGB}{219,144,71}
		\definecolor{pur}{RGB}{186,146,162}
		\definecolor{greeo}{RGB}{91,173,69}
		\definecolor{greet}{RGB}{52,111,72}
		\begin{tikzpicture}[thick,scale=3,
			vertex1/.style={circle,inner sep=4pt,draw=blue!95!white,fill=blue!15!white,thick,anchor=base},
			vertex2/.style={circle,inner sep=4pt,draw=greeo!95!white,fill=greeo!15!white,thick,anchor=base},
			vertex3/.style={circle,inner sep=4pt,draw=red!95!white,fill=red!15!white,thick,anchor=base}]
			\coordinate (A1) at (0,0) ;
			\coordinate (A2) at (0.6,0.2);
			\coordinate (A3) at (1,0);
			\coordinate (A4) at (0.4,-0.2);
			\coordinate (B1) at (0.5,0.5);
			\coordinate (B2) at (0.5,-0.5);
			\begin{scope}[thick,dashed,,opacity=0.6]
				\draw (A1) -- (A2) -- (A3);
				\draw (B1) -- (A2) -- (B2);
			\end{scope}
			\draw[fill=gray,opacity=0.2] (A1) -- (A4) -- (B1);
			\draw[fill=gray,opacity=0.24] (A1) -- (A4) -- (B2);
			\draw[fill=gray,opacity=0.07] (A3) -- (A4) -- (B1);
			\draw[fill=gray,opacity=0.12] (A3) -- (A4) -- (B2);
			\draw (B1) -- (A1) -- (B2) -- (A3) --cycle;
			
			\node[vertex1] at (A1)     {}; 
			\node(1) at (A1)  {\footnotesize 5};
			\node[vertex2] at (A2)     {}; 
			\node(1) at (A2)  {\footnotesize 4};
			\node[vertex1] at (A3)     {}; 
			\node(1) at (A3)  {\footnotesize 3};
			\node[vertex2] at (A4)     {}; 
			\node(1) at (A4)  {\footnotesize 2};
			\node[vertex3] at (B1)     {}; 
			\node(1) at (B1)  {\footnotesize 1};
			\node[vertex3] at (B2)     {}; 
			\node(1) at (B2)  {\footnotesize 6};
			
			\node(text) at (0.6,-0.8) { \footnotesize $(\ket{10} +\ket{01} )_{16} \otimes (\ket{10} +\ket{01} )_{24} \otimes (\ket{10} +\ket{01} )_{35} $};
		\end{tikzpicture}
		\caption{3-regular hypergraph on 6 vertices. Each hyperedge is represented by a triangle. Observe that the presented hypergraph is complete 3-partite hypergraph $K_{222}$. Indeed, vertices $1,6$ and $2,4$ and $3,5$ are coloured differently according to the partition. All hyperedges containing vertices of different colours are present. According to \cref{cor6} the state becomes separable across partition $16|24|35$, as it is indicated in the bottom.
		}
		\label{figYb}
	\end{figure}
	
	We shall discuss now how tight are the monogamy relations (\ref{eq5})
	for parties in an excitation-state. 
	Recall, that if aforementioned bound is saturated, the entanglement of a given part has 
	%is concentrated in 
	the bipartite form with other parties, 
	otherwise multipartite entanglement can be observed.
	
	For further discussion, we assume \textit{regularity} of a graph, 
	which means  that  the degree $d_v$ is constant for any vertex $v$. 
	Moreover, we assume an additional condition: 
	If vertices $v$ and $w$ are connected, they are connected with the same number of hyperedges. 
	In other words, the section $s_{vw} $ takes the same values depending on the distance between vertices:
	%{\color{red} $d(v,w)$ only, i.e. is $d$ a distance or a degree ? (see above)}
	\begin{align*}
		s_{vw}=&\begin{cases}
			s \quad \; \; &\text{for} \quad d(v,w)=1,\\
			0 \quad &\text{for} \quad  d(v,w)>1 .
		\end{cases} 
	\end{align*}
	We call such a graph a \textit{distance-1 regular graph}. 
	In particular, for standard graphs this value reads 1 for neighboring nodes and 0 otherwise. 
	We shall mention here a particular class of graphs widely discussed in the graph theory. 
	A \textit{distance-regular graph} \cite{BANG20151} is a regular graph such that for any two vertices $v$ and $w$, the number of vertices at distance $j$ from $v$ and at distance $\ell$ from $w$ depends only upon $j, \ell$, and the distance $ d(v, w)$. 
	Our assumptions on (hyper)graphs 
	%matches perfectly assumptions requires form a
	correspond to  distance-regular graphs with distances $d(v,w)=1$ and $d(v,w)=2$. 
	%In fact, most of examples presented in this paper are distance-regular graph. 
	%Nevertheless,
	% for our analysis this kind of regularity for higher distances vertices $d(v,w)>1$ is not required. 
	We conjecture  %, however,
	that regularity concerning the vertices of larger distances, $d(v,w)>2$ yields regularity in the amount of entanglement in subsystems consisting of 
	3 or more parties. 
	Unfortunately,  for larger systems  
	we have no satisfactory entanglement quantification \cite{Eltschka_2014} 
	required to formalize this statement. 
	From \cref{condition}, we derive an expression  
	for the concurrence $C_{vw}$ between subsystems
	corresponding to vertices of a distance-1 regular graph $G$.
	
	\begin{corollary}
		\label{A}
		For connected, and regular graph $G$ the concurrence $C_{vw}$  between two nodes $v$ and $w$ reads, 
		% takes the following values:
		\begin{align}
			\label{eq7}
			C_{vw}(G) = \dfrac{2}{|E|} \cdot
			\begin{cases}
				\text{max} \{ 0,C \} \quad  &\text{for} \quad  d(v,w)=1 \\
				n_{vw} \quad &\text{for} \quad d(v,w)=2  \\
				0 \quad &\text{for} \quad  d(v,w)>2
			\end{cases} 
		\end{align}
		where $C= n_{vw} - \sqrt{s (|E| -2d+s)}$, $d$ is the degree of each node, 
		and $s$ is a section for each adjacent vertices.
	\end{corollary}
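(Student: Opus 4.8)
The plan is to specialize the general concurrence formula \cref{condition} to the class of connected, distance-1 regular graphs; the argument is essentially a case-by-case substitution of the regularity hypotheses, so no genuine obstacle is expected. Recall that \cref{condition} reads $C_{vw} = \max\{0, \frac{2}{|E|}(n_{vw} - \sqrt{s_{vw}\lambda})\}$ with $\lambda = |E| - d_v - d_w + s_{vw}$. First I would invoke regularity to set $d_v = d_w = d$ for every pair of vertices, so that $\lambda = |E| - 2d + s_{vw}$ uniformly. It then remains only to determine $s_{vw}$ — and, in the degenerate cases, $n_{vw}$ — from the distance $d(v,w)$, using the distance-1 regularity assumption.

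For $d(v,w) = 1$ the distance-1 regularity gives $s_{vw} = s$, hence $\lambda = |E| - 2d + s$, and the bracketed expression becomes exactly $C = n_{vw} - \sqrt{s(|E| - 2d + s)}$; the formula thus reads $\frac{2}{|E|}\max\{0, C\}$, as claimed. For $d(v,w) = 2$ the same hypothesis forces $s_{vw} = 0$, so the square-root term drops out and $C_{vw} = \max\{0, \frac{2}{|E|} n_{vw}\}$; since $n_{vw}$ is a cardinality and hence non-negative, the maximum is redundant and the value simplifies to $\frac{2}{|E|} n_{vw}$. Finally, for $d(v,w) > 2$ I would appeal to the earlier observation that two vertices separated by distance greater than two share no common neighbourhood, giving $n_{vw} = 0$ (and likewise $s_{vw} = 0$), whence $C_{vw} = 0$. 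Collecting the three cases yields the piecewise expression \cref{eq7}.

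The only step deserving justification beyond bare substitution is the vanishing of $n_{vw}$ when $d(v,w) > 2$, which is what prevents the $d=2$ and $d>2$ branches from collapsing into one another; this is precisely the content of the separation argument preceding \cref{condition}, so I would cite that result directly rather than reprove it. In short, the substance of the corollary lies entirely in the bookkeeping of when $s_{vw}$ and $n_{vw}$ drop to zero as a function of distance, all of which is already furnished by the distance-1 regularity definition and the prior analysis of the reduced two-party density matrix.
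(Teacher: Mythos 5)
Your proposal is correct and follows exactly the route the paper intends: Corollary~\ref{A} is obtained by substituting the regularity assumptions ($d_v=d_w=d$, $s_{vw}=s$ for adjacent vertices and $s_{vw}=0$ otherwise) into \cref{condition}, together with the earlier observation that $n_{vw}=0$ whenever $d(v,w)>2$. The paper gives no separate proof beyond this substitution, so there is nothing to add.
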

	
	%, and  it is maximized for maximally mixed states (i.e. $\rho^G_v \propto \text{Id}$). 
	\noindent
	Elementary calculations lead to the following result.
	%value for the concurrence of excitation-states:
	
	\begin{corollary}
		\label{B}
		Square of the generalized concurrence $C_{v |\text{rest}}^2$ between 
		the particle $v$ and the rest of the system,
		expressed as a function of the number of edges $|E|$ 
		and the number of vertices $N$,
		 reads, 
		\begin{equation}
			\label{eq8}
			C_{v |\text{rest}}^2 = 4 \; \dfrac{d_v \big( |E|-d_v\big) }{|E|^2} =4 \; \dfrac{k(N-k)}{N^2},
		\end{equation}
		 see \cref{figZ}. 
		The second equation is valid under the assumption of the regularity of a
		$k$-hypergraph, i.e. the degrees of vertices are the same. % numbers.
	\end{corollary}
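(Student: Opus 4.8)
The plan is to compute the single-qubit reduced state $\rho_v$ directly from the definition of the excitation-state and then feed it into the pure-state formula $C_{v|\text{rest}} = 2\sqrt{\det \rho_v}$. First I would split the edge set according to incidence with $v$, writing $E = E_1 \sqcup E_0$, where $E_1$ collects the $d_v$ edges containing $v$ and $E_0$ the remaining $|E|-d_v$ edges. For each edge the basis vector $\ket{\psi_e}$ factorizes as $\ket{1}_v \otimes \ket{\phi_e}$ when $e \in E_1$ and as $\ket{0}_v \otimes \ket{\phi_e}$ when $e \in E_0$, with $\ket{\phi_e}$ the induced excitation pattern on the complement of $v$. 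Collecting terms then gives
\begin{equation*}
\ket{G} = \dfrac{1}{\sqrt{|E|}} \Big( \ket{1}_v \otimes \sum_{e \in E_1} \ket{\phi_e} + \ket{0}_v \otimes \sum_{e \in E_0} \ket{\phi_e} \Big).
\end{equation*}

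Next I would trace out the complement of $v$. The one nontrivial point, which is the step I expect to carry the whole argument, is that the two partial sums live in orthogonal sectors: because the hypergraph is $k$-uniform, every $\ket{\phi_e}$ with $e \in E_1$ carries exactly $k-1$ excitations on the complement of $v$, whereas every $\ket{\phi_e}$ with $e \in E_0$ carries exactly $k$; distinct excitation numbers force orthogonality, so all cross terms between the two blocks vanish. Within each block the vectors $\ket{\phi_e}$ are distinct computational basis states — two edges agree on their restriction to the complement of $v$ only if they coincide — hence orthonormal, and their squared norms merely count the edges in the block. Consequently $\rho_v$ is diagonal,
\begin{equation*}
\rho_v = \dfrac{1}{|E|} \begin{pmatrix} |E| - d_v & 0 \\ 0 & d_v \end{pmatrix},
\end{equation*}
so that $\det \rho_v = d_v(|E|-d_v)/|E|^2$; inserting this into $C_{v|\text{rest}} = 2\sqrt{\det \rho_v}$ yields the first claimed equality.

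For the second equality I would invoke regularity through a double count of vertex–edge incidences. Summing the edge size $k$ over all edges gives $k|E|$, while summing degrees over vertices gives $\sum_v d_v$; under the regularity assumption every degree equals a common value $d$, so $Nd = k|E|$ and hence $d_v = k|E|/N$. Substituting this into $d_v(|E|-d_v)/|E|^2$ and simplifying collapses the $|E|$-dependence and leaves $k(N-k)/N^2$, which completes the proof. Apart from the orthogonality bookkeeping in the second paragraph, everything here reduces to a short determinant evaluation and one counting identity.
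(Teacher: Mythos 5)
Your proof is correct and follows essentially the route the paper intends: the paper offers no explicit derivation beyond ``elementary calculations,'' but the only natural path is the one you take, namely computing the diagonal single-qubit reduction $\rho_v = \tfrac{1}{|E|}\,\mathrm{diag}(|E|-d_v,\, d_v)$ (your orthogonality bookkeeping via excitation number and distinctness of edge restrictions is the right justification) and inserting it into $C_{v|\text{rest}} = 2\sqrt{\det\rho_v}$, with the handshake identity $Nd = k|E|$ giving the second equality under regularity.
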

	
	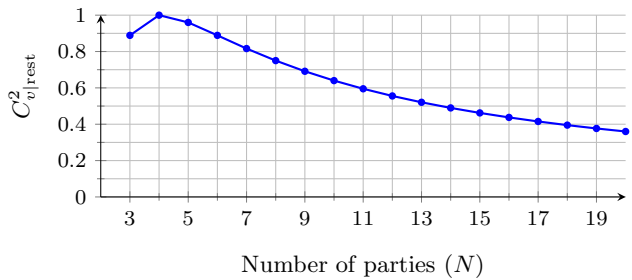
\begin{figure}[h!]
		\begin{tikzpicture}
			\begin{axis}[
				width=0.99\linewidth,
				height=4cm,
				axis lines=left,
				grid=both,
				legend style={draw=none, font=\small},
				legend cell align=left,
				legend style={at={(0.5,-0.25)}, anchor=north},
				label style={font=\small},
				ticklabel style={font=\footnotesize},
				xlabel style={at={(0.5, -0.05)}},
				ylabel style={at={(0.03, 0.5)}},
				ymin=0,
				xmin=2,
				ymax=1,
				xmax=20,
				xtick={3,5,...,19},
				ytick={0,0.2,...,1},
				minor tick num = 1,
%				xtick distance = 2,
				xlabel=Number of parties ($N$),
				ylabel={\color{white} shs}  $  C_{v|\text{rest}}^2$
				]
				\addplot[color=blue, mark=*, thick, mark options={scale=0.5}] coordinates {
					(3, 0.888889) (4, 1) (5,0.96) (6,0.888889) (7,0.8163) (8,0.75) (9,0.691358) (10,0.64) (11,0.595041) (12,0.555556) (13,0.52071) (14,0.489796) (15,0.462222) (16,0.4375) (17,0.415225) (18,0.395062) (19,0.376731) (20,0.36)};
			\end{axis}
		\end{tikzpicture}
		\caption{Squared concurrence (2-tangle) $C_{v |\text{rest}}^2$ between particle $v$ and the rest of the system depends on the number of vertices $N$ and uniformity $k$ only, so 
			it takes the same value for an arbitrary party of a regular $k$-hypergraph.
			In general, the tangle % squared concurrence 
			$C_{v |\text{rest}}^2$ takes the maximal value for $k=N/2$. For instance, in the case of graphs, it takes the maximal permissible value for concurrence for regular graphs on $N=4$ vertices. Asymptotically, $C_{v |\text{rest}}^2 \sim 8/N$.}
		\label{figZ}
	\end{figure}
	
	\section{Examples of Dicke-like states}
	\label{sec3}
	
	In  previous sections we presented two similar, but different,
	constructions of genuinely entangled states:  excitation-states $\ket{G}$, related to a graph $G$, and Dicke-like states, determined by  a subgroup,
	which restricts the summation in the classical expression of Dicke. 
	The  (hyper)graph construction is by far more general, 
	as it 
	%allows us for the entanglement analysis, and as we shall see, 
	leads to quantum circuits corresponding to excitation states $\ket{G}$. 
	A Dicke-like state can be considered as
	a special case of the excitation-states,
	 which exhibits a certain symmetry structure. 
	
	We combine both representations and investigate highly symmetric objects, such as \textit{regular polygons},  \textit{Platonic solids}, and \textit{regular plane tilings}. 
	It is worth mentioning that such symmetric objects, especially Platonic solids, were already used in various contexts concerning multipartite entanglement including: 
	quantification of entanglement of permutation-symmetric states \cite{Markham_2011,Martin15}, identification of quantumness of a state \cite{Goldberg_2020}, search for the maximally entangled symmetric state \cite{Aulbach_2010}, or general geometrical quantification of entanglement \cite{PhysRevA.85.032314,Giraud_2010} especially among states with imposed symmetries on the roots of Majorana polynomial \cite{Martin10,Martin15}. 
%	   or closely related identification of quantumness of a state
%	\cite{Goldberg_2020}.

	We present particular examples of quantum states shared among $N$ parties positioned in a highly symmetric way. 
	By using results from \cref{cccc}, we discuss their entanglement properties. 
	Firstly, we computed the concurrence in two-partite subsystems for all presented examples. 
	Recall, that for a given number of excitations $k$,
	the entanglement shared between a particular node $v$ and the rest of the system in an excitation-state depends only on the number of parties $N$, 
	\[
	C_{v |\text{rest}}^2 = 4\; \dfrac{k(N-k)}{N^2},
	\]
	as it was shown in \cref{eq8}.
	We define the \textit{entanglement ratio} $\Gamma_v$ for the node $v$ as:
	\[
	\Gamma_v :=\dfrac{\sum_{i \neq v} C_{v|i}^2}{ C_{v|\text{rest}}^2} \in \left[ 0,1\right] , 
	\]
	which measures the ratio of entanglement shared between particular parties in a two-partite way in comparison to the amount of entanglement shared in the multi-partite way. 
	Since concurrence satisfies monogamy inequality (\ref{eq5}),
	the parameter  $\Gamma_v$ takes values in the range $[0,1]$. 
	\cref{Comparison} compares the entanglement ratio of various examples in the context of the application of provided states.

	\subsection{Dicke states}
	
	The Dicke states $\ket{D_N^k}$ are excitation-states for complete $k$-regular hypergraphs on $N$ vertices, see \cref{ex1}. Their group of symmetry is the
	full permutation group $\mathcal{S}_N$. 
	In a particular case, if the number of excitations is equal to 1, 
	the Dicke state coincides with the state $\ket{W}_N$
	and the bound in \cref{eq5} is tight \cite{Osborne_2006}. 
	
	%Except for the special case, all vertices are connected. 
	% nie zrozumiale zdanie: jakie wierzcholki??
	%One can calculate parameters of the complete regular hypergraph.
	Consider any vertex $v$ and its neighbour $w$
	of a complete $k$-regular hypergraph. The parameteres read, 
	the section is $s = {{N-2}\choose{k-2}}$, the degree $d = {{N-1}\choose{k-1}}$, neighbourhood $n_{vw} ={{N-2}\choose{k-1}}$, and the number of all
	 hyperedges $|E|= {{N}\choose{k}}$. By \cref{eq7,eq8}, and elementary calculation, the concurrence in two-partite subsystems %and the total concurrence between partie $v$ and the rest of the systems, 
	takes the following value:
	\small
	\begin{align*}
		C_{v|w} &=2 {{N}\choose{k}}^{-1}  \Bigg( {{N-2}\choose{k-1}}- \sqrt{{{N-2}\choose{k}}{{N-2}\choose{k-2}}} \Bigg) .
	\end{align*} 
	\normalsize
	Observe that formula above is invariant 
	with respect to the change $k \leftrightarrow N-k$. 
	Indeed, Dicke states $\ket{D_N^k}$ and $\ket{D_N^{N-k}}$ are equivalent up to local change of basis $\ket{0} \leftrightarrow \ket{1}$, hence they share
	the same entanglement properties. 
	%The difference in equality \cref{eq5} is equal to. 
	The entanglement ratio $\Gamma_v$ for a given node $v$ is equal to
	\footnotesize
	\[
	\Gamma_v  (\text{D}^{k}_N) 
	=\frac{N-1}{
	{{N-1}\choose{k}} 
	{{N-1}\choose{k-1}}  }
	\Bigg( {{N-2}\choose{k-1}}- \sqrt{{{N-2}\choose{k}}{{N-2}\choose{k-2}}} \Bigg)^2
	.
	\]
	\normalsize
	In a paricular case, for $k=1,N-1$, the entanglement ratio saturates its range, $\Gamma_v =1$. 
	In these  two cases the Dicke states belong to the family $\ket{W}$ 
	and saturate the monogamy bound (\ref{eq5})
	derived in \cite{DistributedEntanglement,Osborne_2006}.
	
	\begin{figure}[h!]
		\begin{tikzpicture}
			\begin{scope}[shift={(2.2,0)},scale=.35]
				\def\r{2cm}
				\foreach \t in {0,...,4} {
					\node[circle,draw, outer sep=0mm, inner sep=0mm, minimum size=1mm] (a\t) at ({90+\t*72}:\r) {} ;
				}
				
				% draw the edges of the pentagon in a silly way
				% draws edges from node-center to node-center
				\draw (a0) edge (a1) ;
				\draw (a1) edge (a2) ;
				\draw (a2) edge (a3) ;
				\draw (a3) edge (a4) ;
				\draw (a4) edge (a0) ;
				\draw (a0) edge (a2) ;
				\draw (a1) edge (a3) ;
				\draw (a2) edge (a4) ;
				\draw (a3) edge (a0) ;
				\draw (a4) edge (a1) ;
				
				\tikzstyle{every node}=[]
				\node[] at (0,-3) {$\ket{D_5^2}$};
			\end{scope}
			\begin{scope}[shift={(4.2,0)},scale=.35]
				\def\r{2cm}
				\foreach \t in {0,...,5} {
					\node[circle,draw, draw=black!95!white,fill=black!95!white,outer sep=0mm, inner sep=0mm, minimum size=1mm] (a\t) at ({90+\t*60}:\r) {} ;
				}
				
				% draw the edges of the pentagon in a silly way
				% draws edges from node-center to node-center
				\draw (a0) edge (a1) ;
				\draw (a1) edge (a2) ;
				\draw (a2) edge (a3) ;
				\draw (a3) edge (a4) ;
				\draw (a4) edge (a5) ;
				\draw (a5) edge (a0) ;
				\draw (a1) edge (a3) ;
				\draw (a2) edge (a4) ;
				\draw (a3) edge (a5) ;
				\draw (a4) edge (a0) ;
				\draw (a5) edge (a2) ;
				\draw (a1) edge (a4) ;
				\draw (a2) edge (a5) ;
				\draw (a3) edge (a0) ;
				\draw (a1) edge (a5) ;
				\draw (a2) edge (a0) ;
				
				\tikzstyle{every node}=[]
				\node[] at (0,-3) {$\ket{D_6^2}$};
			\end{scope}

			\begin{scope}[shift={(0,0)},scale=.4]
				\def\r{2cm}
				\foreach \t in {0,...,5} {
					\node[circle,draw, draw=black!95!white,fill=black!0!white,outer sep=0mm, inner sep=0mm, minimum size=4mm] (a\t) at ({90+\t*60}:\r) {} ;   
					\node[circle,draw, draw=black!95!white,fill=black!95!white,outer sep=0mm, inner sep=0mm, minimum size=1mm] (a\t) at ({90+\t*60}:\r) {} ;
				} 
				
				\tikzstyle{every node}=[]
				\node[] at (0,-3) {$\ket{D_6^1} =\ket{W}_6 $};
			\end{scope}

			\begin{scope}[shift={(5.5,-0.2)},scale=2]
				\coordinate (A1) at (0,0) ;
				\coordinate (A2) at (0.8,0.2);
				\coordinate (A4) at (0.4,-0.2);
				\coordinate (B1) at (0.5,0.5);
				\begin{scope}[thick,dashed,,opacity=0.6]
					\draw (A1) -- (A2) ;
				\end{scope}
				\draw[fill=gray,opacity=0.25] (A1) -- (A4) -- (B1);
				\draw[fill=gray,opacity=0.12] (A2) -- (A4) -- (B1);
				\draw (B1) -- (A1) -- (A4) -- (A2) -- (B1) -- (A4) -- cycle;
				
				\tikzstyle{every node}=[circle,draw, draw=black!95!white,fill=black!95!white,outer sep=0mm, inner sep=0mm, minimum size=1mm] 
				\node[] at (A1)     {}; 
				\node[] at (A2)     {}; 
				\node[] at (A4)     {}; 
				\node[] at (B1)     {}; 
				\tikzstyle{every node}=[] 
				\node[] at (0.5,-0.5) {$\ket{D_4^3}$};
			\end{scope}  
		\end{tikzpicture}
		\caption{ Dicke state $\ket{\text{D}_N^k}$ is associated with a complete $k$-regular hypergraph, while 1-uniform complete hypergraph leads to 
		the state $\ket{\text{W}_N}$.
			The graph and the corresponding state  is completely symmetric, so 
			the labels of the nodes can be omitted. Dicke states $\ket{\text{D}_N^2}$ with two excitations are related to complete graphs on $N$ vertices.  
			A tetrahedron represents the Dicke states $\ket{\text{D}_4^1}$, $\ket{\text{D}_4^2}$ and $\ket{\text{D}_4^3}$, depending on whether we
			consider  the vertices, the edges or the faces of the tetrahedron. 
		Each  face represents the hyperedge of a 3-uniform complete hypergraph on 4 vertices,  relevant to $\ket{\text{D}_4^3}$. The graph formed by edges of a tetrahedron forms the complete graph on 4 vertices,  relevant to $\ket{\text{D}_4^2}$. Four vertices of the tetrahedron provide a 1-uniform graph, related to $\ket{\text{D}_4^1}$. This generalizes to the fact that the simplex of dimesnion $N-1$ represents the full family $\left\{\ket{\text{D}_N^k}\right\}_{k=1}^{N-1}$ of Dicke states of $N$ vertices.}
		\label{ex1}
	\end{figure}
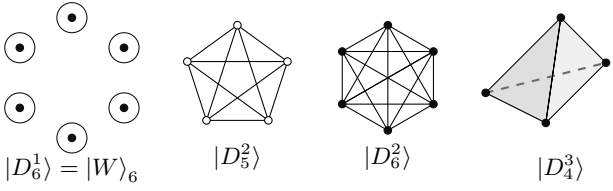

\begin{corollary}
		For the Dicke states $\ket{\text{D}^{k}_N}$ the entanglement ratio at infinite dimension is nonzero,
		\begin{equation*}
		\lim_{N\rightarrow\infty} \Gamma_v (\text{D}^{k}_N) = 2k-1- \sqrt{k(k-1)}.
		\end{equation*}
In particular, for states related to graphs, $k=1$, we find
		\begin{equation*}
			\lim_{N\rightarrow\infty} \Gamma_v(\text{D}^{1}_N) = 3 - 2\sqrt{2}.
		\end{equation*}
	\end{corollary}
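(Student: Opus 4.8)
The plan is to work directly from the closed-form expression for the entanglement ratio,
\[
\Gamma_v(\text{D}_N^k) = \frac{N-1}{\binom{N-1}{k}\binom{N-1}{k-1}}\left(\binom{N-2}{k-1} - \sqrt{\binom{N-2}{k}\binom{N-2}{k-2}}\right)^2,
\]
established above, and to extract its large-$N$ behaviour with $k$ held fixed. The only elementary input needed is that for fixed nonnegative integers $j$ and $c$ one has $\binom{N-c}{j} = \frac{N^j}{j!}\big(1+O(1/N)\big)$, so each binomial appearing is a polynomial in $N$ whose degree and leading coefficient are explicit.

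The first step is to analyse the parenthesised difference. The binomials $\binom{N-2}{k-1},\binom{N-2}{k},\binom{N-2}{k-2}$ have degrees $k-1,k,k-2$, so the geometric mean $\sqrt{\binom{N-2}{k}\binom{N-2}{k-2}}$ also has degree $k-1$, matching $\binom{N-2}{k-1}$. Hence the difference is asymptotic to $N^{k-1}C_0$ with
\[
C_0 = \frac{1}{(k-1)!} - \frac{1}{\sqrt{k!\,(k-2)!}} = \frac{1}{(k-2)!}\left(\frac{1}{k-1} - \frac{1}{\sqrt{k(k-1)}}\right),
\]
and its square behaves like $N^{2k-2}C_0^2$. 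In parallel, $\binom{N-1}{k}\binom{N-1}{k-1}\sim \frac{N^{2k-1}}{k!\,(k-1)!}$, so the prefactor satisfies $\frac{N-1}{\binom{N-1}{k}\binom{N-1}{k-1}}\sim \frac{k!\,(k-1)!}{N^{2k-2}}$. Multiplying, the powers $N^{2k-2}$ cancel exactly and the limit equals $k!\,(k-1)!\,C_0^2$. Using $k!\,(k-1)! = k(k-1)^2[(k-2)!]^2$ together with the factored form of $C_0$ reduces this to $k(k-1)^2\big(\tfrac{1}{k-1}-\tfrac{1}{\sqrt{k(k-1)}}\big)^2$, which expands to $k-2\sqrt{k(k-1)}+(k-1)=2k-1-2\sqrt{k(k-1)}$; specialising to the graph case ($k=2$, i.e.\ $2$-uniform hypergraphs) returns $3-2\sqrt2$.

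The only genuine subtlety — and the step I would treat most carefully — is justifying that the subleading $O(N^{k-2})$ corrections in the binomial expansions do not survive into the limit. Because the parenthesis is a difference of two quantities of the same degree $k-1$, a cancellation of leading coefficients (that is, $C_0=0$) would promote the next-order terms and invalidate the naive leading-order computation. I would therefore verify at the outset that $C_0\neq 0$, equivalently that $[(k-1)!]^2 \neq k!\,(k-2)!$; since $k!\,(k-2)! = k(k-1)[(k-2)!]^2$ while $[(k-1)!]^2 = (k-1)^2[(k-2)!]^2$, and $k(k-1)\neq(k-1)^2$ for all $k\ge2$, the leading term persists and dominates, legitimising the term-by-term asymptotics. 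The boundary value $k=1$ is handled separately and trivially: there $\text{D}_N^1=\ket{\text{W}}_N$, the monogamy bound (\ref{eq5}) is saturated, and $\Gamma_v\equiv1$, in agreement with the general formula evaluated at $k=1$.
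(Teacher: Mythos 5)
Your approach---direct term-by-term asymptotics of the closed-form expression for $\Gamma_v(\text{D}^k_N)$ quoted from the text, together with the check that the two degree-$(k-1)$ leading coefficients inside the parenthesis do not cancel---is exactly the (implicit) route of the paper, and your computation is correct. Note, however, that your answer $2k-1-2\sqrt{k(k-1)}=(\sqrt{k}-\sqrt{k-1})^2$ differs from the displayed claim $2k-1-\sqrt{k(k-1)}$ by a factor of $2$ on the radical; the printed version must be a misprint, since it already exceeds $1$ at $k=2$ whereas $\Gamma_v\in[0,1]$, and only your version reproduces the quoted value $3-2\sqrt{2}$ in the graph case $k=2$ (the corollary's ``$k=1$'' is likewise a misprint for $k=2$, as you correctly inferred: $k=1$ gives the $\ket{\text{W}}$ states with $\Gamma_v\equiv 1$, consistent with your formula).
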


	\subsection{Cyclic states}
	
	The simplest non-trivial subgroup of the permutation group
	$\mathcal{S}_N$ is a cyclic group $\mathcal{C}_N$. 
	In general states $\ket{D_N^k}_{\mathcal{C}_N}$ are \textit{translationally invariant}, i.e. invariant under a cyclic permutation of qubits  \cite{TranslationallyInvariantStates,watson2020complexity}. 
	 Family of such states is widely considered in several 1D models 
	 applied in condensed-matter physics, like XY model or the  Heisenberg model. 
	For further considerations, assume that the number of excitations is equal to $k=2$. 
	According to \cref{Dickelike}, we can construct the Dicke-like state $\ket{D_N^2}_{\mathcal{C}_N}$ by taking the superposition of all elements:
	\[
	\ket{0\ldots 0110\ldots 0},
	\] 
	where excitations are always on adjacent positions, see \cref{CyclicStates}. 
	
	On the other hand,  the state $\ket{D_N^2}_{\mathcal{C}_N}$ can be constructed as an excitation-state. To this end consider the cyclic graph on $N$ vertices. 
	The relevant excitation-state, denoted by $\ket{C_N}$, 
	matches perfectly $\ket{D_N^2}_{\mathcal{C}_N}$. 
	The group of automorphisms of a cyclic graph $C_N$ forms
	a \textit{Dihedral} group $\mathcal{D}_{2N}$, where the lower index 
	stands for the number of elements in the group $|\mathcal{D}_{2N} |=2N$. 
	It describes the group of symmetries of a regular $N$-polygon, which 
	consists of rotations and reflections of the figure.
	% its index $N$ refers to the number of vertices or edges.
	Each reflection is an idempotent element, the order of rotations may vary. 
	Using those properties of a cyclic graph $C_N$, we conclude that the group of symmetries
	of the quantum state $\ket{D_N^2}_{\mathcal{C}_N} = \ket{C_N}$ is a 
	dihedral group $\mathcal{D}_{2N}$. 
	Notice that the systems with dihedral $\mathcal{D}_{2N}$ symmetries were consider in a context of correlation theory of the chemical bond \cite{ding2020concept}. 
	Molecules invariant under a rotation and inversion were investigated ibidem. 
	\cref{AllStellar} presents the stellar representation of the six qubit state
	$\ket{C_6}$, which serves as an example of the relation
	between the Dicke-like states  and the construction of excitation-states. 
	%Eventhou construction of Dicke-like states is more intuitive,
	% and remains closer to the original Dicke sta
	%Construction of the excitation-graph state is more general, reviles entanglement properies
	
	In order to analyze entanglement properties of $\ket{C_N}$
	we calculate the parameters of the cyclic graph $C_N$. 
	The number of edges $|E|= N$, while for any vertex $v$ its  degree 
	reads, $d_v=2$. 
	Any two vertices $v$ and $w$ might be related in three following ways: either they are neighbours and then $s=1$, $n_{vw} =0$; or they are of distance $d(v,w)=2$ and then $n_{vw}=2$, or the distance $d(v,w)>0$. 
	The concurrence in two-partite subsystems takes the following value:
	\small
	\begin{align*}
		C_{v|w} ({C_N}) &=
		\begin{cases}
			2/N \quad \; \; &\text{for} \quad d(v,w)=2\\
			0 \quad &\text{otherwise} 
		\end{cases} ,
	\end{align*} 
	\normalsize
	with the small correction for $N=4$, where $C^2_{v|w} =1$ for  vertices
	of distance two. 
	The entanglement ratio $\Gamma_v$ for a given node reads,
	\[
	\Gamma_v (C_N) =\dfrac{1}{N-2},
	\]
	with the same correction in the case  $N=4$, for which $\Gamma_v=1$.
	
	\begin{figure}[h!]
		\begin{tikzpicture}
			\definecolor{greeo}{RGB}{91,173,69}
			\begin{scope}[shift={(-2.5,0)},scale=.45]
				\def\r{2cm}
				\foreach \t in {0,...,5} {
					\node[circle,draw, draw=black!95!white,fill=black!95!white,outer sep=0mm, inner sep=0mm, minimum size=1.5mm] (a\t) at ({90+\t*60}:\r) {} ;
				}
				
				% draw the edges of the pentagon in a silly way
				% draws edges from node-center to node-center
				\draw[thick,gray] (a0) edge (a1) ;
				\draw[thick,gray]  (a1) edge (a2) ;
				\draw[thick,gray]  (a2) edge (a3) ;
				\draw[thick,gray]  (a3) edge (a4) ;
				\draw[thick,gray] (a4) edge (a5) ;
				\draw[thick,gray] (a5) edge (a0) ;

				\draw[greeo]  (a0) edge (a2) ;
				\draw[greeo]  (a1) edge (a3) ;
				\draw[greeo]  (a2) edge (a4) ;
				\draw[greeo]  (a3) edge (a5) ;
				\draw[greeo]  (a4) edge (a0) ;
				\draw[greeo]  (a5) edge (a1) ;
				\tikzstyle{every node}=[]
				\node[] at (0,-3) {$\ket{C_6}$};
			\end{scope}

			\begin{scope}[shift={(0,0)},scale=.45]
				\def\r{2cm}
				\foreach \t in {0,...,6} {
					\node[circle,draw, draw=black!95!white,fill=black!95!white,outer sep=0mm, inner sep=0mm, minimum size=1.5mm] (a\t) at ({90+\t*(360/7)}:\r) {} ;
				}
				
				% draw the edges of the pentagon in a silly way
				% draws edges from node-center to node-center
				\draw[thick,gray] (a0) edge (a1) ;
				\draw[thick,gray]  (a1) edge (a2) ;
				\draw[thick,gray]  (a2) edge (a3) ;
				\draw[thick,gray]  (a3) edge (a4) ;
				\draw[thick,gray] (a4) edge (a5) ;
				\draw[thick,gray] (a5) edge (a6) ;
				\draw[thick,gray] (a6) edge (a0) ;

				\draw[greeo]  (a0) edge (a2) ;
				\draw[greeo]  (a1) edge (a3) ;
				\draw[greeo]  (a2) edge (a4) ;
				\draw[greeo]  (a3) edge (a5) ;
				\draw[greeo]  (a4) edge (a6) ;
				\draw[greeo]  (a5) edge (a0) ;
				\draw[greeo]  (a6) edge (a1) ;
				\tikzstyle{every node}=[]
				\node[] at (0,-3) {$\ket{C_7}$};
				\node[] at (0,-4.6) {$\ket{C_7} = \ket{1100000}+\ket{0110000}+\ket{0011000} +\ket{0001100}+$};
				\node[] at (-1.25,-5.6) {$\ket{0000110}+\ket{0000011}+\ket{1000001} $};
			\end{scope}    
			
			\begin{scope}[shift={(2.5,0)},scale=.45]
				\def\r{2cm}
				\foreach \t in {0,...,7} {
					\node[circle,draw, draw=black!95!white,fill=black!95!white,outer sep=0mm, inner sep=0mm, minimum size=1.5mm] (a\t) at ({90+\t*45}:\r) {} ;
				}
				
				% draw the edges of the pentagon in a silly way
				% draws edges from node-center to node-center
				\draw[thick,gray] (a0) edge (a1) ;
				\draw[thick,gray]  (a1) edge (a2) ;
				\draw[thick,gray]  (a2) edge (a3) ;
				\draw[thick,gray]  (a3) edge (a4) ;
				\draw[thick,gray] (a4) edge (a5) ;
				\draw[thick,gray] (a5) edge (a6) ;
				\draw[thick,gray] (a6) edge (a7) ;
				\draw[thick,gray] (a7) edge (a0) ;

				\draw[greeo]  (a0) edge (a2) ;
				\draw[greeo]  (a1) edge (a3) ;
				\draw[greeo]  (a2) edge (a4) ;
				\draw[greeo]  (a3) edge (a5) ;
				\draw[greeo]  (a4) edge (a6) ;
				\draw[greeo]  (a5) edge (a7) ;
				\draw[greeo]  (a6) edge (a0) ;
				\draw[greeo]  (a7) edge (a1) ;
				\tikzstyle{every node}=[]
				\node[] at (0,-3) {$\ket{C_8}$};
			\end{scope}  
		\end{tikzpicture}
		\caption{Three cycle graphs: $C_6$, $C_7$ and $C_8$ indicated by gray edges. One of the 
			corresponding states, $\ket{C_7}$, is listed below. 
			Entanglement  on two-parties subsystem %$C_{vw}$
			is vanishing except of parties of distance two, 
			%Non-vanishing concurrence is indicated by green lines connecting relevant parties. 
			$d(v,w)=2$, 
			for which the concurrence $C_{vw}$ reads,
			$2/6, 2/7$ and $2/8$ for states  $\ket{C_6}$, $\ket{C_7}$, and $\ket{C_8}$,
			respectively. 
			Non-vanishing concurrence is indicated by green lines connecting relevant parties,
			so  that the  graph of entnaglement splits into two for an even number of parties,
			$N=6$ or $8$. 
			%This phenomenon might suggest that $\ket{C_6}$ and $\ket{C_8}$ is separable. According to \cref{cor5}, however, genuine entanglement in states $\ket{C_6}$ and $\ket{C_8}$ is observed.
		}
		\label{CyclicStates}
	\end{figure}
	
	We shall emphasize that the cyclic and Dihedral groups were already successfully used for constructing fully symmetric states with additional (cyclic/Dihedral) symmetry imposed on roots of related Majorana polynomials \cite{Martin15}. 
	In our work, instead of imposing additional symmetries on roots, we restrict the symmetrization procedure in (\ref{eq2}) to the subgroup $H<\mathcal{S}_N$ of the symmetric group $\mathcal{S}_N$.
	
	\subsection{Platonic states}
	\label{Platonic states}
	
	Platonic solids --  highly symmetric geometric objects 
	in ${\mathbbm R}^3$
	--  were used to construct quantum states in various contexts \cite{Giraud_2010,PhysRevA.85.032314,Goldberg_2020}.  
	With any  Platonic solid we associate 	in this work
	two quantum excitation-states. 
	Firstly, we can construct an excitation-state simply by looking at the edges of a solid. 
	Second, we may associate faces of a solid with hyperedges of a hypergraph. 
	For instance, the tetrahedron is linked to the Dicke state $\ket{\text{D}_4^2}$ (by reading edges) and $\ket{\text{D}_4^3}$ (by reading faces), see \cref{ex1}. 
	
	We use the notation $\ket{P^e}_N$ for Platonic states constructed by looking at the edges of a solid, while $\ket{P^f}_N$ refers to the construction involving faces ($N$ denotes the number of vertices in related Platonic solid). 
	
	Due to the extraordinary symmetry of Platonic solids, it is not surprising that both constructions of quantum state lead to Dicke-like states. 
	In order to show it, we need to find appropriate groups, which action yield to Platonic states. 
	Consider the automorphism group of relevant graphs or hypergraphs. 
	An elementary argument from the representation theory shows that it is determined by the group of symmetries of the Platonic solid.
	
	\begin{fact}
		Consider a Platonic solid. Each permutation $\sigma \in \mathcal{S}_{|V|}$ of vertices which preserves the structure of faces or edges represents a symmetry of the solid.
	\end{fact}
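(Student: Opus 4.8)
The plan is to prove the nontrivial direction: that a vertex permutation preserving the combinatorial edge or face structure is induced by an orthogonal transformation of $\mathbb{R}^3$ preserving the solid. First I would reduce the face case to the edge case. Place the solid with its centroid at the origin and write $v_1,\ldots,v_N\in\mathbb{R}^3$ for its vertices, which all lie on one circumscribed sphere and span $\mathbb{R}^3$. In every Platonic solid a pair of vertices is an edge exactly when it lies in precisely two faces: for the triangular-faced solids any two co-facial vertices are already adjacent, while for the square and pentagonal faces a face-diagonal lies in a single face. Hence a permutation preserving the set of faces also preserves the edge graph, and it suffices to treat an automorphism $\sigma$ of the $1$-skeleton $G=(V,E)$.

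Let $A$ be the adjacency matrix of $G$ and $P_\sigma$ the permutation matrix of $\sigma$ on $\mathbb{R}^N$; preservation of adjacency is exactly $P_\sigma A=AP_\sigma$. The geometric ingredient is that the neighbours of any vertex $v_i$ form a regular polygon in a plane orthogonal to the axis through $v_i$, so their sum points along $v_i$: one has $\sum_{j\sim i}v_j=\mu\,v_i$ with a single constant $\mu$ (namely $-1,0,1,\sqrt5,\sqrt5$ for the tetrahedron, octahedron, cube, icosahedron and dodecahedron). Equivalently, each of the three coordinate vectors $c_k=((v_1)_k,\ldots,(v_N)_k)$ is an eigenvector of $A$ for the eigenvalue $\mu$, and since the vertices span $\mathbb{R}^3$ these are independent and span a three-dimensional subspace $U\subset\mathbb{R}^N$; the coordinate map $\iota\colon\mathbb{R}^3\to U$, $\mathbf{e}_k\mapsto c_k$, is a similarity, because $\sum_i v_iv_i^{\mathsf T}$ commutes with the irreducible action of the solid's symmetry group on $\mathbb{R}^3$ and is therefore a scalar matrix.

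The one spectral fact I would then use is that for each of the five solids $\mu$ has multiplicity exactly three, so $U$ is the whole eigenspace $\ker(A-\mu I)$. As $P_\sigma$ commutes with $A$ it preserves every eigenspace, hence $P_\sigma(U)=U$. Conjugating $P_\sigma|_U$ by the similarity $\iota$ produces a linear map $O\colon\mathbb{R}^3\to\mathbb{R}^3$ with $Ov_i=v_{\sigma(i)}$, and since $P_\sigma$ is orthogonal and $\iota$ is a similarity, $O$ is orthogonal. Thus $O\in O(3)$ maps the vertex set onto itself according to $\sigma$, i.e. it is a genuine symmetry of the solid realising $\sigma$, which is the claim.

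The main obstacle is precisely the multiplicity step: the harmonic relation only yields $U\subseteq\ker(A-\mu I)$, and the equality $\dim\ker(A-\mu I)=3$ is what forces the abstract commuting matrix $P_\sigma$ to act on the three geometric coordinates rather than leaking into other eigenvectors sharing the value $\mu$. I would settle it by the direct case-by-case computation of the five adjacency spectra recorded above; a more structural justification, by decomposing the vertex permutation representation into irreducibles, is possible but still requires checking that no second irreducible constituent happens to carry the eigenvalue $\mu$.
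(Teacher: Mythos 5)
Your proposal is correct. The paper itself offers no proof of this Fact: it is asserted after a one-sentence appeal to ``an elementary argument from the representation theory,'' and the subsequent discussion (diagonals of the cube, the five cubes inscribed in the dodecahedron) serves only to identify the symmetry groups, not to establish that every combinatorial automorphism is geometrically realized. Your spectral argument is a complete and valid instantiation of the argument the paper merely gestures at: the reduction of the face case to the edge case via the ``edge $\Leftrightarrow$ pair contained in exactly two faces'' characterization is sound for all five solids; the harmonic relation $\sum_{j\sim i}v_j=\mu v_i$ and the listed values of $\mu$ are correct; the Schur-lemma step showing $\sum_i v_iv_i^{\mathsf T}$ is scalar (so that $\iota$ is a similarity) is valid because the rotation group acts irreducibly on $\mathbb{R}^3$; and you correctly isolate the one genuinely load-bearing input, namely that $\mu$ has multiplicity exactly $3$ in each of the five adjacency spectra, without which $P_\sigma$ could mix the coordinate subspace $U$ with other eigenvectors. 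That multiplicity claim checks out case by case ($-1$ in $K_4$, $0$ in $K_{2,2,2}$, $1$ in $Q_3$, $\sqrt{5}$ in the icosahedral and dodecahedral graphs), so the argument closes. The only cosmetic caveat is that the final identification $Ov_i=v_{\sigma(i)}$ versus $v_{\sigma^{-1}(i)}$ depends on the convention for $P_\sigma$, which does not affect the conclusion.
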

	\noindent
	Since groups of symmetries of Platonic solids are transitive on edges and faces, 
	any such state can be realized by an action of the aforementioned group.

	There exist five Platonic solids: a self-dual tetrahedron and two dual pairs
	the same group of symmetries:
	a) cube and octahedron and b) dodecahedron and icosahedron. 
	Any symmetry of a solid determines a permutation $\sigma \in \mathcal{S}_N$ on vertices. 
	For tetrahedron, the reverse statement holds true. 
	For the first dual pair, consider a cube. 
	Symmetries of the cube coincide with permutations of its four diagonals plus
	one non-rotation symmetry, the 
	point reflection with respect to the center.
	For the second dual pair, consider dodecahedron.
	One can embed five cubes in the dodecahedron,
	forming the compound of five cubes.
	Any rotation of dodecahedron determines an even permutation of the cubes.  
	As in the previous case, there exists additionally a reflection symmetry with respect 
	to  the center.  This classical analysis 
	%is concluded below.
	implies the following statement.
	
	%There are only five Platonic solids; a self-dual tetrahedron and two dual pairs: cube and octahedron; dodecahedron and icosahedron. 
	%We discuss shortly groups of symmetries of all those solids. 
	%Dual solids have the same group of symmetries. 
	%Any symmetry of a tetrahedron determines a permutation $\sigma \in \mathcal{S}_4$ on vertices. 
	%For tetrahedron, the reverse statement holds true. 
	%Hence, the symmetric group of tetrahedron is the symmetric group $\sigma \in \mathcal{S}_4$. 
	%The first dual pair: cube and octahedron is presented on \cref{fig6}. Notice that each of them have four diagonals. 
	%Any rotation of those solids determines a permutation of the four diagonals. 
	%In fact, it is the isomorphism between diagonals permutation and solid rotations. 
	%There is one non-rotation symmetry of cube and octahedron: the symmetry about their centres. 
	%In all, the group of symmetries of cube and octahedron is $\mathcal{S}_4 \times \mathbb{Z}_2$. 
	%The second dual pair is dodecahedron and icosahedron, see \cref{fig6}. 
	%One can embed five cubes in the dodecahedron. 
	%Any rotation of dodecahedron determines an even permutation of the cubes. 
	%In fact, it is the isomorphism between those permutations and solid rotations. 
	%As in the previous case, there is additionally a symmetry about the center. 
	%Alternatively, one may consider permutation of tetrahedra embedded into the icosahedron. 
	%The group of symmetries of dodecahedron and icosahedron is $\mathcal{S}_4 \times \mathbb{Z}_2$.
	
	\begin{corollary}
		\label{prop3}
		The symmetry group of  Platonic states  $\ket{P^e}_4$ and  $\ket{P^f}_4$
		related to  the tetrahedron is the full symmetric group $\mathcal{S}_4$. Indeed, 
		\[
		\ket{P^e}_4 = \ket{D^2_4} ,\quad \ket{P^f}_4 = \ket{D^3_4}.
		\]
		The symmetry group of $\ket{P^e}_6$ and $\ket{P^f}_6$ related to the octahedron,
		and $\ket{P^e}_8$, $\ket{P^f}_8$ related to the cube, is 
		\[
		\mathcal{S}_4 \times \mathcal{S}_2, 
		\]
		non-trivially embedded into $\mathcal{S}_6$ and $ \mathcal{S}_8$, respectively. 
		The symmetry group of states $\ket{P^e}_{12}$ and  $\ket{P^f}_{12}$ related to
		the  dodecahedron, and $\ket{P^e}_{20}$, $\ket{P^f}_{20}$ related to  icosahedron, is 
		\[
		\mathcal{A}_5 \times \mathcal{S}_2, 
		\]
		non-trivially embedded into $\mathcal{S}_{12}$ and $ \mathcal{S}_{20}$, respectively. 
	\end{corollary}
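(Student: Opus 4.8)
The plan is to reduce the entire statement to two facts already available in the excerpt: first, that the symmetry group of an excitation-state equals the automorphism group of the underlying (hyper)graph (the corollary stated at the end of \cref{hyper}); and second, that for a Platonic solid the automorphisms of its edge-graph and of its face-hypergraph are precisely the geometric symmetries of the solid (the Fact stated just above, together with its converse). Granting these, the problem becomes the purely classical one of identifying the full symmetry group -- rotations together with orientation-reversing isometries -- of each of the five solids, and of recognizing how that group sits inside the vertex-permutation group $\mathcal{S}_N$.

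I would begin with the tetrahedron, where the identification with Dicke states is immediate. Its four vertices are pairwise adjacent, so the edge-graph is the complete graph $K_4$ and the face-hypergraph is the complete $3$-uniform hypergraph on four vertices; by the identification of Dicke states with excitation-states of complete $k$-hypergraphs (\cref{ex1}) this gives $\ket{P^e}_4 = \ket{D_4^2}$ and $\ket{P^f}_4 = \ket{D_4^3}$ at once. Every permutation of the four vertices preserves both complete structures, so the automorphism group is all of $\mathcal{S}_4$; crucially, there is no additional factor of $\mathcal{S}_2$ precisely because the tetrahedron is not centrally symmetric, so no central inversion arises.

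For the two dual pairs I would invoke the classical descriptions already set up in the text preceding the statement. For the cube the rotation group is the $\mathcal{S}_4$ permuting the four body-diagonals, and the only further isometry is the central inversion, which commutes with every rotation and is not itself a rotation; this yields the direct product $\mathcal{S}_4 \times \mathcal{S}_2$, and by duality the same group governs the octahedron. For the dodecahedron the rotations induce the even permutations $\mathcal{A}_5$ of the five inscribed cubes, and again the central inversion supplies an independent commuting factor $\mathcal{S}_2$, giving $\mathcal{A}_5 \times \mathcal{S}_2$ for both dodecahedron and icosahedron. In each case the induced action on vertices is vertex-transitive, so the resulting subgroup of $\mathcal{S}_N$ is not conjugate to a Young (intransitive) subgroup fixing a block decomposition -- this is the sense in which the embeddings into $\mathcal{S}_6$, $\mathcal{S}_8$, and the symmetric groups on the remaining vertices are ``non-trivial,'' and I would record this transitivity explicitly.

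The main obstacle is the converse half of the Fact: the Fact guarantees that every geometric symmetry is a combinatorial automorphism, but to pin down the symmetry group exactly one must rule out spurious automorphisms of the abstract (hyper)graph not realized by any isometry. I would close this gap by a counting argument -- exhibiting the geometrically realized symmetries as a subgroup of the stated order and checking that it already exhausts the full automorphism group. For the octahedron this is transparent, since its edge-graph is the complete tripartite graph $K_{2,2,2}$, whose automorphism group has order $48 = |\mathcal{S}_4 \times \mathcal{S}_2|$; the dodecahedral and icosahedral graphs have automorphism group of order $120 = |\mathcal{A}_5 \times \mathcal{S}_2|$. Matching these orders against the geometrically realized subgroups forces equality. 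The face-hypergraph cases are handled identically, since an automorphism of the combinatorial structure of such a solid preserves its vertices, edges, and faces simultaneously, so the edge and face constructions carry the same symmetry group.
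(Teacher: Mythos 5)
Your route is essentially the paper's: reduce the symmetry group of the excitation-state to the automorphism group of the edge-graph or face-hypergraph, then identify that group with the full isometry group of the solid via the classical descriptions (all of $\mathcal{S}_4$ for the self-dual tetrahedron with no central inversion; the four body-diagonals plus central inversion giving $\mathcal{S}_4\times\mathcal{S}_2$ for the cube--octahedron pair; the five inscribed cubes plus central inversion giving $\mathcal{A}_5\times\mathcal{S}_2$ for the dodecahedron--icosahedron pair). The one genuine difference is how the nontrivial inclusion $\mathrm{Aut}(\text{graph})\subseteq\mathrm{Sym}(\text{solid})$ is handled: the paper simply asserts it as a Fact justified by ``an elementary argument from the representation theory'' (note that the Fact as stated in the paper is exactly this nontrivial direction, not the trivial one as you describe it), whereas you verify it concretely by matching the order of the geometrically realized subgroup against $|\mathrm{Aut}(K_{2,2,2})|=48$, $|\mathrm{Aut}(Q_3)|=48$ and the order $120$ for the dodecahedral and icosahedral graphs. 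That counting step makes your argument self-contained where the paper's is not, so it is a worthwhile addition; the only place you should tighten it is the final sentence transferring the result from edge-graphs to face-hypergraphs, where ``preserves vertices, edges, and faces simultaneously'' needs the explicit observation that adjacency is recoverable from the face structure (two vertices are joined by an edge iff they lie on exactly two common faces), which gives $\mathrm{Aut}(\text{face-hypergraph})\subseteq\mathrm{Aut}(\text{edge-graph})$ and lets the same order count close the face case.
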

	
	\noindent
	Note that the Platonic states listed above form non-trivial examples of symmetric states. 
	In particular, in the case of dodecahedron
	the alternating symmetry $A_5$ is observed, which is not easy to construct. 
	%As we discuss before, states with this group of symmetry are rather difficult to construct.
	
	%\begin{figure}[h]
	%\begin{center}
	%\includegraphics[width=0.5\linewidth]{fig6.jpg}
	%\includegraphics[width=0.3\linewidth]{fig7.jpg}
	%\caption{All symmetries of cube and octahedron determines a permutations of the four diagonals. One of five cubes possible to embedded in the dodecahedron. The symmetry group of dodecahedron is determined by even permutation of such cubes.}
	%\label{fig6}
	%\end{center}
	%\end{figure}
	
	%After a short digression on representation of Platonic solids, we focus attention on 
	For Platonic states determined by the edges of the solid,
	the concurrence in two-partite subsystems takes positive value only for nodes of distance two. 
	For such a nodes $1$ and $2$, the concurrence $C_{12}  $ and the entanglement ratio $\Gamma_v$ is presented in  Table I.
	%\ref{tab1}.
	\begin{table}[ht]
		\label{tab1}
		\centering % used for centering table
		\begin{tabular}{l c c c c c} % centered columns (4 columns)
			\hline%inserts double horizontal lines
			& $\quad \; \ket{P^e}_4\;$ & $\;\ket{P^e}_6$ & $\;\ket{P^e}_{8}$ &$\;\ket{P^e}_{12}$&$\;\ket{P^e}_{20}$ \\ [0.5ex] % inserts table
			%heading
			\hline % inserts single horizontal line
			Concurrence $C_{12}  $& 0.333 & 0.667 & 0.333 & 0.133 & 0.067 \\ [1ex]% [1ex] adds vertical space
			Ent. Ratio $\Gamma_v$ & 0.333& 0.500 & 0.444 & 0.160 & 0.074\\
			\hline %inserts single line
		\end{tabular}
		\caption{Concurrence  $C_{12}$ and entanglement ratio $\Gamma_v$
		for two-qubit systems obtained by partial trace of $N-2$ subsystems of distance-two of Platonic states determined by the edges of the solid.}
	\end{table}
	
	%\noindent
	%For states constructed by looking at faces, we present a similar table, in comparison to Dicke states with the same number of excitations and nodes. \ald{policz i wpisz do tabeli}
	%\begin{table}[ht]
	%\centering % used for centering table
	%\begin{tabular}{l c c c c c} % centered columns (4 columns)
	%\hline %inserts double horizontal lines
	% & $\quad \; \ket{P^f}_4\;$ & $\;\ket{P^f}_6$ & $\;\ket{P^f}_{8}$ &$\;\ket{P^f}_{12}$&$\;\ket{P^f}_{20}$ \\ [0.5ex] % inserts table
	%heading
	%\hline % inserts single horizontal line
	%Concurrence $C^2_{12}  $& .333 & .333 & .444 & .027 & .089 \\ [1ex]% [1ex] adds vertical space
	%Deviation $\Delta_1$ & .333& .444 & .5 & .048 & .247\\ 
	%\hline
	%&&&&&\\ [-1.6ex]
	%\hline %inserts double horizontal lines
	% & $\quad \; \ket{D^3_4}\;$ & $\;\ket{D^3_6}$ & $\;\ket{D^4_8}$ &$\;\ket{D_{12}^5}$&$\;\ket{D_{20}^3}$ \\ [0.5ex] % inserts table
	%heading
	%\hline % inserts single horizontal line
	%Concurrence $C^2_{12}  $& .333 & .333 & .444 & .027 & .089 \\ [1ex]% [1ex] adds vertical space
	%Deviation $\Delta_1$ & .333& .444 & .5 & .048 & .247\\
	%\hline %inserts single line
	%\end{tabular}
	%\end{table}

	\begin{figure}
		\def \phi {1.617}
		\begin{tikzpicture}[
			x={(-0.86in, -0.5in)}, y = {(0.86in, -0.5in)}, z = {(0, 1in)},
			rotate = 22,
			scale = 0.25,
			every node/.style = {
				circle, fill = black!95, inner sep = 0pt, minimum size = 0.2cm
			},
			foreground/.style = {ultra thick, gray },
			background/.style = { dashed, gray }
			]
			\definecolor{greeo}{RGB}{91,173,69};
			\coordinate (9) at (0, -\phi*\phi,  \phi);
			\coordinate (8) at (0,  \phi*\phi,  \phi);
			\coordinate (12) at (0,  \phi*\phi, -\phi);
			\coordinate (5) at (0, -\phi*\phi, -\phi);
			\coordinate (7) at ( \phi, 0,  \phi*\phi);
			\coordinate (3) at (-\phi, 0,  \phi*\phi);
			\coordinate (6) at (-\phi, 0, -\phi*\phi);
			\coordinate (4) at ( \phi, 0, -\phi*\phi);
			\coordinate (2) at ( \phi*\phi,  \phi, 0);
			\coordinate (10) at (-\phi*\phi,  \phi, 0);
			\coordinate (1) at (-\phi*\phi, -\phi, 0);
			\coordinate (11) at ( \phi*\phi, -\phi, 0);
			
			\draw[greeo] (7) -- (1);
			\draw[greeo] (7) -- (10);
			\draw[greeo] (7) -- (5);
			\draw[greeo] (7) -- (4);
			\draw[greeo] (7) -- (12);

			\draw[foreground] (10) -- (3) -- (8) -- (10) -- (12) -- (8);
			\draw[foreground] (4) -- (12) -- (2) -- (4) -- (11) -- (2);
			\draw[foreground] (9) -- (3) -- (7) -- (9) -- (11) -- (7);
			\draw[foreground] (7) -- (8) -- (2) -- cycle;
			\draw[background] (12) -- (6) -- (10) -- (1) -- (6) -- (5) -- (1)
			-- (9) -- (5) -- (11);
			\draw[background] (5) -- (4) -- (6);
			\draw[background] (3) -- (1);

			\foreach \n in {1,...,12}
			\node at (\n) {};
			\node[circle,draw=gray!95!white, fill = greeo!10, inner sep = 0pt, minimum size = 0.6cm] at (7) {\footnotesize 1};
			\node[circle,draw=gray!95!white, fill = greeo!95, inner sep = 0pt, minimum size = 0.2cm] at (10) {};
			\node[circle,draw=gray!95!white, fill = greeo!95, inner sep = 0pt, minimum size = 0.2cm] at (12) {};
			\node[circle,draw=gray!95!white, fill = greeo!95, inner sep = 0pt, minimum size = 0.2cm] at (4) {};
			\node[circle,draw=gray!95!white, fill = greeo!95, inner sep = 0pt, minimum size = 0.2cm] at (5) {};
			\node[circle,draw=gray!95!white, fill = greeo!95, inner sep = 0pt, minimum size = 0.2cm] at (1) {};
		\end{tikzpicture}
		\caption{Icosahedron, one of five Platonic solids. Its edges (indicated with gray)
			determine the excitation state $\ket{P^e}_{20}$ on 20 qubits. 
			For a chosen node $1$, the concurrence of the two-party subsystem $C_{1 v}$ is vanishing except for subsystems corresponding to distance-two nodes
			indicated by green lines and green nodes. }
		\label{PlatonicStates}
	\end{figure}
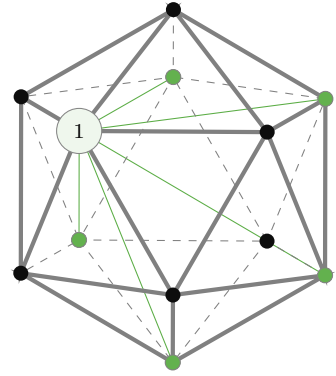

	\begin{figure}[h!]
		\subfloat{
			\begin{blochsphere}[radius=1.1 cm,tilt=15,rotation=-20]
				\drawBallGrid[style={opacity=0.1}]{30}{30}
				
				\drawGreatCircle[style={dashed}]{0}{0}{0}
				
				\drawAxis[style={draw=black}]{0}{0}
				
				\drawAxis[color=black]{90}{0}
				
				\labelLatLon{up}{90}{0};
				\labelLatLon{down}{-90}{90};
				%    \node[circle,fill,red] at (up){.};
				\node[circle,fill,red] at (0,1.1){.};
				\node[circle,fill,red] at (0,1.3){.};
				\node[circle,fill,red] at (0,1.5){.};
				\node[circle,fill,red] at (0,1.7){.};
				%    \node[circle,fill,red] at (down){.};
				\node[circle,fill,red] at (0,-1.1){.};
				\node[circle,fill,red] at (0,-1.3){.};
				%    \node[above] at (up) {{ \footnotesize $ \ket{\phi_3},\ket{\phi_4},\ket{\phi_5},\ket{\phi_6}$ }};
				\node[above] at (0,1.8) {{ \footnotesize $ \ket{\phi_3},\ket{\phi_4},\ket{\phi_5},\ket{\phi_6}$ }};
				\node[below] at (0,-1.4) {{ \footnotesize $\ket{\phi_1},\ket{\phi_2}$}};
				\labelLatLon{one}{0}{0};
			\end{blochsphere}  
		}
		\begin{tikzpicture}
			\begin{scope}[scale=.3]
				\tikzstyle{every node}=[] 
				\node[] at (2,8) {$\curvearrowleft \mathcal{S}_6$};
				\node[] at (2,6) {$\curvearrowleft \mathcal{C}_6$};
				\node[] at (2,4) {$\curvearrowleft \mathcal{D}_{12}$};
				\node[] at (1.8,2) {$\curvearrowleft \mathcal{S}_4 \times \mathcal{C}_2$};
			\end{scope}  
			
			\begin{scope}[scale=.3]
				\tikzstyle{every node}=[] 
				\node[] at (6,8) {$\Rightarrow$};
				\draw [decorate,decoration={brace,amplitude=4pt}]
				(5,7)--(5,3) node[midway, right, font=\footnotesize, xshift=2pt] {$\Rightarrow$};
				\node[] at (6,2) {$\Rightarrow$};
			\end{scope}  
			
			\begin{scope}[scale=.3]
				\tikzstyle{every node}=[] 
				\node[] at (8,8) {$\ket{D_6^2}$};
				\node[] at (8,5) {$\ket{C_6}$};
				\node[] at (8,2) {$\ket{P^e}_6$};
			\end{scope}  
		\end{tikzpicture}
		\caption{Stellar representation of the Dicke-like states $\ket{D_6^2}_H$ relevant to the action of various groups $H <\mathcal{S}_6$. 
			Acting by the full symmetric group $\mathcal{S}_6$ on a Bloch sphere we obtain the Dicke state $\ket{D_6^2}$. By action of a cyclic group $\mathcal{C}_6$ or the dihedral group $\mathcal{D}_{12}$, we obtain the cyclic state $\ket{C_6}$. Choosing action of $\mathcal{S}_4 \times \mathcal{C}_2$, as described in \cref{Platonic states}, we obtain the 
			Platonic state $\ket{P^e}_6$.}
		\label{AllStellar}
	\end{figure}
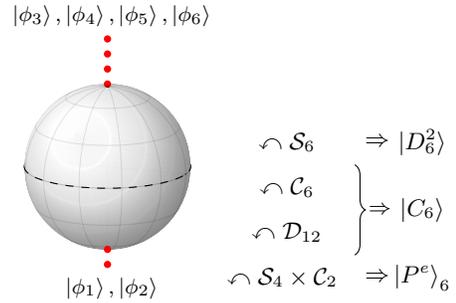

\subsection{Regular $m$-polytope families}
	
	The line of thinking applied to the Platonic solids can be extended to the regular polytopes in higher dimensions. Such families include the self-dual $m$-simplices and $m$-hypercubes with dual $m$-orthoplexes. Each of these polytopes provides us with a set of $k$-uniform hypergraphs for $1 \leq k \leq m-1$ defined by the set of their $k$-dimensional hyperedges. For the sake of this section we will denote the states corresponding to the $k$-dimensional hyperedges of $m$-simplex as $\ket{\text{S}^k_m}$ and analogously $\ket{\text{B}^k_{2^m}}$ for $m$-hypercubes and $\ket{\text{O}^k_{2m}}$ for $m$-orthoplexes, where the lower index stands for the number of subsystems $N$ in the state ($N=m$, $2^m$, $2m$ for $m$-simplex, $m$-hypercube, and $m$-orthoplexe respectively). 
	
	First, we may state a simple observation concerning $m$-simplices
	
	\begin{observation}
		The states related to the $m$-simplices are equivalent to the Dicke states,
		$$
			\ket{\text{S}^k_m} = \ket{D^k_m}.
		$$
		and the symmetry group is trivially $\mathcal{S}_m$.
	\end{observation}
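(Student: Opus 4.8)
The plan is to recognize this observation as an immediate specialization of the identification of Dicke states with complete uniform hypergraphs established earlier in the paper, so that essentially no independent calculation is required. First I would fix the combinatorial description of the relevant hypergraph. The $m$-simplex on $m$ vertices is the complete simplicial complex on its vertex set $\{1,\dots,m\}$: every subset of vertices spans a face. Consequently the collection of its $k$-vertex faces --- the $k$-uniform hyperedges used to define $\ket{\text{S}^k_m}$ --- is exactly the family of all $k$-element subsets of $\{1,\dots,m\}$. In other words, the hypergraph underlying $\ket{\text{S}^k_m}$ is the complete $k$-uniform hypergraph on $m$ vertices.

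Next I would invoke the description of Dicke states as excitation-states. As recalled in the subsection on Dicke states, the Dicke state $\ket{D_N^k}$ is precisely the excitation-state of the complete $k$-uniform hypergraph on $N$ vertices, since summing $\ket{\psi_e}$ over all $k$-element edges $e$ reproduces the uniform superposition over all computational basis vectors with exactly $k$ ones. Applying this with $N=m$ to the hypergraph identified in the previous step yields $\ket{\text{S}^k_m}=\ket{D_m^k}$ directly from the definition of the excitation-state.

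Finally, for the symmetry statement I would appeal to the corollary asserting that the group of symmetries of an excitation-state coincides with the automorphism group of its hypergraph. The complete $k$-uniform hypergraph is invariant under every permutation of its vertices --- any $\sigma\in\mathcal{S}_m$ carries $k$-element subsets to $k$-element subsets and hence preserves the edge set --- so its automorphism group is the full symmetric group $\mathcal{S}_m$. This gives the claimed symmetry group.

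The only point requiring care, rather than a genuine obstacle, is the bookkeeping of conventions: one must confirm that a ``$k$-dimensional hyperedge'' of the simplex is read as a hyperedge on $k$ vertices (matching the $k$ excitations of $\ket{D_m^k}$) rather than on $k+1$ vertices, and that the admissible range $1\le k\le m-1$ excludes only the trivial constant states. Once these identifications are pinned down, the observation is a restatement of the already-established complete-hypergraph/Dicke correspondence and needs no further work.
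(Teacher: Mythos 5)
Your proposal is correct and follows essentially the same route the paper takes: the paper itself treats this as an immediate consequence of the identification (made in the Dicke-states subsection and the caption of the tetrahedron/simplex figure) of $\ket{D_N^k}$ with the excitation-state of the complete $k$-uniform hypergraph, combined with the corollary that the symmetry group of an excitation-state is the automorphism group of its hypergraph. Your remark about the vertex-count convention for ``$k$-dimensional hyperedges'' is a fair reading of the paper's slightly loose notation and does not affect the argument.
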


	The symmetry group of $\ket{\text{B}^k_{2^m}}$ and $\ket{\text{O}^k_{2m}}$ is given by the non-trivial hyperoctahedral group $B_m$. We may first formulate a simple observation about the orthoplectic states
	
	\begin{observation}
		The states $\ket{\text{O}^{m-1}_{2m}}$ are separable with respect to
		the partition $12|34|...|(2m-1)2m$, with each pair of vertices being 2-distance or, in other words, lying on a common diagonal of the related $m$-orthoplex. 
	\end{observation}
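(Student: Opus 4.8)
The plan is to reduce the statement to the already-established separability criterion \cref{cor6} by identifying the combinatorial structure of the facet hypergraph of the $m$-orthoplex. First I would fix coordinates: the $m$-orthoplex (cross-polytope) has $2m$ vertices arranged in $m$ antipodal pairs $\pm e_1,\dots,\pm e_m$, and I label them so that diagonal $i$ consists of qubits $2i-1$ and $2i$. The key geometric fact is that the $(m-1)$-dimensional facets of the orthoplex are exactly the simplices obtained by selecting one vertex from each antipodal pair: a facet is specified by a sign vector $\epsilon\in\{+,-\}^m$, containing qubit $2i-1$ when $\epsilon_i=+$ and qubit $2i$ when $\epsilon_i=-$. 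There are $2^m$ such facets, each of cardinality $m$, so $\ket{\text{O}^{m-1}_{2m}}$ is the excitation-state of an $m$-uniform hypergraph on $N=2m$ vertices with $k=m$.

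Next I would observe that this facet hypergraph is precisely the complete $m$-partite hypergraph $K_{2,\dots,2}$ whose $m$ parts are the diagonals. Indeed, two vertices share a facet if and only if they lie in different antipodal pairs, and every transversal choosing one vertex from each pair occurs as a facet — which is exactly the defining property of the complete multipartite hypergraph with all parts of size two. Moreover the state is regular, since each vertex lies in exactly $2^{m-1}$ facets.

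The separability then follows immediately from \cref{cor6}: the $m$-regular excitation-state of a complete $m$-partite hypergraph factorizes into a tensor product of $\ket{W}$-type states supported on the parts. As each part here has size two, the corresponding $\ket{W}$ reduces to $\ket{01}+\ket{10}$, giving
\[
\ket{\text{O}^{m-1}_{2m}}=\bigotimes_{i=1}^{m}\frac{1}{\sqrt 2}\big(\ket{10}+\ket{01}\big)_{(2i-1)\,2i},
\]
which is exactly the claimed separability across $12|34|\cdots|(2m-1)\,2m$. Alternatively, one may verify this product form directly by summing over all $\epsilon\in\{+,-\}^m$ and noticing that the sum factorizes pairwise, thereby bypassing \cref{cor6} entirely.

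Finally I would record the distance claim so that the tensored pairs are identified with the distance-two diagonals: antipodal vertices never share a facet, so their distance is not one, whereas any vertex from a third pair lies in a common facet with both, giving $d(2i-1,2i)=2$. The only genuine subtlety is the first step — justifying that the $(m-1)$-faces of the cross-polytope are exactly the sign-vector transversals — which is a standard fact about the orthoplex that I would argue from its description as the unit ball of the $\ell_1$ norm in $\mathbb{R}^m$; everything afterwards is a direct application of the established separability criterion.
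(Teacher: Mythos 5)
Your proposal is correct and follows essentially the route the paper intends: the paper states this observation without an explicit proof, but its \cref{cor6} together with the worked example of the octahedron facet hypergraph $K_{222}$ (the $m=3$ case shown in the accompanying figure) make clear that the intended argument is exactly your identification of the facets of the $m$-orthoplex with the complete $m$-partite hypergraph $K_{2,\dots,2}$ on the antipodal pairs, followed by the factorization into $\ket{01}+\ket{10}$ pairs. Your verification of the distance-two claim and the direct factorization over sign vectors are sound and fill in the details the paper leaves implicit.
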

	
	\noindent
	Furthermore, one can easily calculate the concurrence $C_{vw}$ for the states $\ket{\text{O}^{2}_{2m}}$ connected to the 2-edges of the $m$-orthoplex,
{\small
	\begin{equation}
		C_{vw} (\text{O}^{2}_{2m}) = \begin{cases}
			\text{max}\{0,\,\frac{\sqrt{2 m^2-4 m+3}-2 m+4}{2 m-2 m^2}\}   &  
			\!\! \text{for} \  d(v,w)=1 \\
			%\frac{2}{m}
			2/m    &\!\! \text{for} \  d(v,w)=2\\
			0    &\!\! \text{for} \  d(v,w)>2,
		\end{cases}
	\end{equation}
	}
	and similarly the entanglement ratio
	\begin{equation}
		\Gamma_v (\text{O}^{2}_{2m}) = \begin{cases}
			%\frac{1}{m-1}
			1/(m-1)  \\ % &\text{for} \;  m \leq 3 \\
			 \frac{6 m^2-4 m\left(\sqrt{2 m^2-6 m+5}+5\right) +8 \sqrt{2 m^2-6 m+5}+19}{2 (m-1)^2},  % &\text{for} \;  m > 3 
		\end{cases}
	\end{equation}
where the first case holds for $m \leq 3$
and the second one for $m>3$.
	\begin{corollary}
		Similarly to Dicke states, the entanglement ratio for
		 $\ket{\text{O}^{2}_{2m}}$ states converges to nonzero value
		\begin{equation}
			\lim_{n\rightarrow\infty} \Gamma_v (\text{O}^{2}_{2m}) = 3 - 2\sqrt{2}\approx 0.1716.
		\end{equation}
	\end{corollary}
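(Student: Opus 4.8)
The plan is to evaluate the limit directly from the closed-form expression for $\Gamma_v(\text{O}^{2}_{2m})$ stated just above the corollary. Since the regime of interest is $m\to\infty$, only the second branch of the piecewise formula is relevant, as it is valid for all $m>3$. The whole argument therefore reduces to a careful asymptotic analysis of the rational-plus-radical expression
\[
\Gamma_v(\text{O}^{2}_{2m})=\frac{6m^{2}-4m\bigl(\sqrt{2m^{2}-6m+5}+5\bigr)+8\sqrt{2m^{2}-6m+5}+19}{2(m-1)^{2}}.
\]

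First I would extract the leading behaviour of the radical. Writing $\sqrt{2m^{2}-6m+5}=\sqrt{2}\,m\,\sqrt{1-3/m+5/(2m^{2})}=\sqrt{2}\,m+O(1)$, one sees that the only terms in the numerator growing like $m^{2}$ are $6m^{2}$ and $-4m\sqrt{2m^{2}-6m+5}\sim-4\sqrt{2}\,m^{2}$; every remaining contribution, namely $-20m$, the piece $8\sqrt{2m^{2}-6m+5}\sim8\sqrt{2}\,m$, and the constant $19$, is at most of order $m$. Hence the numerator equals $(6-4\sqrt{2})m^{2}+O(m)$, while the denominator $2(m-1)^{2}=2m^{2}+O(m)$.

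Dividing numerator and denominator by $m^{2}$ and passing to the limit, the $O(m)/m^{2}$ corrections vanish and one obtains
\[
\lim_{m\to\infty}\Gamma_v(\text{O}^{2}_{2m})=\frac{6-4\sqrt{2}}{2}=3-2\sqrt{2}\approx0.1716,
\]
as claimed. It is worth noting that this matches exactly the $k=1$ limit $3-2\sqrt{2}$ found earlier for the Dicke states $\text{D}^{1}_N$, which explains the phrase \emph{Similarly to Dicke states} in the statement and gives an independent sanity check on the coefficients.

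The main (and essentially only) obstacle is bookkeeping: one must confirm that every subleading term is genuinely $O(m)$, so that it cannot contaminate the ratio of the leading $m^{2}$-coefficients. Since each such term is manifestly at most linear in $m$ once the radical is expanded, this is immediate, and no substantive difficulty arises beyond the routine expansion of $\sqrt{2m^{2}-6m+5}$. (The stray index $n$ in the displayed limit of the statement is a typo for $m$, the variable on which the expression actually depends.)
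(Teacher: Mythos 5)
Your proposal is correct and follows exactly the route the paper intends: the corollary is a direct consequence of the closed-form expression for $\Gamma_v(\text{O}^{2}_{2m})$ displayed immediately above it, and your expansion $\sqrt{2m^{2}-6m+5}=\sqrt{2}\,m+O(1)$ isolates the leading $\left(6-4\sqrt{2}\right)m^{2}$ term in the numerator against $2m^{2}$ in the denominator, giving $3-2\sqrt{2}$. You are also right that the $n$ in the displayed limit is a typo for $m$.
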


	The situation is much simpler for the hypercubic states $\ket{\text{B}^{2}_{2^m}}$, where the concurrence occurs only between the distance-2 vertices,
	 $C_{12} (\text{B}^{2}_{2^m})= 2^{3-m}/m$, while the entanglement ratio reads 
	%$$\Gamma_v = \frac{8 (n-1)}{\left(2^n-n\right) n^2},$$ which goes to zero in large dimension.
	$$\Gamma_v (\text{B}^{2}_{2^m}) = \frac{4 (m-1)}{\left(2^m-2\right) m},$$ which asymptotically tends to zero.

	\begin{figure}[h!]
		\begin{tikzpicture}			
		\begin{axis}[  
				width=0.99\linewidth,
				height=5cm,
				axis lines=left,
				grid=both,
				legend style={draw=none, font=\small},
				legend cell align=left,
				legend style={at={(0.45,-0.45)}, anchor=north},
				label style={font=\small},
				ticklabel style={font=\footnotesize},
				xlabel style={at={(0.5, -0.05)}},
				ylabel style={at={(0.04, 0.5)}},
				ymin=0,
				xmin=6,
				ymax=0.55,
				xmax=61.5,
				xtick={6,14,...,60},
				minor tick num = 1,
				ytick={0, 0.1,...,.7},
				xlabel=Number of parties ($ N$),
				ylabel=Entanglement ratio   $\Gamma_v$,
				legend entries={{Ent. ratio  $\Gamma_v$ for orthoplectic states $\ket{\text{O}^2_{2m}}$, $N=2m$} , {Distance one nodes contribution to  $\Gamma_v$}, {Distance two nodes contribution to  $\Gamma_v$ }},
				]
				\addplot[color=violet, mark=*, thick, mark options={scale=.8}] coordinates  {
					(6, 0.5)  (8,0.341977) (10,0.28125) (12,0.251) (14,0.233264) (16,0.221744) (18,0.213718) (20,0.207833) (22,0.203348) (24,0.199823) (26,0.196986) (28,.194655) (30,.192708) (32,.191058) (34, .189643) (36,0.188417) (38,0.187344) (40,0.186398) (42,0.185558) (44,0.184807) (46,0.184132) (48,0.183522) (50,0.182967) (52,0.182462) (54,.181999) (56,0.1815737) (58,0.181181) (60,0.180818) (62,0.180481)};
				\addplot[color=orange, mark=diamond*, thick, mark options={scale=.7}] coordinates  {
					(6, 0)  (8,0.00864388) (10,0.0510002) (12,0.0665973) (14,0.0665973) (16,0.0788871) (18,0.0887181) (20,0.0967221) (22,0.103348) (24,0.108914) (26,0.113652) (28,0.117732) (30,0.121279) (32,0.124391) (34,0.127143) (36,0.129593) (38,0.131789) (40,0.133767) (42,0.135558) (44,0.137188) (46,0.138677) (48,0.140043) (50,0.141301) (52,0.142462) (54,0.143537) (56,0.144536) (58,0.145467) (60,0.146335) (62,0.147148)};
					\addplot[color=gray, mark=triangle*, thick, mark options={scale=.8}] coordinates  {
					(6, 0.5)  (8,0.333333) (10,0.25) (12,0.2) (14,0.166667) (16,0.142857) (18,0.125) (20,0.111111) (22,0.1) (24,0.0909091) (26,0.0833333) (28,0.0769231) (30,0.0714286) (32,0.0666667) (34, 0.0625) (36,0.0588235) (38,0.0555556) (40,0.0526316) (42,0.05) (44,0.047619) (46,0.0454545) (48,0.0434783) (50,0.0416667) (52,0.04) (54,0.0384615) (56,0.037037) (58,0.0357143) (60,0.0344828) (62,0.0333333)};
			\end{axis}
		\end{tikzpicture}
		\caption{Entanglement ratio $\Gamma_v$ quantifies the ratio of the bipartite and multipartite 
			entanglement  of a  particular subsystem $v$. 
			Distribution of $\Gamma_v$ for the family $\ket{\text{O}^2_{2m}}$ of states with $N= 2m$ parties and $k=2$ excitations, related to the $m$-orthoplexes, is presented. 
			Contribution from distance two vertices is dominating for small systems. 
			For larger systems the contribution from distance-two nodes 
			dominates $\Gamma_v$, 
			while related graph becomes almost complete.
			The entanglement ratio at infinity goes to a nonzero value,
			   $3 - 2\sqrt{2}\approx 0.17$.
		}
		\label{Comparison_polytope}
	\end{figure}
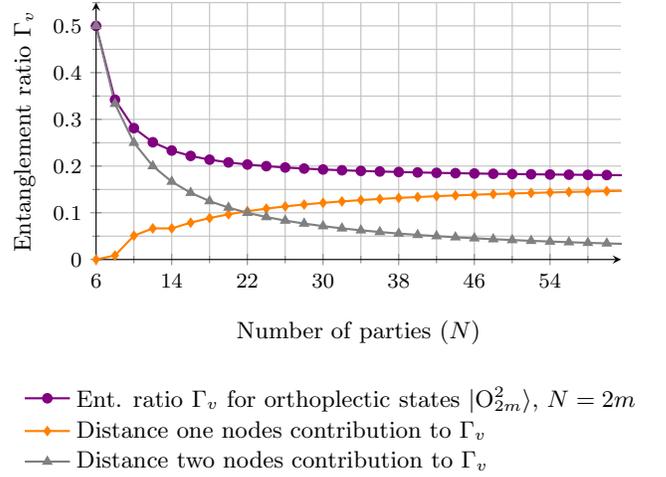

	\subsection{Plane regular tilings}
	
	Other highly symmetric objects, including regular and semi-regular tessellation of the plane, 
	can also be used to generate symmetric quantum states.
	In the regular tiling, the plane is covered by regular polygons, all of the same shape. In semi-regular tiling, tiles of more than one shape are allowed but every corner is identically arranged. 
	Regular tiling provides an edge-transitive graph, while semi-regular tiling
	leads to a vertex-transitive graph. 
	Even though the local properties of such tiling might be various, the global pattern itself can be categorized into 17 wallpaper groups. 
	
	Each such a group consists of two types of generators: shifts and reflections. 
	In general, tessellations take the form of infinite patterns, but one may restrict to the arbitrary regular region by appropriate gluing on the boundary.
	In such a way, we obtain a finite tiling, which might be related to excitation-states under construction provided.  By the property of edge-transitivity the 
	regular tilings correspond to Dicke-like states.  
	The group of symmetry is given by a relevant wallpaper group, restricted to the chosen size of the tiling. 
	Since semi-regular tilings do not do exhibit edge-transitivity, they in general 
	do not correspond to  Dicke-like states. 
	%{\color{red} czy to jest dobrze? moze ma byc
	%``do not do exhibit edge-transitivity''}

	\cref{Tiling} presents the hexagonal tiling of the plane and relevant excitation-state $\ket{H_N}$. 
	For hexagonal tiling, the concurrence in bipartite subsystems takes the following value:
	\small
	\begin{align*}
		C_{v|w}({{H_N}}) &=
		\begin{cases}
			2/N \quad \; \; &\text{for} \quad d(v,w)=2\\
			0 \quad &\text{otherwise} 
		\end{cases} ,
	\end{align*} 
	\normalsize
	for the tiling restricted to $N$ nodes (minimal size of a cut is $3\times 3$).  
	The entanglement ratio $\Gamma_v$ for a given node $v$ reads,
	\[
	\Gamma_v ({H_N}) =\dfrac{4}{3}\dfrac{1}{N-2}.
	\]
	Even though the value of the two-partite concurrence $C_{v|w}$ is the same as in the cyclic case, the parameter $\Gamma_v$ takes a larger value. 
	Indeed, each node in the hexagonal case has 6 distance-two vertices, which causes the increase of the entanglement ratio $\Gamma_v$.
	
	\begin{figure} 
		\begin{tikzpicture}[scale=0.75,pics/hexi/.style={code={\stepcounter{hexi}
					\node[draw,regular polygon,regular polygon sides=6,minimum width=1.5cm]
					(hexi-\number\value{hexi}) {};
					\foreach \Corner in {1,...,6}
					{\ifodd\Corner
						\draw[fill=black] (hexi-\number\value{hexi}.corner \Corner) circle[radius=1.5pt];
						\else
						\draw[fill=white] (hexi-\number\value{hexi}.corner \Corner) circle[radius=1.5pt];
						\fi}
			}},bullet/.style={circle,fill,inner sep=0.5pt}]
			\clip (0,1) rectangle (9.8,6.5);
			%my stuff
			\definecolor{greeo}{RGB}{91,173,69};
			\draw[thick,greeo] (5,4.332) edge (5,2.6) ;
			\draw[thick,greeo] (5,4.332) edge (5,6.064) ;
			\draw[thick,greeo] (5,4.332) edge (6.5,5.2) ;
			\draw[thick,greeo] (5,4.332) edge (6.5,3.466) ;
			\draw[thick,greeo] (5,4.332) edge (3.5,5.2) ;
			\draw[thick,greeo] (5,4.332) edge (3.5,3.466) ;
			%not my stuff
			% draw the hexagons 
			\path foreach \X in {1,...,6} {
				foreach \Y [evaluate=\Y as \Z using {int(mod(33-\Y-\X,5)+1)}] in {1,...,4} { \ifodd\X 
					({\X*(1+cos(60))},{\Y*(2*sin(60))})
					\else
					({\X*(1+cos(60))},{\Y*(2*sin(60))-sin(60)})
					\fi pic{hexi=\Z}}};
			% draw the blue arrows
			%my stuff
			\draw[red] (1.7,0) 
			-- (1.7,6.4) node[bullet]{}
			-- (7.7,6.4) node[bullet]{}
			-- (7.7,0);
			\tikzstyle{every node}=[circle,inner sep=4pt,draw=gray!95!white,fill=greeo!10!white,thick,anchor=base];
			\node(1) at (5,4.332) {\scriptsize 1};
		\end{tikzpicture}
		\caption{Hexagonal tiling of a plane. One may choose a rectangular region 
			(indicated by the red line) 
			and glue the boundary left-right and up-down to form a torus. In this specific choice of the rectangular cut, there are 4 nodes in each row and 6 in each column. For chosen node $v$, the concurrence on two-parties subsystem $C_{1 v}$ is vanishing except for distance 2 parties (indicated by green lines). In such a case $C_{1 v}=1/12$, %%2/24$, 
			and the entanglement ratio 
			$\Gamma_v\approx 0.06$. For the similar tiling by triangles (restriction to 25 nodes), the entanglement ratio  takes value $\Gamma_v \approx 0.072$. Even though the triangular network is twice as dense as the hexagonal, for such a network much more entanglement is concentrated in the bipartite
			subsystems of distance-two.
		}
		\label{Tiling}
	\end{figure}
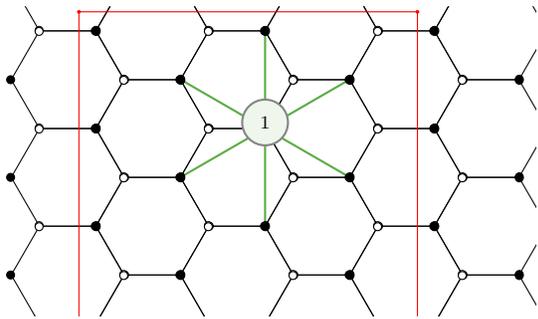

	\subsection{General local-network}
	
	Approach presented allows one to construct multipartite quantum states
	associated to local networks. 
	%We do not impose any global symmetry of the network. 
	We assume that certain properties of the network are the same at each node. 
	In particular, all vertices have the same degree $d$. 
	Our analysis is precise for a sufficiently large number of nodes $N$. 
	By locality, we understand
	that the number of edges $|E|$ scales as the number of vertices,
	$|E| =\mathcal{O} (N)$.
	This assumption leads to vanishing of the concurrence for all 
	bipartite reduced systems corresponding to two vertices at distance one. 
	For simplicity we restrict our attention to the case of graph-networks, 
	i.e. edges connect always two vertices.
	
	For a given node labeled as $v$, define the following function:
	\begin{equation}
		\label{hmm}
		\gamma_v := \sum_{v : d(v,v')=2} n_{vv'}^2,
	\end{equation}
	where the sum runs over all distance-two vertices, 
	while the joint neighborhood $n_{vv'}$ (defined in \cref{cccc}) simplifies here to the number of paths of length $2$ from $v$ to $v'$. 
	Observe that by regularity, the number of paths connecting distance-two vertices takes always the same value:
	\[
	\sum_{v' : d(v,v')=2} n_{vv'} =d(d-1).
	\]
	The possible displacement of distance-two nodes responds for the parameter $\Gamma_v$ for the given network. 
	Indeed, from \cref{A,B} follows that
	\[
	\Gamma_v  ({\rm network}) = \dfrac{2}{N-2} \dfrac{1}{d^2} \gamma_v ,
	\]
	and the only non-global parameter above is $\gamma_v $. 
	Therefore, optimisation of the entanglement ratio $\Gamma_v$ is equivalent to optimisation of $\gamma_v$. 
	In order to maximize \cref{hmm}, distance-two paths from node $v$ should lead to the minimal number of nodes. 
	Conversely, in order to minimize (\ref{hmm}), 
	the distance-two nodes should be connected by a single path only. 
	For instance, \cref{PlatonicStates} presents the network, in which all
	vertices of  distance-two  are connected by two paths, while in the network presented on \cref{Tiling} all such vertices are connected by a unique path.

	\begin{corollary}
		For a regular local network, i.e $|E| =\mathcal{O} (N)$ 
		%with {\color{red}  $k=2$ excitations}, 
		the entanglement ratio $\Gamma_v$ of relevant excitation-states have the following lower bound:
		\[
		\Gamma_v  ({\rm local \ network})    \geq \dfrac{2}{N-2} \dfrac{d-1}{d}.
		\]
		Moreover, it has the following asymptotic complexity $\Gamma_v =\mathcal{O}(1/N)$.
	\end{corollary}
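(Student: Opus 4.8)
The plan is to reduce the entire statement to the identity
$$
\Gamma_v = \frac{2}{N-2}\,\frac{1}{d^2}\,\gamma_v,
$$
established just above from \cref{A,B} under the locality hypothesis $|E|=\mathcal{O}(N)$, which forces the distance-one concurrences to vanish so that only the distance-two terms survive in the numerator of $\Gamma_v$. Once this formula is granted, bounding $\Gamma_v$ becomes equivalent to bounding the single local quantity $\gamma_v=\sum_{v':\,d(v,v')=2} n_{vv'}^2$, subject to the regularity constraint $\sum_{v':\,d(v,v')=2} n_{vv'}=d(d-1)$. The whole corollary then follows from an elementary estimate on $\gamma_v$.

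For the lower bound I would exploit that each $n_{vv'}$, counting the length-two paths joining $v$ to a distance-two vertex $v'$, is a positive integer, so $n_{vv'}^2 \geq n_{vv'}$ termwise. Summing and invoking the regularity identity gives
$$
\gamma_v = \sum_{v':\,d(v,v')=2} n_{vv'}^2 \;\geq\; \sum_{v':\,d(v,v')=2} n_{vv'} = d(d-1).
$$
Substituting into the formula for $\Gamma_v$ yields
$$
\Gamma_v \;\geq\; \frac{2}{N-2}\,\frac{d(d-1)}{d^2} = \frac{2}{N-2}\,\frac{d-1}{d},
$$
which is exactly the claimed estimate. Equality is approached precisely when every distance-two vertex is reached by a single path, i.e.\ $n_{vv'}=1$ throughout, which is the geometric situation realised by the tilings in \cref{Tiling}.

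For the asymptotic statement I would bound $\gamma_v$ from above by the trivial inequality $\sum n_{vv'}^2 \leq \big(\sum n_{vv'}\big)^2 = d^2(d-1)^2$, valid for nonnegative summands, so that $\gamma_v$ is controlled by a constant depending on the fixed degree $d$ alone. Hence $\Gamma_v \leq 2(d-1)^2/(N-2) = \mathcal{O}(1/N)$; combined with the matching lower bound this in fact pins down $\Gamma_v = \Theta(1/N)$.

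The only genuinely delicate point is the regularity identity $\sum_{v'} n_{vv'} = d(d-1)$ itself, which asserts that none of the $d(d-1)$ length-two walks issuing from $v$ closes up into a triangle and thereby lands on an immediate neighbour rather than a distance-two vertex. I would therefore make explicit that the local-network assumption is understood to incorporate this triangle-free (girth $>3$) condition, under which the counting of length-two paths is unambiguous. This is the step requiring care; once it is in place, the two inequalities on $\gamma_v$ are immediate.
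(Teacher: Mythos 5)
Your proof is correct and follows essentially the same route the paper intends: the identity $\Gamma_v = \tfrac{2}{N-2}\tfrac{1}{d^2}\gamma_v$ combined with the termwise bound $n_{vv'}^2 \ge n_{vv'}$ and the walk-counting identity $\sum_{v'} n_{vv'} = d(d-1)$ yields the lower bound, while the trivial upper bound on $\gamma_v$ for bounded degree gives $\mathcal{O}(1/N)$. You are also right to single out the walk-counting identity as the delicate step: it (and hence the stated lower bound) silently assumes the network is triangle-free — for instance the $4$-regular circulant graph with connections to nearest and next-nearest neighbours has $\gamma_v = 10 < d(d-1) = 12$ — a caveat the paper itself does not make explicit.
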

	
	\noindent
	Observe that the bound above is tight for cyclic states $\ket{C_N}$ and for states $\ket{H_N}$ related to the hexagonal tiling. Hypercubic states $\ket{\text{B}^{2}_{2^m}}$ achieve exactly twice this bound. 
	
	Remarkably, the entanglement ratio $\Gamma_v$ is strongly related to the curvature of the network. 
	For a flat-like network one can expect that the number of distance-two nodes will be twice larger than the number of distance-one nodes. 
	This is the case of the hexagonal tiling, see \cref{Tiling}, and all
	the regular tessellations but not  all semi-regular  tilings.
	By assumption of flatness (understood as above), and that the 
	same number of paths connecting all distance-two vertices, equal to  $(d-1)/2$, 
	the entanglement ratio $\Gamma_v$ can be estimated as follows,
	\[
	\Gamma_v ({\rm flat \ local \ network}) \; \simeq \; \dfrac{1}{N-2} \dfrac{(d-1)^2}{d}.
	\]
	Notice, that the above estimation is not precise
	if  the number of paths connecting distance-two vertices is not uniformly distributed. 
	Observe that hypercubic states $\ket{\text{B}^{2}_{2^m}}$ are not relevant to the flat networks, and indeed, their entanglement ratio scales differently
	$$\Gamma_v 
	(\text{B}^{2}_{2^m})
	= \frac{4}{2^m-2} \dfrac{(m-1)}{m}=\frac{4}{N-2} \dfrac{(d-1)}{d},$$
	with respect to the local degree $d$ parameter.

	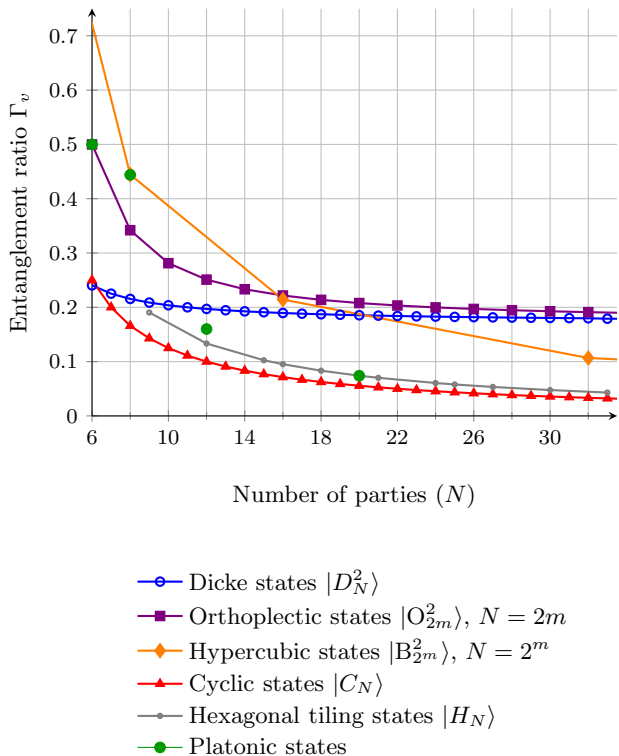
\begin{figure}[h!]
		\begin{tikzpicture}
			\begin{axis}[  
				width=0.99\linewidth,
				height=7cm,
				axis lines=left,
				grid=both,
				legend style={draw=none, font=\small},
				legend cell align=left,
				legend style={at={(0.5,-0.35)}, anchor=north},
				label style={font=\small},
				ticklabel style={font=\footnotesize},
				xlabel style={at={(0.5, -0.05)}},
				ylabel style={at={(0.04, 0.5)}},
				ymin=0,
				xmin=6,
				ymax=0.75,
				xmax=33.5,
				xtick={6,10,...,34},
				minor x tick num = 1,
				ytick={0, 0.1,...,.72},
				xlabel=Number of parties ($N$),
				ylabel=Entanglement ratio   $\Gamma_v$,
				legend entries={{Dicke states $\ket{D^2_N}$},{Orthoplectic states $\ket{\text{O}^2_{2m}}$, $N=2m$} , {Hypercubic states $\ket{\text{B}^2_{2^m}}$, $N=2^m$},{Cyclic states $\ket{C_N}$},{Hexagonal tiling states $\ket{H_N}$}, {Platonic states}},
				]
%				legend entries={ {Cyclic states $\ket{C_N}$},{Dicke states $\ket{D^2_N}$}, {Platonic states}},
%				]
				\addplot[color=blue, mark=o, thick, mark options={scale=0.8}] coordinates {
					(3,1) (4, 0.333333 )(5, 0.267949)  (6, 0.240408) (7,0.225148) (8,0.215438) (9,0.208712) (10,0.203777) (11, 0.2) (12,0.197017)(13,0.194601) (14,0.192604)(15,0.190925) (16,0.189495) (17,0.188262) (18,0.187188)(19, 0.186244)(20,0.185407)(21, 0.184661) (22,0.183991) (23,0.183386) (24,0.182838) (25,0.182338) (26,0.18188)(27,0.18146) (28, 0.181073) (29,0.180715) (30,0.180383) (31,0.180074)(32, 0.179787) (33, 0.1789787) (34, 0.1789787)};
					\addplot[color=violet, mark=square*, thick,  mark options={scale=.8}] coordinates  {
					(6, 0.5)  (8,0.341977) (10,0.28125) (12,0.251) (14,0.233264) (16,0.221744) (18,0.213718) (20,0.207833) (22,0.203348) (24,0.199823) (26,0.196986) (28,.194655) (30,.192708) (32,.191058) (34, .189643)};
				\addplot[color=orange, mark=diamond*, thick, mark options={scale=1.2}] 
				coordinates  {
					(4,1.) (8,0.444444) (16,0.214286) (32,0.106667) (64,0.0537634)};
					\addplot[color=red,  mark=triangle*,thick, mark options={scale=.8}] coordinates  {
				(3, 1) (4, 1) (5, 0.3333) (6, 0.25) (7,0.2) (8,0.166) (9,0.14285) (10,0.125) (11, 0.1111111) (12, 0.1) (13, 0.090909090) (14, 0.08333) (15, 
  0.0769230769) (16, 0.0714285) (17, 0.06666666) (18, 0.0625) (19, 
  0.05882352) (20, 0.05555555555) (21, 0.0526315) (22, 0.05) (23, 
  0.04761904) (24, 0.04545454) (25, 0.04347826) (26, 0.0416666) (27, 
  0.04) (28, 0.0384) (29, 0.03703) (30, 0.035714) (31, 
  0.034482758) (32, 0.03333333) (33, 0.032258) (34, 0.03125)};
				\addplot[color=gray,  mark=o,thick, mark options={scale=.4}] coordinates  {
					(9, 0.19047) (12, 0.13333333) (15, 0.1025641) (16, 0.09523) (18, 
  0.08333333) (20, 0.0740) (21, 0.0701754) (24, 0.06060606) (25, 
  0.0579) (27, 0.05333333) (30, 0.0476190) (33,0.04301)};
				\addplot[color=black!40!green, mark=*, mark size=4pt,  mark options={scale=0.5}] 	
				coordinates {(6,0.5)  };
				\addplot[color=black!40!green, mark=*, mark size=4pt,  mark options={scale=0.5}] coordinates {
					(8,0.444)  };
				\addplot[color=black!40!green, mark=*, mark size=4pt,  mark options={scale=0.5}] coordinates {
				(12,0.16)  };
				\addplot[color=black!40!green, mark=*, mark size=4pt,  mark options={scale=0.5}] 
				coordinates {
				(20,0.074)  };
			\end{axis}
		\end{tikzpicture}
		\caption{Distribution of entanglement for Dicke-like states with $k=2$ excitations is compared. 
			The entanglement ratio $\Gamma_v$ quantifies the ratio of the bipartite and multipartie 
			entanglement  of a  particular subsystem $v$.
			The distribution of entanglement varies for states relevant to local and global networks, i.e. graph with asymptotically different densities. 
			For states related to dense graphs, $\ket{D^2_N}$ and $\ket{\text{O}^2_{2m}}$, the ratio $\Gamma_v$ tends to a nonzero value $3 - 2\sqrt{2}\approx 0.17$. 
			For families of states in which the local degree does not scale with the size of a graph ($\ket{\text{B}^2_{2^m}}$, $\ket{C_N}$, $\ket{H_N}$), the complexity of $\Gamma_v$ is $\mathcal{O} (1/N)$.
	Exact scaling factor for local networks relies on a local degree $d$ in a way depending on the curvature of the network. 
	For states related to flat ($\ket{C_N}$, $\ket{H_N}$) and spherical (Platonic states) networks one has, $\Gamma_v \sim d/N$, while for hyperbolic case, 
	$\ket{\text{B}^2_{2^m}}$), the ratio scales as $\Gamma_v \sim 1/N$. 
	%		Even further asymptotic behavior might be distinct. 
			For flat networks, the exact scaling parameter does not exceed unity ($\Gamma_v \sim  d/2N$ for cyclic states $\ket{C_N}$, and $\Gamma_v \sim 4 d/9N$ for states related to a hexagonal tiling), and exceeds unity for a spherical networks (Platonic states).}
			%% za dlugie podpisy! Im krotsze, tym lepsze :)
		\label{Comparison}
	\end{figure}

	\section{Phase transitions}
	\label{Phase transitions}
	
	As discussed before, the entanglement ratio $\Gamma_v$ measures the amount of entanglement not present in the bipartite entanglement. 
	Therefore, it indicates the robustness of an entangled state. 
	We investigate the asymptotic behavior of the parameter $\Gamma_v$ of an excitation-state
	$\ket{G}$ with the change of the local degree $d$ of the relevant hypergraph.
	
	For simplicity consider Dicke states with two excitations, $\ket{D_N^2}$. 
	According to exact calculations, the asymptotic entanglement ratio $\Gamma_v$ is 
	\[
	\Gamma_v  ({D_N^2})  \; \sim \; (3- 2 \sqrt{2}).
	\]
	Recall that for local networks one has $\Gamma_v \gtrsim d/N$. 
	Therefore the complexity of entanglement ratio $\Gamma_v$ for Dicke states and a general local-network is not the same, $\mathcal{O} (N)$ and $\mathcal{O} (1/N)$ respectively. 
	%% czy znaczenie wyrazenia `complexity of deviation' bylo gdzies wytlumaczone??
	%%
	%Nevertheless, the scaling parameter for Dicke states is significantly smaller than the prediction from its degree $d_v=N-1$. 
	
	This phenomenon is related to the fact that for regular networks, the concurrence 
	$C_{v|w}$  between  two parties
	may take two distinct non-vanishing values according to the distance $d(v,w)$, see \cref{A}. 
	While non-vanishing concurrence is always present for distance-two nodes, for neighboring nodes it takes positive value, iff $n_{vw}^2\geq (d(N-4) +2)/2$. 
	Since the number of common neighbours $n_{vw}$ is bounded by the degree $d$, it might occur only if $d\gtrapprox N/2$. 
	Notice that the regular graph satisfying this inequality is highly connected since at least half of all possible edges are present. 
	In such a highly connected graph, the vertices are of distance-one or two. 
	Therefore, the joint neighborhood might be estimated by $n_{vw} \sim d^2/N$ (under the assumption of random distribution of edges). 
	Hence, by \cref{A}, for a graph with vertices of degree:
	\[
	d =\dfrac{1}{\sqrt[3]{2}} \;N  \simeq 0.794 \; N,
	\]
	we observe a first phase transition, in which the entanglement between distance-one parties starts
	to appear. 
	At this moment, however, the entanglement between such parties is negligible in comparison to the entanglement between distance-two vertices. 
	
	The second phase transitions refer to the threshold in which entanglement between distance-one vertices is protruding over the entanglement present between vertices of distance two. 
	We quantify this intuition by comparing the contribution of those two distinct types of two-partite concurrence on the right side of \cref{eq5}. 
	Regarding the previous discussion, this phenomenon takes place for a graph with vertices of degree:
	\[
	d\Big( n_{vw} -\sqrt{dN/2} \Big)^2 \approx (N-d) n_{vw}^2 .
	\]
	Taking into account that $n_{vw} \sim d^2/N$ for such strongly connected graphs, the second phase transition takes place for
	\[
	d \approx 0.973 \; N ,
	\]
	which was computed numerically.

	%\ald{Finally, we pose the question when the bipartite entanglement appears. Reni-Erdes. variance=0. connectivity $d=1$ czy $d=2$? Chyba delete.}

	\begin{corollary}
		Consider asymptotic behaviour of an excitation-state $\ket{G}$ corresponding to a graph 
		of $N$ vertices with respect to the average degree of each vertex equal to $d$. 
		Two phase transitions occur:
		The first one, for the average vertex degree $d=0.794 N$, refers to the threshold in which the entanglement between distance-one parties starts to appear. 
		The second one, for the average vertex degree $d = 0.973 N$, refers to the threshold in which entanglement between distance-one vertices is protruding over the entanglement present 
		between vertices of distance two. 
	\end{corollary}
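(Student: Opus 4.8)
The plan is to recast the preceding mean-field discussion as two threshold computations, both resting on the concurrence formula of \cref{A}. Throughout I work with a connected $d$-regular, distance-$1$ regular graph on $N$ vertices, so that the section is $s=1$ for any adjacent pair and the number of edges is $|E|=dN/2$. By \cref{A}, the two-party concurrence $C_{vw}$ vanishes for $d(v,w)>2$, equals $\tfrac{2}{|E|}n_{vw}$ for $d(v,w)=2$, and equals $\tfrac{2}{|E|}\max\{0,\,n_{vw}-\sqrt{|E|-2d+1}\}$ for $d(v,w)=1$. The whole argument then reduces to controlling these two regimes as functions of the density $d/N$.

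First I would pin down the \emph{onset} of distance-one entanglement. This is precisely the sign change of the inner expression above, i.e.\ the condition $n_{vw}^2>|E|-2d+1=(d(N-4)+2)/2$. Since the joint neighborhood always satisfies $n_{vw}\le d$, this can occur only once $d^2\gtrsim dN/2$, that is, only for dense graphs with $d\gtrsim N/2$; such a graph necessarily has diameter two. In this regime I would invoke the mean-field estimate $n_{vw}\approx d^2/N$, valid when the edges are spread out in a random-like fashion, and substitute it into the threshold $n_{vw}^2=dN/2$. This yields $d^4/N^2=dN/2$, hence $d^3=N^3/2$ and $d=N/\sqrt[3]{2}\approx 0.794\,N$, the first phase transition.

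For the second transition I would compare the two contributions to the right-hand side of the monogamy sum in \cref{eq5}. The $d$ distance-one neighbors contribute a total proportional to $d\,(n_{vw}-\sqrt{dN/2})^2$, whereas the remaining $\approx N-d$ distance-two vertices contribute a total proportional to $(N-d)\,n_{vw}^2$. Equating these and again inserting $n_{vw}\approx d^2/N$ produces the transcendental equation $d(n_{vw}-\sqrt{dN/2})^2=(N-d)\,n_{vw}^2$, whose relevant root is located numerically at $d\approx 0.973\,N$. Above this density the neighbor entanglement protrudes over the distance-two entanglement in the sum, which is the content of the second claim.

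The main obstacle, and the place where the argument is genuinely heuristic rather than rigorous, is the mean-field replacement $n_{vw}\sim d^2/N$ for the joint neighborhood: it presumes that edges are distributed uniformly at random, whereas for specific highly structured regular graphs the actual joint neighborhood may deviate substantially, shifting both thresholds. A secondary, purely computational point is that the constant $0.973$ is the numerical solution of a transcendental equation rather than a closed form, so only the first threshold $N/\sqrt[3]{2}$ admits an exact expression. A fully rigorous version would therefore need to fix the graph ensemble, or impose a concentration hypothesis on $n_{vw}$, under which these density thresholds are attained.
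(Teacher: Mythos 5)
Your proposal is correct and follows essentially the same route as the paper: the onset condition $n_{vw}^2>(d(N-4)+2)/2$ from \cref{A}, the bound $n_{vw}\le d$ forcing $d\gtrsim N/2$, the mean-field substitution $n_{vw}\sim d^2/N$ yielding $d=N/\sqrt[3]{2}\approx 0.794\,N$, and the numerical comparison of the two contributions to the monogamy sum giving $d\approx 0.973\,N$. Your closing remark that the argument is heuristic pending a concentration hypothesis on $n_{vw}$ is a fair and accurate characterization of the paper's own (equally heuristic) derivation.
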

	
	Note that there is no substantial difference between local networks 
	with average degree $d$ of each vertex being  a finite number
	and global networks with the degree $d$ depending on $N$,
	until the limit $d\simeq0.794 N$  is achieved. 
	Indeed, in both cases, the entanglement ratio $\Gamma_v \sim d/N$, and the entanglement is present only between distance-two nodes. 
	Once this limit is exceeded, the bipartite entanglement between distance one nodes starts to appear. 
	Its amount is, however, negligible in comparison to the entanglement between distance-two vertices. 
	
	Entanglement between distance-one nodes prevails once the threshold $d \approx 0.973 N $ is exceeded. 
	Since the entanglement between distance-one nodes is significantly weaker than the entanglement corresponding to nodes at distance-two, 
	this leads to a global consequence in entanglement distribution. 
	Indeed, the entanglement ratio $\Gamma_v$ (adequate to amount of entanglement in bipartite systems in comparison to the amount of entanglement shared in multipartite way) changes its scaling from $\Gamma_1 \sim d/N$ to $\Gamma_1 \sim (3- 2 \sqrt{2})$. 
	Therefore, the resulting states share much less bipartite entanglement. 
	Hence the global amount of entanglement remains on the same level, states become more robust and persistent, which is perfectly observed at the limit case of the Dicke states.
	
	The aforementioned transitions are clearly visible for $\ket{\text{O}^2_{2m}}$ family of states, related to the $m$-orthoplexes, see \cref{Comparison_polytope}. 
	Indeed, for a small number of particles,
	 $N=2m$, the entanglement is mainly present between distance-two vertices. 
	Already for $N=8$ particles system $\ket{\text{O}^2_{8}}$, modest entanglement between distance-one nodes can be detected. 
	Contribution from distance-one vertices becomes dominating for larger systems,  staring from systems with $N=22$ particles. 
	This reflects the fact, that for a large number $N$, the related graph becomes almost complete.
	
	\section{Quantum circuits} %TODO MAJOR OVERHAUL NECESSARY - Appendices for degree <= 2 and general
	\label{Circuit}
	
	We present below a quantum circuit efficiently transforming a fully separable tensor product state into the excitations-state, proposed in this work. 
	Construction of this circuit was inspired by similar circuits for Dicke states
	developed in \cite{DickeCircuits,brtschi2019deterministic}. However, it refers explicitly to the graph structure and is more intuitive in that sense. 
	As we shall see, our scheme uses between $\sim 4 |V|$ and $\sim 10 |E|$ CNOT gates, depending on the structure of a graph. 
	%In particular, for cyclic graphs; for general cases one may estimate the necessary operations as $10|E|$ CNOTs}, 
	%which provides the same complexity for the Dicke states as aforementioned circuits. 
	For clarity of demonstration, we present the construction of a unitary circuit transforming excitations-state into the separable state. 
	The provided construction can be extended in a simple manner to the case of hypergraphs. % which we discuss afterward.  
	
	We construct a quantum circuit iteratively relative to the set of vertices. 
	For a vertex $v \in V$, we apply $d_v-1$ three-qubit unitary operations and one two-qubit operation. 
	This can be represented graphically by deleting edges adjacent to the consecutive vertices. 
	Although any three-qubit unitary operation can be simulated by at most 21 CNOT gates \cite{1629135}, 
	considered operations can be simulated by $6$ or $10$ two-qubit gates only, according to  \cite{10.5555/2011517.2011522}. 
	We present the exact form of unitary operations below and, in parallel, its realizations by accessible 2-qubit gates, where we assume standard implementation of Toffoli gate with 6 CNOTs. 
	The exact form of quantum circuit depends on the order of deleted vertices and adjacent edges, the procedure itself, however, might be performed for any such order. 
	
	\textbf{First step}. Choose a vertex $v$, and take all adjust adjacent vertices $v_1 ,\ldots , v_d$ where $d$ is the degree of $v$. 
	Without loss of generality, suppose that each $v_i$ is related to the $i$th particle, while $v$ is related to $d+1$ particle. 
	We shall consecutively apply the following three-qubit gates $U_{ \{i, d, d+1 \}}^{(1)}$ on parties $i,d,d+1$:
	\begin{align}
		\label{al0}
		\ket{101} + \sqrt{i-1} \ket{011} &\longmapsto \sqrt{i} \ket{011} \\
		%{\color{red} \ket{011} - \sqrt{i - 1}\ket{101}} &\longmapsto{\color{red} \sqrt{i}\ket{101}} \nonumber \\
		\ket{110} &\longmapsto \ket{110} \nonumber \\
		\ket{100} &\longmapsto \ket{100} \nonumber \\
		\ket{010} &\longmapsto \ket{010} \nonumber \\
		\ket{001} &\longmapsto \ket{001} \nonumber \\
		\ket{000} &\longmapsto \ket{000} \nonumber ,
	\end{align}
	for $i=1, \ldots,d-2$. 
	Applying $U_{ \{i,d, d+1 \}}^{(1)}$ operation is relevant to the graphical operation of deleting an edge $e= \{i,d+1 \}$. 
	It may be simulated by $10$ CNOT gates, as indicated in \cref{circuit}. 
	
	Secondly, we apply the following three-qubit operator $U_{ \{d-1, d, d+1 \}}^{(2)}$ on parties $d-1,d,d+1$
	\begin{align}
		\label{al1}
		\ket{101} + \sqrt{d-2} \ket{001} &\longmapsto \sqrt{d-1} \ket{001} \\
		%{\color{red} \ket{001} - \sqrt{d - 2}\ket{101}} &\longmapsto{\color{red} \sqrt{d-1}\ket{101}} \nonumber \\
		\ket{011} &\longmapsto \ket{011} \nonumber \\
		\ket{100} &\longmapsto \ket{100} \nonumber \\
		\ket{010} &\longmapsto \ket{010} \nonumber \\
		\ket{000} &\longmapsto \ket{000} \nonumber ,
	\end{align}
	which is relevant to the graphical operation of deleting an edge $e= \{d-1,d+1 \}$. 
	Observe that \cref{al1} is defined only on 5-dimensional subspace, and is arbitrary on the remaining part of Hilbert space. 
	This leaves room for optimization of simulation of $U_{ \{d-1,d, d+1 \}}^{(2)}$. 
	Indeed, it may be simulated only by $6$ CNOT gates or a single Toffoli gate, see \cref{circuit}. 
	For comparison, for simulation of \cref{al0} already $10$ CNOT gates were needed. 
	%In such a way the excitation is stored in $d+1$th position.
	
	Finally we apply the following two-qubit operator $U_{ \{d, d+1 \}}^{(3)}$ on parties $d, d+1$:
	\begin{align}
		\label{al2}
		\ket{11} + \sqrt{d-1} \ket{01} &\longmapsto \sqrt{d} \ket{01} \\
		%{\color{red}\ket{01} - \sqrt{d-1} \ket{11}} &\longmapsto {\color{red}\sqrt{d} \ket{11}} \nonumber\\
		\ket{00} &\longmapsto \ket{00} \nonumber \\
		\ket{10} &\longmapsto \ket{10}  \nonumber  
	\end{align}
	which is relevant to deleting the only remaining edge: $e= \{d,d+1 \}$. 
	Those three operations are presented on \cref{circuit}. 
	There is a simple logic behind those operations. 
	Consecutively, we combine all terms having excitations on position $d+1$ into a single term with an excitation on this position.
	 %one highlighted ($d+1$th) position. 
	%, and further transferred on $d$th position. 
	Observe that after applying these % aforementioned 
	operations, the state takes the form of the following superposition:
	\begin{equation*}
		\dfrac{1}{\sqrt{|E|}}
		\Big( \sqrt{d} \ket{1}_{v} \otimes \ket{0\ldots 0}+  \ket{0}_{v} \otimes  
		\sum_{e \in E  \setminus v} \ket{\psi_e}
		\Big)  ,
	\end{equation*}
	where $E  \setminus v$ denotes a set of edges which do not contain vertex $v$. 
	%\ald{Delete?: Observe that unitary operations in \cref{a10,al1,al2} are defined only on 4-, and 3-dimensional subspaces respectively, and are arbitrary on the remaining parts of Hilbert spaces. 
	%This leaves a space for further optimization of the provided circuit.}

	\textbf{Second and the next steps}. Consider now the graph $G' =(V \setminus v, E\setminus v)$. Notice that the degrees of vertices initially adjacent to the vertex $v$ have changed. 
	We faithfully repeat the procedure from the first step for an arbitrary vertex $w$ from the graph $G'$. 
	Observe that after applying the aforementioned gates, the state takes the following form:
	\begin{align*}
		\dfrac{1}{\sqrt{|E|}}
		\Big( &\sqrt{d_v} \ket{1}_{v} \otimes \ket{0\ldots 0}_{v^c}+ \\  
		& \sqrt{d_w'} \ket{1}_{w} \otimes \ket{0\ldots 0}_{w^c}+  
		\ket{00}_{vw} \otimes  
		\sum_{e \in E  \setminus v,w} \ket{\psi_e}
		\Big)  ,
	\end{align*}
	where $d_{v} $ is the degree of the vertex $v$, while $d_{w}'$ is the degree of the vertex $w$ of the reduced graph, after deleting the vertex $v$, 
	(the degree of $w$ can decrease by $1$). %from the initial degree). 
	We repeat this procedure further, for $G'' =(V \setminus v,w; E\setminus v,w)$, until we delete $N-1$ vertices, which fully separates the initial graph.
	
	\textbf{Final step}. After applying presented procedure iteratively $N-1$ times, the state takes the following form:
	\begin{equation*}
		\dfrac{1}{\sqrt{|E|}} \sum_{v\in V} \sqrt{d_{v}'} \ket{1}_{v} \otimes \ket{0\ldots 0}_{v^c} ,
	\end{equation*}
	where $d_v'$ denotes the degree of the vertex $v$ of the graph reduced 
	 according to 
	%ices minus the appropriate number resulting from the order of the vertex $ v $ 
	the presented procedure. 
	Note that the form of above state is similar to the state $\ket{\text{W}_N}$,
	as it is equivalent to it with respect to the transformations from the SLOCC class. 
	The separation of this state is a well-known procedure \cite{Wgenerating}, and might be obtained by performing two-qubit gates $U_{1i}^{(4)}$:
	\begin{align*}
		\scriptscriptstyle
		\sqrt{d_{v_1}'+ \ldots +d_{v_{i-1}}' } 
		\textstyle
		\ket{10} +
		\scriptscriptstyle
		\sqrt{d_{v_i}'} 
		\textstyle
		\ket{01} &\longmapsto 
		\scriptscriptstyle
		\sqrt{d_{v_1}'+ \ldots + d_{v_{i}}' } 
		\textstyle
		\ket{10} \\
		\ket{00} &\longmapsto \ket{00} \nonumber 
	\end{align*}
	on particles $1$ and $i$. 
	This gate may be simulated only by $3$ CNOT gates, see \cref{circuit}.
	
	\textbf{Cost estimation}. We start with a connected graph with $|V|$ vertices and $|E|$ edges, where the degrees of each vertex are $d_1,\,d_2,\,\hdots < |V|$. In order to delete a given vertex of degree $d$ we need $1 + 6\theta(d-1) + 10\theta(d-2)\cdot(d-2)$, where $\theta(\cdot)$ is the Heaviside theta function. By considering reverse engineering, the spanning tree of the graph costs $|V| - 1$ CNOTs. In the best case the spanning tree will be a star graph with degrees $\{|V| - 1, 1, \hdots, 1\}$. In the next step we pair up the degree-1 vertices and connect them with $\frac{|V|-1}{2}$ edges.
	 Depending on oddness or evenness we are left with an additional edge, which takes up the circuit cost to $6\lceil\frac{|V|-1}{2}\rceil$ CNOTs and leaves all the vertices of degree at least 2. The remaining edges cost in total $10\left(|E| - \lceil\frac{|V|-1}{2}\rceil - |V| + 1\right)\approx10|E| - 5(|V|-1)$ CNOTs. The final addition are the $U^{(4)}_{ij}$ operations which cost at most $3(|V| - 1)$ CNOTs. Thus we arrive at three different regimes:
	\begin{enumerate}
		\item $|E|\sim|V|$, for which the main cost comes from the spanning tree and can be estimated as $4(|V|-1)$ CNOTs.
		\item $|E|\sim\frac{3}{2}|V|$, for which the cost is based based on vertices of degree degree 1 and 2, and is given as $7(|V| - 1)$
		\item $|E|\gg\frac{3}{2}|V|$, for which the cost is dominated by the degree greater than two and is estimated as $10|E| + 2|V| - 2$ CNOTs.
	\end{enumerate}

	%\textbf{Hypergraphs}. \ald{why, and how it can be repeated for hypergraphs} Small comment, procedure might be done in a half steps. 
	
	%Secondly, we estimate the number of accessiable 2-qubit gates for simulating unitary operations presented in \cref{al0,al1,al2}. 
	%Although any three-qubit operation might be simulated by at most 21 $CNOT$ gates \cite{1629135}, 
	%considered three-qubit gates might be simulated by 7 two-qubit gates only, according to  \cite{10.5555/2011517.2011522}. 
	%Therefore, the presented procedure requires at most $7 |E| + N$ $CNOT$ gates in total. \ald{paragraf dla Kuby :)}
	
%	\onecolumngrid
	\begin{figure*}[t]
		\begin{tikzpicture}
			\definecolor{bluegray}{rgb}{0.4, 0.6, 0.8}
			
			\begin{scope}[scale=.4]
				\clip (-0.5,0) rectangle (15,13);
				\tikzstyle{every node}=[circle,draw]
				\node(a) at (1,5) {1};
				\node(b) at (7,10) {2};
				\node[fill=bluegray!50](d) at (6,5) {4};
				\node(e) at (12,7) {};
				\node(f) at (11,4) {};
				\node(c) at (8,2) {3};
				\node(c1) at (7,-2) {};
				\node(a1) at (-1,2) {};
				\node(a2) at (-1,10) {};
				\node(b1) at (6,12) {};
				\node(b2) at (7,18) {};
				\node(e1) at (16,9) {};
				\node(f1) at (16,1) {};

				\draw[ultra thick,gray] (a)--(d);
				\draw[ultra thick,gray] (b)--(d);
				\draw[ultra thick,gray] (c)--(d);
				
				\draw[ultra thick,gray] (c)--(f);
				\draw[ultra thick,gray] (b)--(e);
				\draw[ultra thick,gray] (f)--(e);
				
				\draw[ultra thick,gray] (e)--(e1);
				\draw[ultra thick,gray] (f)--(f1);
				\draw[ultra thick,gray] (e)--(e1);
				\draw[ultra thick,gray] (c)--(c1);
				\draw[ultra thick,gray] (a)--(a1);
				\draw[ultra thick,gray] (a)--(a2);
				\draw[ultra thick,gray] (b)--(b1);
				\draw[ultra thick,gray] (b1)--(a2);
				\draw[ultra thick,gray] (b1)--(b2);
				
				\draw [cyan,thick] plot [smooth cycle, tension=1] coordinates { (0,5) (6,6.5)  (10,1.6) (8,0.5) (4,1.5) };
				\draw [red, thick] plot [smooth cycle, tension=0.7] coordinates { (5,5) (6,6)  (8,3) (9,2) (8,1) (6,4)};
				\draw [blue,thick] plot [smooth cycle, tension=1] coordinates { (6, 10) (8,10.5) (8,5)  (9.5,1) (6,2) (3.5,5)};
			\end{scope}
			
			\definecolor{ceruleanblue}{rgb}{0.16, 0.32, 0.75}
			\begin{scope}[shift={(7.5,0)},scale=.8]
				\node[left] at (0,5) {$\ket{2}$};
				\node[left] at (0,4) {$\ket{3}$};
				\node[left] at (0,3) {$\ket{4}$};
				\node[left] at (0,2) {$\ket{1}$};
				\node[left] at (0,1) {$\ket{\text{others}}$};
				
				\draw[thick,-] (0,5)--(3,5);
				\draw[thick,-] (0,4)--(1,4);
				\draw[ultra thick,-,bluegray] (0,3)--(1,3);
				\draw[thick,-] (0,2)--(1,2);
				\draw[thick,-] (0,1)--(8,1);
				
				\draw[thick,cyan] (1,1.5) rectangle (2,4.5) node [pos=.5]{$U^{(1)}_{1,4}$};
				\draw[thick,blue] (3,2.5) rectangle (4,5.5) node [pos=.5]{$U^{(2)}_{2,4}$};
				\draw[thick,red] (5,2.5) rectangle (6,4.5) node [pos=.5]{$U^{(3)}_{3,4}$};
				
				\draw[ultra thick,-,dashed,rounded corners, gray] (8, 0.5)--(8,4.5)--node[midway,below]    
				{\large next steps}(11,4.5)--(11,0.5);

				\draw[thick,-] (2,4)--(3,4);
				\draw[ultra thick,-,bluegray] (2,3)--(3,3);
				\draw[thick,-] (2,2)--(8,2);
				
				\draw[thick,-] (4,5)--(6.5,5);
				\draw[thick,-] (4,4)--(5,4);
				\draw[ultra thick,-,bluegray] (4,3)--(5,3);
				
				\draw[thick,-] (6,4)--(6.5,4);
				\draw[ultra thick,-,bluegray] (6,3)--(6.5,3);

				\draw[thick,-] (6.5,5)--(7.5,4);
				\draw[thick,-] (6.5,4)--(7.5,3);
				\draw[line width=2mm,-,white] (6.5,3)--(7.5,5);
				\draw[ultra thick,-,bluegray] (6.5,3)--(7.5,5);

				\draw[ultra thick,-,bluegray] (7.5,5)--(12,5);
				\draw[thick,-] (7.5,4)--(8,4);
				\draw[thick,-] (7.5,3)--(8,3);
				
				\draw[thick,-] (11,4)--(12,4);
				\draw[thick,-] (11,3)--(12,3);
				\draw[thick,-] (11,2)--(12,2);
				\draw[thick,-] (11,1)--(12,1);
				
				=
				\draw[thick,decorate,decoration={brace,mirror},xshift=0pt,yshift=0pt]
				(0.5,0) -- (6.5,0) node [black,midway,yshift=-0.35cm] {Deleting vertex $4$};
				\draw[thick,decorate,decoration={brace,mirror},xshift=0pt,yshift=0pt]
				(7.5,0) -- (11.5,0) node [black,midway,yshift=-0.35cm] {Deleting rest of vertices};
			\end{scope}

			\begin{scope}[shift={(14,0.5)},scale=.15]
				\clip (-0.5,-1) rectangle (15,16);
				\tikzstyle{every node}=[circle,draw,scale=0.6]
				\node(a) at (1,5) {};
				\node(b) at (7,10) {};
				\node[fill=bluegray!50](d) at (6,5) {4};
				\node(e) at (12,7) {};
				\node(f) at (11,4) {};
				\node(c) at (8,2) {};
				\node(c1) at (7,-2) {};
				\node(a1) at (-1.5,2) {};
				\node(a2) at (-1.5,10) {};
				\node(b1) at (6,12) {};
				\node(b2) at (7,18) {};
				\node(e1) at (16,9) {};
				\node(f1) at (16,1) {};

				\draw[ultra thick,gray] (c)--(f);
				\draw[ultra thick,gray] (b)--(e);
				\draw[ultra thick,gray] (f)--(e);
				
				\draw[ultra thick,gray] (e)--(e1);
				\draw[ultra thick,gray] (f)--(f1);
				\draw[ultra thick,gray] (e)--(e1);
				\draw[ultra thick,gray] (c)--(c1);
				\draw[ultra thick,gray] (a)--(a1);
				\draw[ultra thick,gray] (a)--(a2);
				\draw[ultra thick,gray] (b)--(b1);
				\draw[ultra thick,gray] (b1)--(a2);
				\draw[ultra thick,gray] (b1)--(b2);
				
			\end{scope}
			
			\begin{scope}[shift={(0,-4.3)}]
				
				\node at (3.5,3) {$U^{(1)}_{\{i,j,k\}}$};
				
				\draw[gray] (0,0)--(7,0);
				\draw[gray] (0,1)--(7,1);
				\draw[gray] (0,2)--(7,2);
				
				\draw[thick, black, fill=white] (.5,1) circle (.25);
				\draw[thick, black, fill=black] (.5,2) circle (.1);
				\draw[thick, black] (.5,2)--(.5,0.75);
				\draw[thick, black] (.75,1)--(.25,1);
				
				\draw[thick, black, fill=white] (1.5,2) circle (.25);
				\draw[thick, black, fill=black] (1.5,1) circle (.1);
				\draw[thick, black] (1.5,2.25)--(1.5,1);
				\draw[thick, black] (1.75,2)--(1.25,2);
				
				\draw[thick, black, fill=white] (3.5,2) circle (.25);
				\draw[thick, black, fill=black] (3.5,1) circle (.1);
				\draw[thick, black, fill=black] (3.5,0) circle (.1);
				\draw[thick, black] (3.5,2.25)--(3.5,0);
				\draw[thick, black] (3.75,2)--(3.25,2);
				
				\draw[thick, black, fill=white] (5.5,2) circle (.25);
				\draw[thick, black, fill=black] (5.5,1) circle (.1);
				\draw[thick, black] (5.5,2.25)--(5.5,1);
				\draw[thick, black] (5.75,2)--(5.25,2);
				
				\draw[thick, black, fill=white] (6.5,1) circle (.25);
				\draw[thick, black, fill=black] (6.5,2) circle (.1);
				\draw[thick, black] (6.5,2)--(6.5,0.75);
				\draw[thick, black] (6.75,1)--(6.25,1);

				\draw[thick,black,fill=white] (2.1,1.6) rectangle (2.9,2.4) node [pos=.55]{$P^{\color{white}\dag}$};
				\draw[thick,black,fill=white] (4.1,1.6) rectangle (4.9,2.4) node [pos=.55]{$P^\dag$};
			\end{scope}
			
			\begin{scope}[shift={(8,-4.3)}]
				
				\node at (1.5,3) {$U^{(2)}_{\{i,j,k\}}$};
				
				\draw[gray] (0,0)--(3,0);
				\draw[gray] (0,1)--(3,1);
				\draw[gray] (0,2)--(3,2);
				
				\draw[thick, black, fill=white] (1.5,2) circle (.25);
				\draw[thick, black, fill=black] (1.5,1) circle (.1);
				\draw[thick, black, fill=black] (1.5,0) circle (.1);
				\draw[thick, black] (1.5,2.25)--(1.5,0) ;
				\draw[thick, black] (1.75,2)--(1.25,2);
				
				\draw[thick, black, fill=white] (0.5,1) circle (.25);
				\draw[thick, black] (.5,.75)--(0.5,1.25) ;
				\draw[thick, black] (.75,1)--(.25,1);
				
				\draw[thick, black, fill=white] (2.5,1) circle (.25);
				\draw[thick, black] (2.5,.75)--(2.5,1.25) ;
				\draw[thick, black] (2.75,1)--(2.25,1);
				
				\draw[thick,black,fill=white] (.1,1.6) rectangle (.9,2.4) node [pos=.55]{$P^{\color{white}\dag}$};
				\draw[thick,black,fill=white] (2.1,1.6) rectangle (2.9,2.4) node [pos=.55]{$P^\dag$};
			\end{scope}
			
			\begin{scope}[shift={(12,-4.8)}]
				
				\node at (2.5,3.5) {$U^{(4)}_{\{i,j\}}$};
				
				\draw[gray] (0,2)--(5,2);
				\draw[gray] (0,1)--(5,1);
				
				\draw[thick, black, fill=white] (2.5,2) circle (.25);
				\draw[thick, black, fill=black] (2.5,1) circle (.1);
				\draw[thick, black] (2.5,2.25)--(2.5,1) ;
				\draw[thick, black] (2.75,2)--(2.25,2);
				
				\draw[thick, black, fill=white] (.5,1) circle (.25);
				\draw[thick, black, fill=black] (.5,2) circle (.1);
				\draw[thick, black] (.5,.75)--(.5,2) ;
				\draw[thick, black] (.75,1)--(.25,1);
				
				\draw[thick, black, fill=white] (4.5,1) circle (.25);
				\draw[thick, black, fill=black] (4.5,2) circle (.1);
				\draw[thick, black] (4.5,.75)--(4.5,2) ;
				\draw[thick, black] (4.75,1)--(4.25,1);
				
				\draw[thick,black,fill=white] (1.1,1.6) rectangle (1.9,2.4) node [pos=.55]{$P^{\color{white}\dag}$};
				\draw[thick,black,fill=white] (3.1,1.6) rectangle (3.9,2.4) node [pos=.55]{$P^\dag$};
			\end{scope}
			
		\end{tikzpicture}
		\caption{Procedure sketched for vertex $4$ of degree $d=3$. 
			Firstly, we apply unitary three-qubit operations $U_{ \{1,3,4 \}}^{(1)}$ on parties $1,3,4$, secondly $U_{ \{2,3,4 \}}^{(2)}$ on parties $2,3,4$, and finally two-qubit gate $U_{ \{3,4 \}}^{(3)}$ on parties $3,4$. 
			As a result, we obtain the state separable with respect to the $4$-th particle. 
			In the next step, we % continue %our procedure 
			work with the graph with deleted all edges adjacent to the $4$th vertex. Presented unitary operations $U_{ \{1,3,4 \}}^{(1)},U_{ \{2,3,4 \}}^{(2)} $, and $U_{ \{3,4 \}}^{(3)}$ might be simulated by the following quantum gates. All three constructions rely on an auxiliary unitary gate $P$, which satisfies $P^\dag \sigma_x P = \sqrt{a - b}\sigma_z + \sqrt{b}\sigma_x$ with $a$ and $b$ set according to the corresponding operation.}
		\label{circuit}
	\end{figure*}
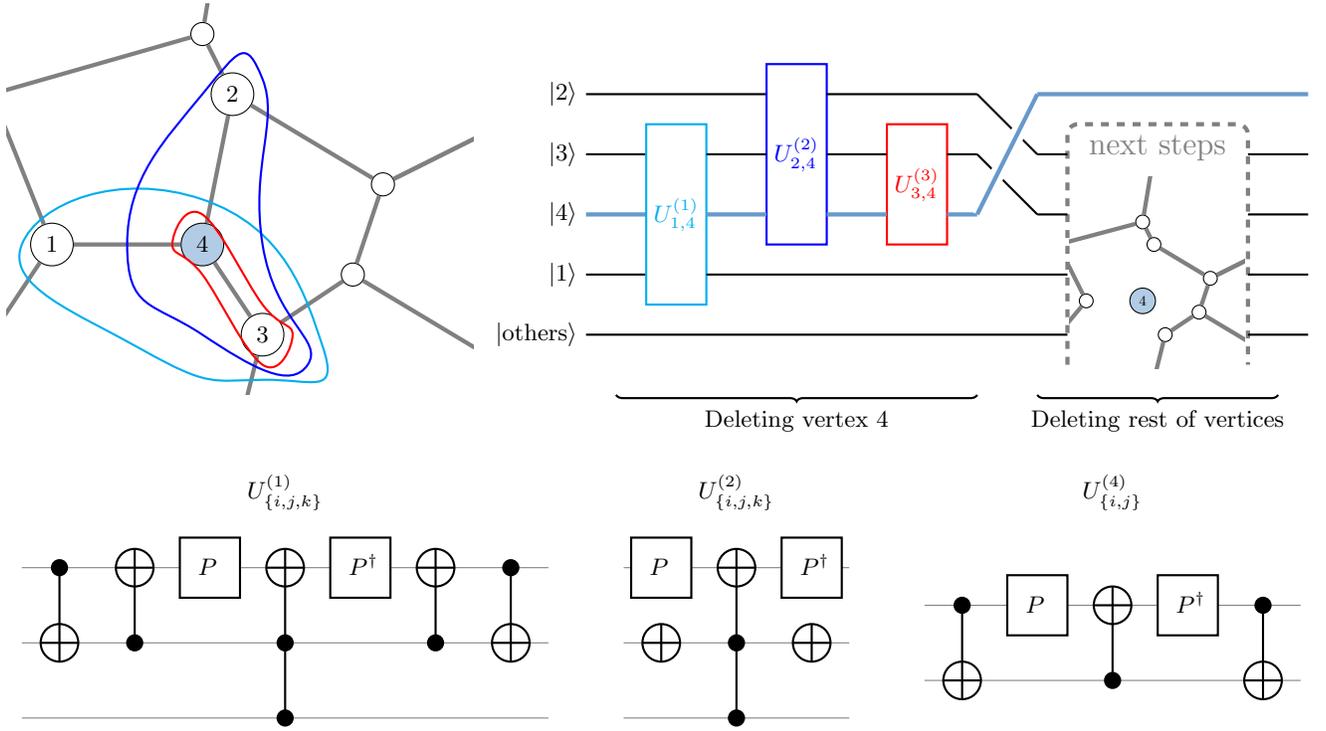
%	\twocolumngrid
	
%	\onecolumngrid

	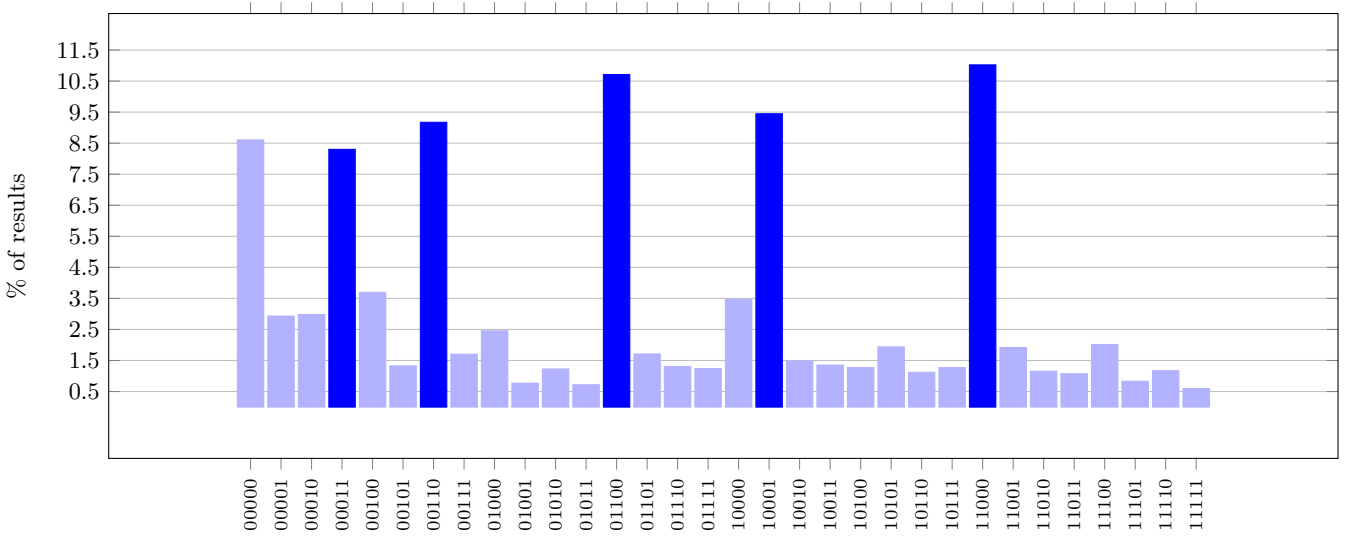
\begin{figure*}[t]
		\centering
		\begin{tikzpicture}
			\begin{axis}[ 
				ybar,
				ymajorgrids = true,
				height=7.5cm,
				width=\linewidth,
				enlargelimits=0.15,
				legend style={at={(0.5,-0.15)},
					anchor=north,legend columns=-1},
				ylabel={\% of results},
				symbolic x coords={00000, 00001, 00010, 00011, 00100, 00101, 00110, 00111, 01000, 01001, 01010, 01011, 01100, 01101, 01110, 01111, 10000, 10001, 10010, 10011, 10100, 10101, 10110, 10111, 11000, 11001, 11010, 11011, 11100, 11101, 11110, 11111},
				ytick={0.5,1.5,...,11.5},
				xtick=data,
				%			nodes near coords,
				%			nodes near coords align={vertical},
				x tick label style={rotate=90,anchor=east, font = \scriptsize},
				]
				\addplot[color=blue!30, fill=blue!30, thin, bar shift = 0pt] coordinates {
					(00000, 8.604) 
					(00001, 2.931) 
					(00010, 2.983) 
					(00011, 0) 
					(00100, 3.696) 
					(00101, 1.327) 
					(00110, 0) 
					(00111, 1.703) 
					(01000, 2.457) 
					(01001, 0.77) 
					(01010, 1.227) 
					(01011, 0.719) 
					(01100, 0) 
					(01101, 1.71) 
					(01110, 1.309) 
					(01111, 1.239) 
					(10000, 3.466) 
					(10001, 0) 
					(10010, 1.483) 
					(10011, 1.353) 
					(10100, 1.274) 
					(10101, 1.937) 
					(10110, 1.12) 
					(10111, 1.271) 
					(11000, 0) 
					(11001, 1.92) 
					(11010, 1.155) 
					(11011, 1.074) 
					(11100, 2.015) 
					(11101, 0.831) 
					(11110, 1.172) 
					(11111, 0.597)};	
				\addplot[color=blue, fill=blue, thin, bar shift = 0pt] coordinates {
					(00011, 8.301) 
					(00110, 9.173) 
					(01100, 10.712) 
					(10001, 9.448) 
					(11000, 11.022)};	
			\end{axis}
		\end{tikzpicture}
		\caption{Mean results acquired over
		 %$\sim 2^{19.5}$ 
		 740 000
		 samples acquired on IBM Vigo, Athens and Santiago Computers. Prominent peaks at all cyclic permutations of $\ket{00011}$ are plotted in dark,
		  with an additional significant peak at $\ket{00000}$. 
		 %The result is expected, as the decoherence contributes significantly to this particular readout.
		 Due to effects of decoherence we observe a background of all
		 remaining states represented by light bars. 
		 }
		\label{cyclic_hist}
	\end{figure*}
%	\twocolumngrid
	
%	\begin{figure}[h!]
%		\centering
%		\begin{tikzpicture}
%			\definecolor{greeo}{RGB}{91,173,69}
%			\begin{scope}[shift={(-2.5,0)},scale=.45]
%				\def\r{2cm}
%				\foreach \t in {0,...,4} {
%					\node[circle,draw, draw=black!95!white,fill=black!95!white,outer sep=0mm, inner sep=0mm, minimum size=1.5mm] (a\t) at ({90+\t*(360/5)}:\r) {} ;
%				}
%				
%				% draw the edges of the pentagon in a silly way
%				% draws edges from node-center to node-center
%				\draw[thick,gray] (a0) edge (a1) ;
%				\draw[thick,gray]  (a1) edge (a2) ;
%				\draw[thick,gray]  (a2) edge (a3) ;
%				\draw[thick,gray]  (a3) edge (a4) ;
%				\draw[thick,gray] (a4) edge (a0) ;
%				
%				
%				\draw[greeo]  (a0) edge (a2) ;
%				\draw[greeo]  (a1) edge (a3) ;
%				\draw[greeo]  (a2) edge (a4) ;
%				\draw[greeo]  (a3) edge (a0) ;
%				\draw[greeo]  (a4) edge (a1) ;
%				\tikzstyle{every node}=[]
%				\node[] at (0,-3) {$\ket{C_5} = \ket{11000} + \hdots + \ket{00011}$};
%			\end{scope}
%		\end{tikzpicture}
%		\caption{\textbf{TODO}} %TODO
%		\label{cyclic_5_plot}
%	\end{figure}

\vspace{0.8cm}	
\section{5-qubit cyclic state $\ket{C_5}$ implementation}
	\label{5-qubit}
	
	In order to demonstrate the viability of the provided construction we have explicity evaluated the construction for the simplest nontrivial cyclic state $\ket{C_5}$. Initialization of the state is done basically in two steps. The first step involves initialization of excitations and distributing them in order to arrive at the state of the form
	
	\begin{equation*}
		\dfrac{1}{\sqrt{|E|}} \sum_{v\in V} \sqrt{d_{v}'} \ket{1}_{v} \otimes \ket{0\ldots 0}_{v^c}.
	\end{equation*}
	
	In the next step, the distribution of edges between the vertices is effected by applying the unitaries \eqref{al1} and \eqref{al2} in reversed order to finally reconstruct the graph corresponding to $\ket{C_5}$. The overall circuit requires a single operator $U^{(2)}$ and remaining operations are done by using $U^{(3)}$ and $U^{(4)}$, which totals to estimated $22$ CNOT operations, which does not take into account the topology of the quantum computer to be used.
	
	Such a circuit can be realised on the state-of-the-art 5-qubit quantum computers provided by IBM -- linear-topology Santiago and Athens with quantum volumes (QV) of 32 and Vigo with QV of 16 with T topology. In total we used 
%	around $2^{19.5}$ 
more than 740 000 samples over all three computers, which gives distribution with significant values for all expected computational states proceeding from cyclic permutations of $\ket{00011}$ state with the probability $0.487$  of finding the system in one of them, see \cref{cyclic_hist}.
	
	In order to further ascertain we decided to compare the results to a model of noise composed from the exponential probability of decay to the base state and a constant probability of flip from 0 to 1. In order to perform the noise analysis we first divide the readouts into signal,
	
	$$
		\mathcal{S} = \left\{\left(i_1,\hdots,i_5\right): \sum_{j=1}^5i_j = 2, \exists_j i_j = i_{j+1} = 1\right\},
	$$
	and noise,
	$\mathcal{N} = \left\{0,1\right\}^{\otimes 5} \backslash \mathcal{S}$. Furthermore, the noise has been divided into the subsets with constant numbers of excitations,
	$
		\mathcal{N}_k = \left\{\left(i_1,\hdots,i_5\right) \in \mathcal{N},\,i_1 + \hdots + i_5 = k\right\}
	$, hence $\mathcal{N} = \mathcal{N}_0 \cup \hdots \cup \mathcal{N}_5$.
	
	This division allows us to introduce mean probability of noise with $n$ excitations,
	
\begin{equation}
		\overline{P}_k = \frac{1}{\left|\mathcal{N}_k\right|} \sum_{i_1,\hdots,i_5\in\mathcal{N}_k} p_{i_1,\hdots,i_5},
\label{pp}
\end{equation}
	where the probabilities are found from empirical data by dividing the number hits for specific state $N_{i_1,\hdots,i_5}$ by the total number of hits,
	
	$$
		p_{i_1,\hdots,i_5} = \frac{N_{i_1,\hdots,i_5}}{N_{tot}}\,.
	$$
	
	Second step was to fit the model with exponential and flip noises superposed, which resulted in a satisfactory fit with total squared distances equal to $\sum_{k=0}^4(\overline{P}_k - \overline{P}_{k \text{,fit}})^2 \approx 0.075$, see \cref{fig:noisefit}.
	
	In conclusion, the experiment shows that implementation of the introduced preparation procedure and subsequent executing it on a real-world quantum machine is not only viable, but follows simple predictions about the nature of errors present in such computers.
	
	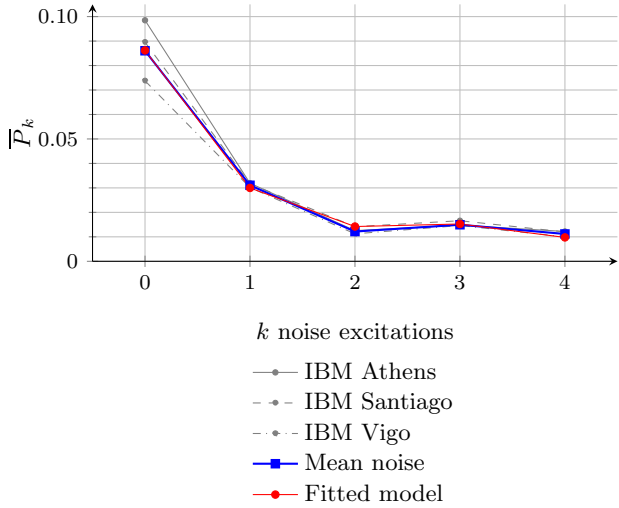
\begin{figure}[h!]
		\begin{tikzpicture}
			\begin{axis}[  
				width=0.99\linewidth,
				height=5cm,
				axis lines=left,
				grid=both,
				legend style={draw=none, font=\small},
				legend cell align=left,
				legend style={at={(0.5,-0.35)}, anchor=north},
				label style={font=\small},
				ticklabel style={font=\footnotesize},
				xlabel style={at={(0.5, -0.05)}},
				ylabel style={at={(0.04, 0.5)}},	
				yticklabel style={
					/pgf/number format/fixed,
					/pgf/number format/precision=2,
					/pgf/number format/fixed zerofill
				},
				ymin=0,
				xmin= -0.5,
				ymax= 0.105,
				xmax= 4.5,
				xtick={0,1,...,4},
				ytick={0,0.05,0.1},
				yticklabel={\ifdim\tick pt=0pt 0 \else\pgfmathprintnumber{\tick}\fi},
				minor y tick num = 4, 
				xlabel= $k$ noise excitations,
				ylabel= $\overline{P}_k$,
				legend entries={{IBM Athens},{IBM Santiago},{IBM Vigo},{Mean noise},{Fitted model}},
				]
				\addplot[color=gray, mark=*, thin, mark options={scale=.5}] coordinates {
					(0, .0985) (1, .0319) (2,.01254) (3,.01476) (4,.01219)};
				\addplot[color=gray, mark=*, thin, mark options={scale=0.5}, dashed] coordinates {
					(0, .08955) (1, .03166) (2,.01414) (3,.01658) (4,.012)};
				\addplot[color=gray, mark=*, thin, mark options={scale=0.5}, dashdotted] coordinates {
					(0, .07376) (1, .03015) (2,.01124) (3,.01462) (4,.01001)};
				\addplot[color=blue, mark=square*, thick, mark options={scale=0.7}] coordinates {
					(0, .08604) (1, .03107) (2,.01216) (3,.01494) (4,.01117)};
				\addplot[color=red, mark=*, thin, mark options={scale=0.7}] coordinates  {
					(0, .08618) (1, .02998) (2, .01421) (3, .01525) (4,.00976)};
			\end{axis}
		\end{tikzpicture}
		\caption{Mean probability of noise (\ref{pp}) for each machine %have been 
		collected according to the number of excitations. Averaged noise (blue) has been taken as a basis for fitting  of the noise model. 
	composed of exponential decay to the ground state of the entire system superposed with a random 0-1 flip with equal probability for all qubits. 
Resulting fit, plotted in red,
% is based on the assumption
%of an exponential decay of the entire system to the ground state  
 is satisfactory, with a total deviation of $\sum_{k=0}^4(\overline{P}_k - \overline{P}_{k \text{,fit}})^2 \approx 0.075$.}
		\label{fig:noisefit} 
	\end{figure}
	\section{Hamiltonians of excitation-states}
	\label{Hamiltonians}
	
	The Dicke states are ground states of the Dicke Hamiltonian
	with a well--defined number of excitations. The excitation-states introduced above
	can be consider as ground states of  analogous Hamiltonians, 
	with the same property of a well defined excitation number. 
	%`nu which are 2-body interaction Hamiltonians. We 
	
	The single-mode Dicke Hamiltonian (known also as Tavis-Cummings model or generalized Jaynes-Cummings model) can be written as \cite{PhysRevE.67.066203,DickeModelRevisited}:
	\[
	H = \omega_0 \; J_z+ 
	\omega \; a^\dagger a +
	\dfrac{\lambda}{\sqrt{N}} \; \underbrace{(a^\dagger +a)(J_{-} + J_{+})}_{\text{interaction}}\; ,
	\]
	where $J_\alpha$ are the collective operators:
	\[
	J_z \equiv \sum_{i=1}^N \sigma_z^{i} , \quad 
	J_\pm \equiv \sum_{i=1}^N \sigma_\pm^{i} . 
	\]
	%%%%%%%%%%%%%%%  HERE %%%%%%%%%%
	At zero interaction,  $\lambda =0$, the coupling term in the Dicke Hamiltonian vanishes. 
	Thus its eigenstates have the tensor product form with a factor representing the
	Fock states of the field and the other factor as an eigenstate
	of the collective angular momentum operator $J_z$. 
	The operator $J_z$  has the following eigenvalues: $- N/2, - N/2+1 ,\ldots ,N/2$, which are strongly degenerated. 
	The further partition of its eigenspaces is done by the effective total angular momentum operator $J$, which might be written as:
	\[
	J \equiv J_{+} J_{-} + J_z (J_z+ \mathbb{I} ),
	\]
	in terms of the raising and lowering operators. 
	The Dicke states $\ket{j,m}_N$ are eigenstates of both operators $J_z$ and $J$:
	\begin{align*}
		J_z \ket{j,m}_N &= m \ket{j,m}_N,\\
		J \ket{j,m}_N &= j(j+1) \ket{j,m}_N 
	\end{align*}
	with $m =-j, -j+1 ,\ldots ,j$ representing the number of excitations, while $j =1/2 ,3/2 ,\ldots ,N/2$ (for $N$ odd) and $j =0,1,\ldots , N/2$ (for $N$ even) corresponds
	to the cooperation number. 
	In general, atomic configurations for $N>2$ 
	%are non-trivial co to znaczy?
	contain entanglement \cite{Wang_2002}
	and are degenerated for $j < N/2$ \cite{DickeModelRevisited}. 
	Taking the maximal value $j=N/2$ results in the aforementioned fully symmetric Dicke states $\ket{D_N^k}$,
	\[
	\ket{D_N^k} := \ket{N/2 ,k +N/2}_N,
	\]
	In several physical situations, it is convenient to  
	consider the eigenvectors spanned by the fully
	symmetric Dicke states \cite{Wang_2002}. 
	Indeed, this significantly reduces the dimension 
	of space from $2^N$ to $N+1$. 
	
	Independently of the excitation number $k$, 
	the Dicke states $\ket{D_N^k}$ are eigenstates of the operator $J_{+} J_{-}$. 
	The latter operator 
	% The form of the operator $J_{+} J_{-}$ assumes fully symmetric 
	% interaction between parties.  Indeed, it might 
	can be rewritten as a fully symmetric, two-body Hamiltonian, 
	\[
	H :=J_{+} J_{-} = \sum_{i=1}^N \sigma_{-}^{(i)} \sigma_{+}^{(i)}  +\sum_{i \neq j}^N \sigma_{-}^{(i)} \sigma_{+}^{(j)} ,
	\]
	%J_{+} J_{-} = \sum_{i=1}^N \ket{0}_i \bra{0} +\sum_{i,j=1}^N \ket{0}_i \bra{1} \otimes \ket{1}_j \bra{0}
	in which excitations might be exchanged between any two parties. 
	%\ald{Isn't it abuse to call it Hamiltonian? We restrict ourselves to the eigenspace of $J_z$. wedlug mnie moze zostac...} 
	%
	The Dicke states $\ket{D_N^k}$ are eigenvectors 
	corresponding to the non-degenerated maximal eigenvalue $k (N+1-k)$.
	In some physical problems the particles are not distributed symmetrically 
	as some interactions are more likely to occur. 
	Related two-body Hamiltonian of the system of $N$ particles with spin-$1/2$
	considered in \cite{Huber_2016} generalizes the  Heisenberg model. 
	
	%\[
	%H =\sum_{i,j=1}^N \sum_{\alpha \beta} \lambda_{ij}^{\alpha \beta} \sigma^{(i)}_\alpha \sigma^{(j)}_\beta
	%\]
	%where $ \sigma^{(i)}_\alpha$ and $\sigma^{(j)}_\beta$ are Pauli matrices and $\text{Id}_2 ,\sigma_x ,\sigma_y ,\sigma_z$. 
	%This model generalizes the statistical mechanical Heisenberg model. 
	
	We show that excitation-states, and hence Dicke-like states,
	are ground states for Hamiltonians with an interaction term  analogous
	to $J_{+} J_{-}$. 
	As in the Dicke case we discuss  the class of states with a well-defined excitation number 
	and analyze a single subspace of the $J_z$ operator. 
	Firstly, we restrict ourselves to the subspace of the operator $J_z$ relevant to two-excitations, i.e.  we set $k=2$. 
	%Consider the simplest non-trivial case of an excitation-state $\ket{G}$ related to the graph $G$. 
	For any graph $G$, consider the following four-body Hamiltonian, where excitations might be exchanged between any pair of edges:
	\begin{equation}
		H_G :=  J_{+}^G J_{-}^G ,
		\label{DH2}
	\end{equation}
	where 
	\[
	J_{\pm}^G \equiv \sum_{v v' \in E} \sigma_{\pm}^{(v)} \sigma_{\pm}^{(v')} \; .
	\]
	%\sum_{ \substack{ vv' \in E \\ v'v'' \in E}} \sigma_{-}^{(v)} \sigma_{-}^{(v')} \sigma_{+}^{(v')}  \sigma_{+}^{(v'')} 
	%Since $\sigma_{\pm}^{(v)}$ are projection operators, 
	Notice that operators $J_{+}^G, J_{-}^G $ are
	constructed analogously to the operators $J_{+} J_{-}$. 
	The maximal eigenvalue of the Hamiltonian $H_G$ is equal to the squared number of edges $|E|^2$. 
	This energy level is non-degenerated 
	and the corresponding eigenstate is given by  the excitation-state $\ket{G}$. 
	%This energy level is, however, degenerated, its eigenstates are spanned by exactly two states: the Dicke state $\ket{D_N^2} $ and the excitation-state $\ket{G}$. 
	Indeed,  the Hamiltonian $H_G$ is a sum of $|E|^2$ projections, which justifies the bound of the energy. 
	The state, $\ket{G}$ saturates this bound, contrary to any other state 
	with specified number of excitations, $k=2$. 
	% is here visible. 
	
	The excitation-state $\ket{G}$ can
	be seen as a ground state of the three-body interaction Hamiltonian,
	\begin{equation}
	\label{3bodyH}
	H_G = \sum_{v \in V} \sum_{ \substack{ vv' \in E \\ v'v'' \in E}} \sigma_{-}^{(v)} \big( \sigma_{-}^{} \sigma_{+}^{} \big)^{(v')}  \sigma_{+}^{(v'')} .
	\end{equation}
	The above Hamiltonian describes the case, in which
	the pairs of excitations might be exchanged only between neighboring edges. 
	The maximal eigenvalue is equal to ${d \choose 2} |V|$, where $d$ is the degree of a vertex. 
	In analogy to the Hamiltonian (\ref{DH2}) the maximal eigenvalue is not degenerated
	and the state $\ket{G}$ forms the relevant eigenvector. 
	%Interestingly, it has the same maximal eigenstate and relevant eigenspace. 
	Notice that the Dicke state $\ket{D^2_N}$ is a ground state of the Hamiltonian associated with the full graph $G$ on $N$ vertices.
	
	It is worth noting that $\sigma_-\sigma_+ = (\mathbb{I} - \sigma_z)/2 = |1\rangle\langle1|$ and the Hamiltonian can be written in an alternative way:
\begin{equation}
	\label{3bodyHbis}
	H_G = \sum_{v \in V} \sum_{ \substack{ vv' \in E \\ v'v'' \in E}} \sigma_{-}^{(v)} |1\rangle\langle1|^{(v')}  \sigma_{+}^{(v'')} .
	\end{equation}
	In this form the term  $\sigma_-\sigma_+$ 
	 can be interpreted as conditional hopping interaction - if the site $v'$ is occupied, the hopping from $v$ to $v''$ is effected, otherwise no interaction happens. 
	The behaviour is very much akin to what is called the quantum transistor. 
	One of the simplest models of such a quantum transistor  \cite{loft2018quantum} involve left and right qubit and a two qubit gate, an interaction very similar to observed here.
	% Rysunek? moze niepotrzebnie}\ald{Chyba tak, dorzucanie kolejnego to może być już za dużo, szczególnie że to rezultat spoza pracy}
	
	Similarly, excitation-states related to the $k$-uniform hypergraphs $G$ might be described as ground states of the Hamiltonian with $2k$-body interaction, $H= J_{+}^G J_{-}^G $. 
	% or simpler of $(k+1)$-body Hamiltonians. 
	%\ald{TBC}
	The excitation-state $\ket{G}$ relevant to the $k$-hypergraph $G$ 
	is a  ground state of the Hamiltonian (\ref{DH2}),
	 where the operators $J_{\pm}^G$ are defined as
	\[
	J_{\pm}^G \equiv \sum_{ \{v^1 , \ldots ,v^k \} \in E} \sigma_{\pm}^{(v^1)} \cdots\sigma_{\pm}^{(v^k)} .
	\]
	
	In summary, the Dicke states $\ket{D_N^m}$ are uniquely determined as eigenvectors of operators $J_z$ and the two-body interaction Hamiltonian,
	 $H=J_{+} J_{-}$. 
	Excitation-states are also eigenvectors of an operator $J_z$, nevertheless the further division of its subspaces  % is slightly more complicated
	differs from the one corresponding to the Dicke states. 
	Indeed, for any number $k$ of excitations and an excitation state $\ket{G}$ 
	one may construct a $(k+1)$-body interaction Hamiltonian with the state $\ket{G}$ being its non-degenerated ground state. 
	Note that the construction of the Hamiltonian presented above 
	depends on on the number of excitations. 
	
	%Symmetric states of $N$ qubits posses remarkably simple classification with respect to the stochastic local operations and classical communication \cite{Bastin_2009}. 
	%They can be classified by the \textit{Young diagrams} of a size $N$ \cite{Mathonet_2010}. 
	%to juz jest w konkluzjach
	%%%%%%%%%%%%%%
	%This is in a strong contrast to the classification of a general $N$-qubit state, 
	%for which  the full classification is known for systems
	%consisting of four qubits only \cite{FourQubits,Lamata_2007,Gharahi_Ghahi_2018}. 
	%Classifying states with limited symmetries might be an accessible task 
	%and a step toward the goal to extend the current limits
	%of understanding of multipartite entanglement.
	% to tez
	%the answer to the question: when and how the simplicity of the classification brakes. 
	%\ald{Maybe to Conclusions as a further discussion?}
	%By restricting the group of symmetries of the state, we may 
	%\ald{Wlasciwie to zadziala tylko dla S_{N_1} \times S_{N_2}}
	
	\section{Classification problem}
	
	From a theoretical point of view symmetric states of $N$ qubits posses remarkably simple classification with respect to the stochastic local operations and classical communication (SLOCC) \cite{Bastin_2009} as they can 
	be classified by the \textit{Young diagrams} of a size $N$ \cite{Mathonet_2010}.  
	This is in a strong contrast to the classification of a general $N$-qubit state, 
	for which  the full classification is known for systems
	consisting of four qubits only \cite{FourQubits,Lamata_2007,Gharahi_Ghahi_2018}. 
	Classifying states with limited symmetries might be an accessible task 
	and a step toward the goal to extend the current limits
	of understanding of multipartite entanglement
	in the case of a large number of subsystems.

	\section{Concluding Remarks}
	
	In this work we advocate an original approach to construct genuinely entangled multipartite states by using a group action or, more generally, graphs and hypergraphs. 
	Obtained states resemble the Dicke states, whereas the excitations are chosen only between specific subsystems related to the structure of the graph. 
	We propose a quantum circuit generating the states constructed, 
	with the same complexity as similar circuits for the Dicke states. 
	We successfully simulated considered states on available quantum computers: IBM – Santiago, Vigo and Athens. 
	Moreover, discussed states are shown to form ground states of
	$3$-body Hamiltonians, analogous to the original Hamiltonian of Dicke. 
	Presented interaction, in turn, resembles quantum transistor behavior.
	
	%Excitation-states provide an alternative approach for constructing quantum networks, especially in a small scale setting. 
	%Instead of sharing entangled pairs between nodes, nodes may share one (possibly in many copies) genuinely entangled state. 
	%Such states might be seen as a common resource and further used in concrete protocols on subsystems. 
	%Similarly to Dicke states, excitations states are in general persistent and robust for the particle loss. 
	%Therefore they are desirable resources for protocols in which some parties (not specified at the beginning) adopt a passive attitude. 

%	Excitation-states might be seen as a common resource and they can be further used in concrete protocols on subsystems. 
%	In analogy to the Dicke states, the excitation-states are in general persistent and robust with respect to the particle loss. 
%	Therefore they form desirable resources for protocols in which some parties (not specified at the beginning) adopt a passive attitude. 
	%It is inevitable that 
	In various realistic scenarios
	the full symmetry between collaborating systems is not required or not desirable. 
	Excitation-states appear as good candidates for such less-symmetric tasks. 
	For this reason, we investigate the possible symmetries of such states
	and demonstrate  that not all kinds of symmetry are possible.  
	This reflects the impossibility of arranging specifics tasks of a given symmetry. 
	Wherefore, special attention is dedicated to the family of excitation-states,
	   which exhibits a highly symmetric structure. 
	We refer to such states as Dicke-like, since they are a superposition of terms obtained by permutations of $\ket{1\ldots 1 0\ldots 0}$, %whereas not all permutations are allowed. 
	whereas only permutations from a specified subgroup $H$
	of the permutation group,
	$\mathcal{S}_N$ are taken. 
	We introduce specific families of quantum states related to highly symmetric objects, such as regular polygons, Platonic solids, and regular plane tilings. 
	
	We investigate the entanglement properties of introduced families of states. 
	Firstly, we provide the general separability conditions for excitation-states and present them in a twofold way: as specific conditions satisfied by a relevant graph or specific conditions imposed on the group of symmetry. 
%	In principle, constructed states are genuinely but not maximally entangled. 
	Secondly, we specifically quantify the entanglement presence in introduced families of states. 
	In order to compare the entanglement distribution, we compute the concurrence in bipartite systems $C_{vw}$ and the generalized concurrence $C_{v |\text{rest}}$ between particle $v$ and the rest of the system. 
	Analytic formula for these quantities are provided and illustrated on several examples. 
	For regular hypergraphs, with equal degrees of each node,
	the following general observation hold.
	
	\begin{enumerate}
		\item Concurrence $C_{v |\text{rest}}$
		describing entanglement between a given subsystem $v$ and the
		rest of the system  depends  only on the total number of nodes and uniformity of the graph
		and does not depend on its local structure.
		
		\item Concurrence $C_{vw}$ between nodes $v$ and $w$
		is positive only for distance-two nodes, and in states related to almost complete hypergraphs, i.e. $|E| \sim {{N}\choose{k}}$, also for distance-one nodes. 
		For all local networks, the second possibility does not occur. %The precise value for this phase transition for excitation-states build from graphs takes the value $d= 0.794 \;N $, where $d$ is a degree of a node.
		\item Concurrence $C_{vw}$ between nodes of distance two 
		is proportional to the number of shared neighbours. 
	\end{enumerate}
	
	Moreover, we introduce the notion of entanglement ratio $\Gamma_v$, which measures the ratio of entanglement shared between particular subsystems in a bipartite way in comparison to the amount of entanglement shared in the multi-partite way. 
	We investigate the asymptotic behavior of the parameter $\Gamma_v$ of an excitation-state
	$\ket{G}$ with the change of the local degree $d$ of the relevant hypergraph. 
	Two phase transitions with respect to the bipartite entanglement in 
	the excitation-state network are reported. 
	For small average degree $d$ of vertices, the entanglement ratio
	scales as $\Gamma_v \sim d/N$, 
	and the entanglement is present only between distance-two nodes. 
	Once the limit $d\simeq0.794 N$ is exceeded, the bipartite entanglement between distance-one nodes starts to appear. 
	Its amount is, however, negligible in comparison to the entanglement between distance-two vertices. 
	Entanglement between distance-one nodes prevails once the threshold $d \approx 0.973 N $ is exceeded.

	%We emphasise that the choice of network is related 
	%We do not claim the surpass of some states over the other
	%It is not a priori obvious what kind of entanglement is more desirable for a specific task.
	%Forasmuch presented construction is rather general, we rather focus attention on states relevant to graphs then hypergraphs. 
	
	%We pointed out that 
	%Forasmuch
	
	\section{Acknowledgement}
	
	We are thankful to Wojciech Bruzda,  Micha\l{}  Horodecki, Felix Huber
	and Gon\c{c}alo Quinta
	%and all other collaborators and colleagues 
	for valuable remarks and fruitful discussions and to John Martin for helpful correspondence.
	Financial support by the Polish National Science Center under the
	grant numbers DEC-2015/18/A/ST2/00274 and 2019/35/O/ST2/01049 
	and by Foundation for Polish Science under 
	the Team-Net NTQC project is gratefully acknowledged.

\bibliographystyle{unsrt}	
	\bibliography{Physics.bib}
	
\end{document}